\theoremstyle{plain}
\newtheorem{thm}{\protect\theoremname}
\theoremstyle{plain}
\newtheorem{lem}[thm]{\protect\lemmaname}
\theoremstyle{remark}
\newtheorem{rem}[thm]{\protect\remarkname}
\theoremstyle{plain}
\newtheorem{lyxalgorithm}[thm]{\protect\algorithmname}
\theoremstyle{plain}
\newtheorem{prop}[thm]{\protect\propositionname}
\DeclareMathOperator{\Tr}{Tr}
\DeclareMathOperator{\sech}{sech}
\numberwithin{equation}{section}
\providecommand{\algorithmname}{Algorithm}
\providecommand{\lemmaname}{Lemma}
\providecommand{\propositionname}{Proposition}
\providecommand{\remarkname}{Remark}
\providecommand{\theoremname}{Theorem}
\begin{document}
\title{Fundamentals of quantum Boltzmann machine learning with visible and
hidden units}
\author{Mark M. Wilde\\
\textit{School of Electrical and Computer Engineering, Cornell University,}\\
\textit{Ithaca, New York 14850, USA}}
\date{\today}
\maketitle
\begin{abstract}
One of the primary applications of classical Boltzmann machines is
generative modeling, wherein the goal is to tune the parameters of
a model distribution so that it closely approximates a target distribution.
Training relies on estimating the gradient of the relative entropy
between the target and model distributions, a task that is well understood
when the classical Boltzmann machine has both visible and hidden units.
For some years now, it has been an obstacle to generalize this finding
to quantum state learning with quantum Boltzmann machines that have
both visible and hidden units. In this paper, I derive an analytical
expression for the gradient of the quantum relative entropy between
a target quantum state and the reduced state of the visible units
of a quantum Boltzmann machine. Crucially, this expression is amenable
to estimation on a quantum computer, as it involves modular-flow-generated
unitary rotations reminiscent of those appearing in my prior work
on rotated Petz recovery maps. This leads to a quantum algorithm for
gradient estimation in this setting. I then specialize the setting
to quantum visible units and classical hidden units, and vice versa;
I also provide analytical expressions for the gradients, along with quantum
algorithms for estimating them. Finally, I replace the quantum relative
entropy objective function with the Petz--Tsallis relative entropy;
here I develop an analytical expression for the gradient and sketch
a quantum algorithm for estimating it, as an application of an independent
derivation of a formula for the derivative of the matrix power function,
which also involves modular-flow-generated unitary rotations. Ultimately,
this paper demarcates progress in training quantum Boltzmann machines
with visible and hidden units for generative modeling and quantum
state learning.
\end{abstract}
\tableofcontents{}

\section{Introduction}

\subsection{Background and motivation}

Boltzmann machines constitute one of the earliest neural-network based
approaches to generative modeling \cite{Ackley1985,Hinton1986}, in
which the goal is to simulate a target probability distribution $q(v)$
by means of a parameterized model distribution $p_{\theta}(v)$, where
$\theta$ is a parameter vector. The traditional approach to Boltzmann
machine learning is to employ a classical Hamiltonian as an energy
function, with some of the variables called visible and the others
called hidden (often referred to as visible and hidden units). The
probability $p_{\theta}(v,h)$ for a particular configuration of visible
and hidden variables to occur is then proportional to a Boltzmann
weight, establishing a strong link between Boltzmann machine learning
and statistical physics. The marginal distribution $p_{\theta}(v)$
on the visible variables should be tuned to closely approximate the
target probability distribution, while the role of the hidden variables
is to mediate complex correlations between the visible variables by
means of simple interactions between the visible and hidden variables.
This is similar to how the Hamiltonian of mean force represents complex
interactions between system variables in classical thermodynamics
\cite{Jarzynski2004,Talkner2020}, which are mediated by the interactions
between a thermodynamic system and an external reservoir.

The conventional approach to generative modeling with classical Boltzmann
machines is to tune $\theta$ so that the relative entropy $D(q\|p_{\theta})$
between the target distribution $q(v)$ and the model distribution
$p_{\theta}(v)$ is minimized. In order to do so, it is helpful to
have a formula for the elements of the gradient $\nabla_{\theta}D(q\|p_{\theta})$,
which is well known to be given by \cite{Ackley1985}
\begin{equation}
\frac{\partial}{\partial\theta_{j}}D(q\|p_{\theta})=\sum_{v,h}q(v)p_{\theta}(h|v)G_{j}(v,h)-\sum_{v,h}p_{\theta}(v,h)G_{j}(v,h),\label{eq:gradient-rel-ent-gen-mod}
\end{equation}
when the classical Hamiltonian has the form $G_{\theta}(v,h)\coloneqq\sum_{j}\theta_{j}G_{j}(v,h)$.
As such, one can estimate the elements of the gradient by means of
Monte Carlo sampling, i.e., sampling from $q(v)p_{\theta}(h|v)$ and
$p_{\theta}(v,h)$. In conjunction, one can employ a standard stochastic
gradient descent algorithm in order to minimize $D(q\|p_{\theta})$.
While this optimization problem is known to be non-convex in general,
in practice, a local minimum of this objective function often works
well. After tuning $\theta$ in this way, one can then sample from
the model distribution $p_{\theta}$ at will.

Roughly a decade ago now, quantum Boltzmann machines were proposed
for generative modeling \cite{Amin2018,Kieferova2017,Benedetti2017},
with the idea being that distributions arising from measuring quantum
states might be able to capture more complex correlations in target
probability distributions than would be possible with classical Boltzmann
machines alone. Beyond this application, quantum Boltzmann machines
can alternatively be used for the non-classical task of quantum state
learning \cite{Kieferova2017}, in which the goal is to model a target
quantum state $\rho$ by means of a parameterized state $\sigma(\theta)$.
It was realized in \cite{Amin2018,Kieferova2017,Benedetti2017} that
there are difficulties in arriving at simple expressions like that
in \eqref{eq:gradient-rel-ent-gen-mod} and procedures for estimating
it when trying to develop quantum extensions of generative modeling,
while using both visible and hidden variables. A notable exception
occurs for quantum state learning when using quantum Boltzmann machines
with visible units only: indeed, the authors of \cite{Kieferova2017}
established a simple expression for the gradient (see \cite[Eq.~(4)]{Kieferova2017}),
while the authors of \cite{Coopmans2024} proved that the corresponding
minimization problem is convex in the parameter vector $\theta$.
Thus, for this particular problem, a hybrid quantum--classical algorithm,
which uses a quantum computer for estimating the gradient and stochastic
gradient descent for optimization, is guaranteed to converge to a
global minimum with efficient sample complexity \cite[Theorem~1]{Coopmans2024}.

What has remained open since these prior contributions is to determine
an analytical expression for the gradient that is amenable to estimation
on a quantum computer when using quantum Boltzmann machines, with
both visible and hidden units, for generative modeling and quantum
state learning. Under the assumption that the visible units of the
quantum Boltzmann machine are classical while the hidden units are
quantum, there has been recent progress on developing algorithms for
optimizing them for generative modeling \cite{Patel2025,Demidik2025,Kimura2025,Vishnu2025,Demidik2025a,Wilde2025a}.
However, these works do not address the aforementioned open question
for the task of quantum state learning with both visible and hidden
units, for which a quantum Boltzmann machine with quantum visible
units is required.

\subsection{Summary of contributions}

In this paper, I address this open question by determining an analytical
expression for the gradient in quantum state learning, which is amenable
to estimation on a quantum computer. In more detail, suppose that
$\rho$ is a target quantum state and $\sigma(\theta)$ is a model
quantum state, based on a quantum Boltzmann machine with visible and
hidden units, and the goal is to minimize the quantum relative entropy
$D(\rho\|\sigma(\theta))$. Here I establish an analytical expression
for the gradient $\nabla_{\theta}D(\rho\|\sigma(\theta))$, which
can be viewed as a quantum generalization of \eqref{eq:gradient-rel-ent-gen-mod},
incorporating aspects to account for the non-commutativity inherent
in quantum mechanics (see Theorem~\ref{thm:q-state-learning-gradient-vh-q}).
Furthermore, I develop a quantum algorithm for estimating the gradient
when given sample access to $\rho$ and $\sigma(\theta)$, implying
that one can find a local minimum of $D(\rho\|\sigma(\theta))$ by
means of a hybrid quantum--classical algorithm that uses a quantum
computer for estimating the gradient and a classical computer for
performing the stochastic gradient descent optimization algorithm.
A subroutine of this quantum algorithm involves simulating what is
called modular flow, and by invoking \cite[Lemma~3]{Qiu2025} and
\cite[Corollary~60 and Lemma~61]{Gilyen2019}, I establish an improvement
of the recent quantum algorithm from \cite{Lim2025} for simulating
modular flow.

The main distinction between the quantum expression reported in Theorem~\ref{thm:q-state-learning-gradient-vh-q}
and the classical one in \eqref{eq:gradient-rel-ent-gen-mod} is that
the conditional probability distribution $p_{\theta}(h|v)$ in \eqref{eq:gradient-rel-ent-gen-mod}
is replaced with a Hermiticity-preserving, trace-preserving map $\Sigma_{v\to vh}^{\theta}$
that is reminiscent of, yet different from, the rotated Petz recovery
map of \cite{Wilde2015}, the twirled Petz recovery map of \cite{Junge2018},
and the state-over-time of \cite{Fullwood2022} (see also \cite{Lie2024}).
In the case that both the visible and hidden units are classical,
the expression reduces to the classical expression in \eqref{eq:gradient-rel-ent-gen-mod}.

I also consider the case of quantum visible units and classical hidden
units, a case of interest and possibly easier to implement in practice
than fully quantum Boltzmann machines. For this scenario, I report
an analytical expression for the gradient (see Theorem~\ref{thm:qc-vh-gradient}),
along with a quantum algorithm for estimating it. The main distinction
between this quantum--classical case and the fully classical case
is that the conditional probability distribution $p_{\theta}(h|v)$
in \eqref{eq:gradient-rel-ent-gen-mod} is replaced with the Hermiticity-preserving,
trace-preserving map $\sum_{x}p_{x}(\theta)\Sigma_{v}^{\theta,x}\otimes|x\rangle\!\langle x|_{h}$,
as defined in Theorem~\ref{thm:qc-vh-gradient}. This map is reminiscent
of, yet different from, the rotated pretty good instrument introduced
in \cite[Remark~5.7]{Wilde2015}.

Finally, I extend all of the results to apply to the case when the
objective function is the Petz--Tsallis relative entropy of order
$q\in\left(0,1\right)\cup\left(1,2\right]$, instead of the quantum
relative entropy. Doing so involves developing an independent derivation
of a formula for the matrix derivative of the power function, as given
in \eqref{eq:deriv-matrix-power-fourier} of Lemma \ref{lem:derivative-matrix-power}
below. This formula could find alternative applications in quantum
information theory, and it is connected to the probability density
functions that appeared in \cite[Lemma~3.2]{Junge2018}, the latter
building on Hirschman's improvement of Hadamard's three-line theorem
\cite{Hirschman1952}. 

\subsection{Paper organization}

The rest of this paper is organized as follows:
\begin{itemize}
\item Section~\ref{sec:Preliminaries} provides some preliminary material
used throughout the paper. This includes a brief subsection on notation
(Section~\ref{subsec:Notation}) and a review of classical Boltzmann
machines and the derivation of the gradient of the relative entropy
(Section~\ref{subsec:Review-of-classical-BMs}). Also, Section~\ref{subsec:Formulas-for-matrix-derivs}
presents various formulas for matrix derivatives, including an independent
derivation of a formula for the matrix derivative of the power function
(see \eqref{eq:deriv-matrix-power-fourier} in Lemma \ref{lem:derivative-matrix-power}).
\item Section~\ref{sec:Quantum-state-learning-QBMs} contains some of the
main aforementioned results, including an analytical expression for
the gradient of the quantum relative entropy between a target state
$\rho$ and the reduced state $\sigma_{v}(\theta)$ of the visible
units of a quantum Boltzmann machine (Theorem~\ref{thm:q-state-learning-gradient-vh-q}),
as well as a quantum algorithm for estimating it (Section~\ref{subsec:Q-algorithm-state-learning-QBM}).
As noted above, a subroutine of this quantum algorithm simulates modular
flow, and I establish an improvement of this subroutine in Appendix~\ref{subsec:Block-encoding-for-modular-flow},
by invoking \cite[Lemma~3]{Qiu2025} and \cite[Corollary~60 and Lemma~61]{Gilyen2019}.
I also evaluate the gradient expression for a model called restricted
quantum Boltzmann machines (Section~\ref{subsec:Application-to-restricted-QBMs}),
and then I show how it reduces to the known expression for the gradient
\cite[Eq.~(4)]{Kieferova2017} when there are no hidden units (Section~\ref{subsec:Consistency-check-no-HUs}).
\item Section~\ref{sec:Quantum-state-learning-qc-QBMs} specializes the
results of Section~\ref{sec:Quantum-state-learning-QBMs} to the
case when the visible units are quantum and the hidden units are classical.
In particular, Section~\ref{subsec:Analytical-formula-for-gradient-qc-QBMs}
provides a formula for the gradient of the quantum relative entropy
in this case, Section~\ref{subsec:Quantum-algorithm-qc-BM} details
a quantum algorithm for estimating the gradient, and Section~\ref{subsec:Application-to-restricted-qc-QBMs}
evaluates the formula for restricted quantum--classical Boltzmann
machines.
\item Section~\ref{sec:Generative-modeling-using-cq-BMs} specializes the
results of Section~\ref{sec:Quantum-state-learning-QBMs} to the
case when the visible units are classical and the hidden units are
quantum. The findings here are related to the recent findings of \cite{Demidik2025},
but it is instructive to see how this reduction follows from the fully
quantum case presented in Section~\ref{sec:Quantum-state-learning-QBMs}.
\item The final technical section of this paper is Section~\ref{sec:Extensions-to-Petz=002013Tsallis},
in which I show how to generalize the findings in the whole paper
to the case when the objective function is the Petz--Tsallis relative
entropy with parameter $q\in\left(0,1\right)\cup\left(1,2\right]$.
To derive the results here, I provide an independent derivation of
a formula for the derivative of the matrix power function (see \eqref{eq:deriv-matrix-power-fourier}
of Lemma~\ref{lem:derivative-matrix-power}, with its proof in Appendix~\ref{app:Proof-of-Equation-matrix-power}).
\item In the conclusion (Section~\ref{sec:Conclusion}), I provide a brief
summary of the main findings of this paper, followed by suggestions
for future research.
\end{itemize}

\section{Preliminaries}

\label{sec:Preliminaries}

\subsection{Notation}

\label{subsec:Notation}Here I briefly establish some notation used
in the rest of the paper. For $d\in\mathbb{N}$, let
\begin{equation}
\left[d\right]\equiv\left\{ 1,\ldots,d\right\} .
\end{equation}
For operators $A$ and $B$, the anticommutator is defined as $\left\{ A,B\right\} \coloneqq AB+BA$.
I also use the following standard physics notation:
\begin{equation}
\left\langle A\right\rangle _{\rho}\equiv\Tr[A\rho].
\end{equation}
The spectral norm of an operator $A$ is denoted by $\left\Vert A\right\Vert \coloneqq\sup_{|\psi\rangle\neq0}\frac{\left\Vert A|\psi\rangle\right\Vert }{\left\Vert |\psi\rangle\right\Vert }$,
and the trace norm is denoted by $\left\Vert A\right\Vert _{1}\coloneqq\Tr[\sqrt{A^{\dag}A}]$.

\subsection{Review of classical Boltzmann machines}

\label{subsec:Review-of-classical-BMs}In this background section,
I review basic proofs in the theory of generative modeling for classical
Boltzmann machines. Although these proofs are well known by now \cite{Ackley1985},
they serve as a starting point for the more involved derivations for
quantum state learning using quantum Boltzmann machines.

Let $d_{v},d_{h}\in\mathbb{N}$, $v\in\left[d_{v}\right]$, $h\in\left[d_{h}\right]$,
and $J\in\mathbb{N}$. The variable $v$ is called visible, and the
variable $h$ is called hidden. For each $j\in\left[J\right]$, let
$G_{j}\colon\left[d_{v}\right]\times\left[d_{h}\right]\to\mathbb{R}$
denote a classical Hamiltonian for a discrete system of dimension
$d_{v}\times d{}_{h}$, also called an energy function. For a parameter
vector $\theta\coloneqq\left(\theta_{1},\ldots,\theta_{J}\right)\in\mathbb{R}^{J}$,
let
\begin{equation}
G_{\theta}(v,h)\coloneqq\sum_{j=1}^{J}\theta_{j}G_{j}(v,h)\label{eq:Hamiltonian-classical-BM}
\end{equation}
denote a parameterized Hamiltonian. A classical Boltzmann machine
corresponds to the following thermal probability distribution over
$G_{\theta}(v,h)$:
\begin{align}
p_{\theta}(v,h) & \coloneqq\frac{e^{-G_{\theta}(v,h)}}{Z_{\theta}},\\
Z_{\theta} & \coloneqq\sum_{v,h}e^{-G_{\theta}(v,h)}.
\end{align}
The marginal probability distribution on the visible units is as follows:
\begin{equation}
p_{\theta}(v)\coloneqq\sum_{h}p_{\theta}(v,h).
\end{equation}

Let $q(v)$ denote a target probability distribution over $v\in\left[d_{v}\right]$.
The goal of generative modeling is to take samples from $q(v)$ and
try to tune the parameter vector $\theta$ such that $p_{\theta}(v)$
closely approximates $q(v)$. The standard measure of closeness to
consider in this context is the classical relative entropy \cite{Kullback1951}:
\begin{equation}
D(q\|p_{\theta})\coloneqq\sum_{v}q(v)\ln\!\left(\frac{q(v)}{p_{\theta}(v)}\right),\label{eq:classical-BM-objective}
\end{equation}
and the goal in generative modeling is to minimize $D(q\|p_{\theta})$
with respect to the parameter vector $\theta$.

Training relies on calculating the gradient of \eqref{eq:classical-BM-objective}.
Each element of the gradient, for $j\in\left[J\right]$, is well known
to be as follows \cite{Ackley1985}:
\begin{equation}
\frac{\partial}{\partial\theta_{j}}D(q(v)\|p_{\theta}(v))=\sum_{v,h}q(v)p_{\theta}(h|v)G_{j}(v,h)-\sum_{v,h}p_{\theta}(v,h)G_{j}(v,h).\label{eq:classical-gradient}
\end{equation}
The utility of this formula is that one can estimate each element
of the gradient in an unbiased way by means of Monte Carlo sampling.
The first term in \eqref{eq:classical-gradient} can be estimated
by repeatedly taking a sample $\left(v,h\right)$ from $q(v)p_{\theta}(h|v)$,
calculating $G_{j}(v,h)$, and then, after all of the samples have
been collected, calculating the sample mean. The second term in \eqref{eq:classical-gradient}
can be estimated in a similar way by instead sampling from $p_{\theta}(v,h)$.

It is worthwhile to revisit the derivation of \eqref{eq:classical-gradient}
in detail. Given that one of the main contributions of this paper
is to derive nontrivial quantum generalizations of it, comparing the
derivations can be insightful. With this in mind, consider that
\begin{align}
 & \frac{\partial}{\partial\theta_{j}}D(q(v)\|p_{\theta}(v))\nonumber \\
 & =\frac{\partial}{\partial\theta_{j}}\left(\sum_{v}q(v)\ln q(v)-\sum_{v}q(v)\ln p_{\theta}(v)\right)\\
 & =-\sum_{v}q(v)\frac{\partial}{\partial\theta_{j}}\ln p_{\theta}(v)\\
 & =-\sum_{v}\frac{q(v)}{p_{\theta}(v)}\frac{\partial}{\partial\theta_{j}}p_{\theta}(v)\\
 & =-\sum_{v}\frac{q(v)}{p_{\theta}(v)}\frac{\partial}{\partial\theta_{j}}\left(\sum_{h}p_{\theta}(v,h)\right)\\
 & =-\sum_{v}\frac{q(v)}{p_{\theta}(v)}\sum_{h}\frac{\partial}{\partial\theta_{j}}\left(\frac{e^{-G_{\theta}(v,h)}}{Z_{\theta}}\right)\\
 & =-\sum_{v}\frac{q(v)}{p_{\theta}(v)}\sum_{h}\left[-\frac{e^{-G_{\theta}(v,h)}}{Z_{\theta}}\left(\frac{\partial}{\partial\theta_{j}}G_{\theta}(v,h)\right)-\frac{e^{-G_{\theta}(v,h)}}{Z_{\theta}^{2}}\frac{\partial}{\partial\theta_{j}}Z_{\theta}\right]\\
 & =\sum_{v}\frac{q(v)}{p_{\theta}(v)}\sum_{h}\left[p_{\theta}(v,h)G_{j}(v,h)-p_{\theta}(v,h)\frac{1}{Z_{\theta}}\sum_{v',h'}e^{-G_{\theta}(v',h')}G_{j}(v',h')\right]\\
 & =\sum_{v}\frac{q(v)}{p_{\theta}(v)}\sum_{h}\left[p_{\theta}(v,h)G_{j}(v,h)-p_{\theta}(v,h)\sum_{v',h'}p_{\theta}(v',h')G_{j}(v',h')\right]\\
 & =\sum_{v,h}q(v)\left[\frac{p_{\theta}(v,h)}{p_{\theta}(v)}G_{j}(v,h)-\frac{p_{\theta}(v,h)}{p_{\theta}(v)}\sum_{v',h'}p_{\theta}(v',h')G_{j}(v',h')\right]\\
 & =\sum_{v,h}q(v)p_{\theta}(h|v)G_{j}(v,h)-\sum_{v,h}q(v)\frac{p_{\theta}(v,h)}{p_{\theta}(v)}\sum_{v',h'}p_{\theta}(v',h')G_{j}(v',h')\\
 & =\sum_{v,h}q(v)p_{\theta}(h|v)G_{j}(v,h)-\sum_{v',h'}p_{\theta}(v',h')G_{j}(v',h'),
\end{align}
thus completing the proof of \eqref{eq:classical-gradient} after
substituting $\left(v',h'\right)\to\left(v,h\right)$.

A particular form for a classical Boltzmann machine, which has found
extensive use in applications, is known as a restricted Boltzmann
machine \cite{Hinton2002}, in which the form of the Hamiltonian in
\eqref{eq:Hamiltonian-classical-BM} is more restricted. Let $m,n\in\mathbb{N}$.
Let $\theta\equiv\left(a,b,w\right)$, where $a\in\mathbb{R}^{m}$,
$b\in\mathbb{R}^{n}$, and $w\in\mathbb{R}^{m\times n}$. Now suppose
that $v\in\left[d_{v}\right]^{m}$ and $h\in\left[d_{h}\right]^{n}$.
Then the Hamiltonian $G(\theta)$ for a restricted Boltzmann machine
is defined as
\begin{equation}
G_{\theta}(v,h)\coloneqq a^{T}v+b^{T}h+v^{T}wh.\label{eq:rQBM-def-1}
\end{equation}
Furthermore, the elements of the gradient of $D(q(v)\|p_{\theta}(v))$
are as follows:
\begin{align}
\nabla_{a}D(q(v)\|p_{\theta}(v)) & =\sum_{v,h}q(v)p_{\theta}(h|v)v-\sum_{v,h}p_{\theta}(v,h)v\label{eq:rBMs-gradient-1}\\
 & =\sum_{v}q(v)v-\sum_{v}p_{\theta}(v)v,\\
\nabla_{b}D(q(v)\|p_{\theta}(v)) & =\sum_{v,h}q(v)p_{\theta}(h|v)h-\sum_{v,h}p_{\theta}(v,h)h\\
 & =\sum_{h}\left[\sum_{v}q(v)p_{\theta}(h|v)\right]h-\sum_{h}p_{\theta}(h)h,\\
\nabla_{w}D(q(v)\|p_{\theta}(v)) & =\sum_{v,h}q(v)p_{\theta}(h|v)vh^{T}-\sum_{v,h}p_{\theta}(v,h)vh^{T}.\label{eq:rBMs-gradient-last}
\end{align}
In Sections \ref{subsec:Application-to-restricted-QBMs}, \ref{subsec:Application-to-restricted-qc-QBMs},
and \ref{subsec:Application-to-restricted-cq-BMs}, I generalize these
formulas to various quantum settings, including the fully quantum
(Section~\ref{subsec:Application-to-restricted-QBMs}), quantum--classical
(Section~\ref{subsec:Application-to-restricted-qc-QBMs}), and classical--quantum
(Section~\ref{subsec:Application-to-restricted-cq-BMs}) settings.

\subsection{Formulas for matrix derivatives}

\label{subsec:Formulas-for-matrix-derivs}

In order to derive analytical expressions for gradients in various
settings considered in this paper, I make use of formulas for matrix
derivatives. To begin with, let us recall the following formulas for
the derivative of the matrix exponential:
\begin{lem}[Derivative of matrix exponential]
\label{lem:derivative-matrix-exp}For $x\mapsto B(x)$ a Hermitian
operator-valued function,
\begin{align}
\frac{\partial}{\partial x}e^{B(x)} & =\int_{0}^{1}dt\,e^{tB(x)}\left(\frac{\partial}{\partial x}B(x)\right)e^{\left(1-t\right)B(x)},\label{eq:duhamel-form}\\
 & =\frac{1}{2}\left\{ \Phi_{B(x)}\!\left(\frac{\partial}{\partial x}B(x)\right),e^{B(x)}\right\} ,\label{eq:duhamel-fourier}
\end{align}
where the quantum channel $\Phi_{B(x)}$ and the high-peak tent probability
density $\gamma(t)$ are defined as
\begin{align}
\Phi_{B(x)}(Y) & \coloneqq\int_{-\infty}^{\infty}dt\,\gamma(t)\:e^{-iB(x)t}Ye^{iB(x)t},\\
\gamma(t) & \coloneqq\frac{2}{\pi}\ln\left|\coth\!\left(\frac{\pi t}{2}\right)\right|.\label{eq:high-peak-tent-def}
\end{align}
\end{lem}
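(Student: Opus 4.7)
The plan is to establish (\ref{eq:duhamel-form}) by a standard ODE argument, and then to derive (\ref{eq:duhamel-fourier}) from it by diagonalizing $B(x)$ and reducing matters to a single scalar Fourier-transform identity for $\gamma$.

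For (\ref{eq:duhamel-form}), I would introduce the auxiliary function $f(s) \coloneqq e^{-sB(x)}\!\left(\frac{\partial}{\partial x}e^{sB(x)}\right)$ for $s \in [0,1]$. Differentiating in $s$ and using that $B(x)$ commutes with $e^{sB(x)}$, the two cross terms involving $B(x)$ cancel, leaving $f'(s) = e^{-sB(x)}\!\left(\frac{\partial}{\partial x}B(x)\right)e^{sB(x)}$. Since $f(0) = 0$, integrating from $0$ to $1$ and left-multiplying by $e^{B(x)}$ gives $\frac{\partial}{\partial x}e^{B(x)} = \int_0^1 ds\, e^{(1-s)B(x)}\!\left(\frac{\partial}{\partial x}B(x)\right)e^{sB(x)}$, and the substitution $t = 1-s$ produces (\ref{eq:duhamel-form}).

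For (\ref{eq:duhamel-fourier}), I would write $B(x) = \sum_i b_i |i\rangle\!\langle i|$ and compare matrix elements on both sides in this basis. The right-hand side of (\ref{eq:duhamel-form}) has matrix elements $(\partial_x B)_{ij}\cdot \frac{e^{b_i}-e^{b_j}}{b_i-b_j}$ (interpreted as $e^{b_i}$ by continuity when $b_i = b_j$), while $\tfrac{1}{2}\{\Phi_{B(x)}(\partial_x B),e^{B(x)}\}$ has matrix elements $\tfrac{1}{2}(\partial_x B)_{ij}(e^{b_i}+e^{b_j})\hat{\gamma}(b_i - b_j)$, where $\hat{\gamma}(a) \coloneqq \int_{-\infty}^{\infty}\gamma(t)e^{-iat}\,dt$ is the Fourier transform of $\gamma$. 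Setting $a = b_i - b_j$ and $\mu = (b_i+b_j)/2$, one has $e^{b_i} - e^{b_j} = 2e^{\mu}\sinh(a/2)$ and $e^{b_i} + e^{b_j} = 2e^{\mu}\cosh(a/2)$, so the two sides match for every pair $(i,j)$ if and only if the scalar identity $\hat{\gamma}(a) = 2\tanh(a/2)/a$ holds.

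The main obstacle is therefore verifying this Fourier-transform identity. My plan is to expand $\ln\coth(\pi|t|/2) = 2\sum_{m=0}^{\infty}\frac{e^{-\pi(2m+1)|t|}}{2m+1}$ via the Taylor series of $\ln(1 \pm e^{-u})$, interchange sum and integral (justified by monotone convergence since each term is nonnegative, with the logarithmic singularity of $\gamma$ at $t=0$ being integrable), and apply the elementary formula $\int_{0}^{\infty}e^{-pt}\cos(at)\,dt = p/(p^2+a^2)$ to obtain $\hat\gamma(a) = 8\sum_{m=0}^{\infty}\frac{1}{\pi^2(2m+1)^2 + a^2}$. This series coincides with the Mittag--Leffler partial-fraction expansion of $\tanh(a/2)$, completing the proof.
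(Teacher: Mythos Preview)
Your proof is correct. The paper does not actually prove this lemma in the text---its ``proof'' consists entirely of citations to external references for both \eqref{eq:duhamel-form} and \eqref{eq:duhamel-fourier}---so your argument is strictly more self-contained than what appears there. Your ODE derivation of Duhamel's formula is the standard one, and your route to \eqref{eq:duhamel-fourier} (diagonalize $B(x)$, reduce to the scalar identity $\widehat\gamma(a)=2\tanh(a/2)/a$, then verify that identity via the series expansion $\ln\coth(\pi t/2)=2\sum_{m\geq 0}e^{-\pi(2m+1)t}/(2m+1)$ together with the Mittag--Leffler partial-fraction expansion of $\tanh$) is clean and elementary. One small technical remark: monotone convergence as stated only justifies the sum--integral interchange at $a=0$, since the factor $\cos(at)$ is not nonnegative; for general $a$ you should instead invoke Fubini, which applies because $\sum_{m\geq 0}\int_0^\infty e^{-\pi(2m+1)t}/(2m+1)\,dt<\infty$. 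As a bonus, evaluating your scalar identity at $a=0$ gives $\widehat\gamma(0)=1$, confirming that $\gamma$ is a probability density---another fact the paper delegates to a citation.
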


\begin{proof}
Both equalities in \eqref{eq:duhamel-form} and \eqref{eq:duhamel-fourier}
are known, the first being known as Duhamel's formula. See, e.g.,
\cite[Proposition~47]{Wilde2025} for a proof of \eqref{eq:duhamel-form},
and see, e.g., \cite[Lemmas~10 and 12]{Patel2025} for a proof of
\eqref{eq:duhamel-fourier}, which includes a proof of the fact that
$\gamma(t)$ is a probability density function. See also \cite{Hastings2007,Kim2012,Ejima2019,Kato2019,Anshu2021}
for various developments related to \eqref{eq:duhamel-fourier}.
\end{proof}
\begin{rem}[Derivative of a thermal state]
Lemma \ref{lem:derivative-matrix-exp} implies the following formula
for the derivative of a thermal state $\sigma(x)$:
\begin{align}
\frac{\partial}{\partial x}\sigma(x) & =-\frac{1}{2}\left\{ \Phi_{B(x)}\!\left(\frac{\partial}{\partial x}B(x)\right),\sigma(x)\right\} +\sigma(x)\left\langle \frac{\partial}{\partial x}B(x)\right\rangle ,\label{eq:deriv-thermal-state}
\end{align}
where
\begin{align}
\sigma(x) & \coloneqq\frac{e^{-B(x)}}{Z(x)},\\
Z(x) & \coloneqq\Tr\!\left[e^{-B(x)}\right].
\end{align}
This formula is essential to several recent theoretical developments
regarding the training of quantum Boltzmann machines \cite{Patel2025,Patel2025a,Minervini2025,Wilde2025}.
\end{rem}

Let us now recall the following formulas for the derivative of the
matrix logarithm:
\begin{lem}[Derivative of matrix logarithm]
\label{lem:derivative-matrix-log}For $x\mapsto A(x)$ a positive
definite operator-valued function,
\begin{align}
\frac{\partial}{\partial x}\ln A(x) & =\int_{0}^{\infty}ds\,\left(A(x)+sI\right)^{-1}\left(\frac{\partial}{\partial x}A(x)\right)\left(A(x)+sI\right)^{-1}\label{eq:deriv-matrix-log-standard}\\
 & =A(x)^{-\frac{1}{2}}\Upsilon_{A(x)}\!\left(\frac{\partial}{\partial x}A(x)\right)A(x)^{-\frac{1}{2}},\label{eq:deriv-matrix-log-fourier}
\end{align}
where the quantum channel $\Upsilon_{A(x)}$ and the logistic probability
density function $\beta(t)$ with scale parameter $\frac{1}{\pi}$
are defined as
\begin{align}
\Upsilon_{A(x)}(Y) & \coloneqq\int_{-\infty}^{\infty}dt\,\beta(t)\,A(x){}^{-\frac{it}{2}}YA(x){}^{\frac{it}{2}},\\
\beta(t) & \coloneqq\frac{\pi}{2\left(\cosh\!\left(\pi t\right)+1\right)}=\frac{\pi}{4}\sech^{2}\!\left(\frac{\pi t}{2}\right).\label{eq:logistic-prob-dens}
\end{align}
\end{lem}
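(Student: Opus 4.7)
The plan is to prove the two equalities in turn. For the standard integral formula \eqref{eq:deriv-matrix-log-standard}, I would begin from the resolvent representation
$$
\ln A = \int_0^\infty ds\,\left(\frac{1}{1+s}I - (A+sI)^{-1}\right),
$$
valid for any positive definite $A$ (as verified by diagonalization and a direct scalar integral). Differentiating under the integral sign in $x$ annihilates the $x$-independent term $\frac{1}{1+s}I$, and applying the elementary identity $\partial_x M(x)^{-1} = -M(x)^{-1}(\partial_x M(x))M(x)^{-1}$ to $M(x) = A(x)+sI$, for which $\partial_x M(x) = \partial_x A(x)$, produces \eqref{eq:deriv-matrix-log-standard} immediately. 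Positive definiteness of $A(x)$ ensures a positive spectral gap, which makes the integrand $O(s^{-2})$ at infinity and justifies the interchange of $\partial_x$ and $\int ds$.

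For the Fourier form \eqref{eq:deriv-matrix-log-fourier}, I would work in the spectral basis $\{|\phi_j\rangle\}$ of $A(x) = \sum_j a_j|\phi_j\rangle\!\langle \phi_j|$, suppressing the $x$-dependence. Taking matrix elements of the already-proven formula \eqref{eq:deriv-matrix-log-standard} and using the partial-fraction decomposition of $[(a_j+s)(a_k+s)]^{-1}$ yields
$$
\langle \phi_j|\partial_x \ln A|\phi_k\rangle = \frac{\ln a_j - \ln a_k}{a_j - a_k}\,\langle \phi_j|\partial_x A|\phi_k\rangle,
$$
where the prefactor is interpreted as $1/a_j$ when $a_j = a_k$. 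The same matrix element of the right-hand side of \eqref{eq:deriv-matrix-log-fourier} evaluates to $\tfrac{1}{\sqrt{a_j a_k}}\,\langle \phi_j|\partial_x A|\phi_k\rangle \int_{-\infty}^{\infty} \beta(t)\,e^{-itw}\,dt$, where $w \coloneqq (\ln a_j - \ln a_k)/2$. Using the factorization $a_j - a_k = 2\sqrt{a_j a_k}\sinh(w)$, the identification of the two sides therefore reduces to the single scalar Fourier identity
$$
\int_{-\infty}^{\infty} \beta(t)\,e^{-itw}\,dt = \frac{w}{\sinh w}.
$$

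The main obstacle is verifying this scalar identity. I would establish it by contour integration: with $\beta(t) = \tfrac{\pi}{4}\sech^2(\pi t/2)$, the integrand is meromorphic with double poles at $t = i(2n+1)$ for $n\in\mathbb{Z}$; closing in the upper or lower half-plane according to the sign of $w$ and summing the resulting residues yields an alternating geometric-type series that sums in closed form to $w/\sinh w$. Alternatively, one can differentiate the classical identity $\int \sech(\pi t)\,e^{-itw}\,dt = \sech(w/2)$ in a suitable parameter, or invoke it as a standard Fourier-transform table entry. Once the scalar identity is in hand, combining it with the spectral calculation above yields \eqref{eq:deriv-matrix-log-fourier}; setting $w=0$ additionally recovers $\int\beta(t)\,dt = 1$, confirming that $\beta$ is a probability density as claimed.
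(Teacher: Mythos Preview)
Your proposal is correct and follows essentially the same architecture as the paper: the first equality via the resolvent representation of the logarithm, and the second by passing to the spectral basis so that everything reduces to the scalar divided-difference identity
\[
\frac{\ln x-\ln y}{x-y}=(xy)^{-1/2}\int_{-\infty}^{\infty}dt\,\beta(t)\left(\frac{x}{y}\right)^{-it/2},
\]
which is exactly your $\int\beta(t)e^{-itw}\,dt=w/\sinh w$ after the substitution $w=\tfrac12\ln(x/y)$.

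The only genuine difference is how this scalar Fourier identity is justified. The paper (see Remark~\ref{rem:div-diff-log-proof}) obtains it as the $r\to0$ limit of the corresponding identity for the power function, whose proof in turn is the contour-integration computation of Lemma~\ref{lem:contour-int-fourier-trans}. You instead attack the $\sech^2$ transform directly by contour integration (or by recognizing it as the characteristic function of the logistic distribution with scale $1/\pi$). Your route is more self-contained for this lemma in isolation; the paper's route has the advantage that a single contour computation simultaneously handles the entire family $\beta_r$ needed later for the Petz--Tsallis extension, with the logarithm case falling out as a limit.
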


\begin{proof}
The equality in \eqref{eq:deriv-matrix-log-standard} is well known.
See, e.g., \cite[Proposition~48]{Wilde2025} for a proof of the equality
in \eqref{eq:deriv-matrix-log-standard}, and see \cite[Lemma~3.4]{Sutter2017}
and Remark~\ref{rem:div-diff-log-proof} for a proof of \eqref{eq:deriv-matrix-log-fourier}.
\end{proof}
Finally, let us state the following formulas for the derivative of
a matrix power, the second of which was independently derived in \cite{Beigi2025}: 
\begin{lem}[Derivative of matrix power]
\label{lem:derivative-matrix-power}For $x\mapsto A(x)$ a positive
definite operator-valued function and $r\in\left(-1,0\right)\cup\left(0,1\right)$,
\begin{align}
\frac{\partial}{\partial x}A(x)^{r} & =\frac{\sin(\pi r)}{\pi}\int_{0}^{\infty}ds\,s^{r}\left(A(x)+sI\right)^{-1}\left(\frac{\partial}{\partial x}A(x)\right)\left(A(x)+sI\right)^{-1},\label{eq:deriv-matrix-power-standard}\\
 & =rA(x)^{\frac{r-1}{2}}\Upsilon_{A(x)}^{r}\!\left(\frac{\partial}{\partial x}A(x)\right)A(x)^{\frac{r-1}{2}},\label{eq:deriv-matrix-power-fourier}
\end{align}
where the quantum channel $\Upsilon_{A(x)}^{r}$ and the probability
density function $\beta_{r}(t)$ are defined as
\begin{align}
\Upsilon_{A(x)}^{r}(Y) & \coloneqq\int_{-\infty}^{\infty}dt\,\beta_{r}(t)\,A(x){}^{-\frac{it}{2}}YA(x){}^{\frac{it}{2}},\\
\beta_{r}(t) & \coloneqq\frac{\sin(\pi r)}{2r\left(\cosh\!\left(\pi t\right)+\cos(\pi r)\right)}.\label{eq:logistic-prob-dens-1}
\end{align}
\end{lem}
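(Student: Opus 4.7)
The plan is to treat the two identities separately, viewing \eqref{eq:deriv-matrix-power-standard} as a short warm-up and concentrating the real work on the novel Fourier representation \eqref{eq:deriv-matrix-power-fourier}.

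For \eqref{eq:deriv-matrix-power-standard}, I would start from the classical integral representation
\begin{equation}
A(x)^{r}=\frac{\sin(\pi r)}{\pi}\int_{0}^{\infty}ds\,s^{r-1}A(x)\left(A(x)+sI\right)^{-1}\qquad(r\in(0,1)),
\end{equation}
which follows from the scalar identity $\lambda^{r}=\tfrac{\sin(\pi r)}{\pi}\int_{0}^{\infty}s^{r-1}\lambda/(\lambda+s)\,ds$ and the functional calculus, with an analogous representation $-\tfrac{\sin(\pi r)}{\pi}\int_{0}^{\infty}s^{r}(A(x)+sI)^{-1}\,ds$ covering $r\in(-1,0)$. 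Differentiating under the integral sign and applying the resolvent identity $\partial_{x}(A(x)+sI)^{-1}=-(A(x)+sI)^{-1}(\partial_{x}A(x))(A(x)+sI)^{-1}$ together with the decomposition $A(A+sI)^{-1}=I-s(A+sI)^{-1}$, one obtains $\partial_{x}[A(A+sI)^{-1}]=s(A+sI)^{-1}(\partial_{x}A)(A+sI)^{-1}$. The extra factor of $s$ absorbs into $s^{r-1}\cdot s=s^{r}$, yielding \eqref{eq:deriv-matrix-power-standard}.

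The main effort goes into \eqref{eq:deriv-matrix-power-fourier}. I would diagonalize $A(x)=\sum_{i}\lambda_{i}|i\rangle\!\langle i|$ and appeal to the Daleckii--Krein formula
\begin{equation}
\partial_{x}A(x)^{r}=\sum_{i,j}\frac{\lambda_{i}^{r}-\lambda_{j}^{r}}{\lambda_{i}-\lambda_{j}}\,\langle i|\partial_{x}A(x)|j\rangle\,|i\rangle\!\langle j|,
\end{equation}
with the divided difference replaced by $r\lambda_{i}^{r-1}$ when $\lambda_{i}=\lambda_{j}$. The claimed right-hand side of \eqref{eq:deriv-matrix-power-fourier} has matrix elements $r(\lambda_{i}\lambda_{j})^{(r-1)/2}\langle i|\partial_{x}A(x)|j\rangle\int_{-\infty}^{\infty}\beta_{r}(t)(\lambda_{j}/\lambda_{i})^{it/2}\,dt$. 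Writing $\lambda_{i}/\lambda_{j}=e^{\tau}$ and using $(\lambda_{i}^{r}-\lambda_{j}^{r})/(\lambda_{i}-\lambda_{j})=(\lambda_{i}\lambda_{j})^{(r-1)/2}\sinh(r\tau/2)/\sinh(\tau/2)$, the identity reduces to the scalar Fourier claim
\begin{equation}
\int_{-\infty}^{\infty}\beta_{r}(t)\,e^{-i\tau t/2}\,dt=\frac{\sinh(r\tau/2)}{r\sinh(\tau/2)}.\label{eq:fourier-plan-goal}
\end{equation}

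The main obstacle is verifying \eqref{eq:fourier-plan-goal}. I would prove it by contour integration: the integrand $\sin(\pi r)/[\cosh(\pi t)+\cos(\pi r)]\cdot e^{-i\tau t/2}$ has simple poles where $\cosh(\pi t)=-\cos(\pi r)$, i.e., at $t=\pm i(2k+1\mp r)$ for $k\in\mathbb{Z}_{\geq0}$. Closing the contour in the appropriate half-plane and summing the resulting geometric series of residues produces the ratio of hyperbolic sines on the right. As sanity checks: $\beta_{r}\geq 0$ on $r\in(-1,0)\cup(0,1)$ since $\sin(\pi r)/r>0$ and $\cosh(\pi t)+\cos(\pi r)>0$ there; evaluating \eqref{eq:fourier-plan-goal} at $\tau=0$ gives $\int\beta_{r}=1$, confirming $\beta_{r}$ is a probability density; and letting $r\to 0$ recovers $\beta_{r}\to\beta$ and $(A^{r}-I)/r\to\ln A$, so \eqref{eq:deriv-matrix-power-fourier} collapses consistently to Lemma~\ref{lem:derivative-matrix-log}. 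Finally, extension from $r\in(0,1)$ to $r\in(-1,0)$ is immediate once one observes that both $\beta_{r}$ and the scalar prefactor $\sin(\pi r)/r$ are invariant under $r\mapsto -r$, or alternatively by analytic continuation of both sides in $r$ for fixed positive spectrum.
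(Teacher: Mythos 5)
Your proposal is correct and follows essentially the same route as the paper: reduce via the Daleckii--Krein formula to the divided difference, factor it as $(\lambda_i\lambda_j)^{(r-1)/2}\sinh(r\tau/2)/\sinh(\tau/2)$, and establish the scalar Fourier identity by contour integration and a geometric series of residues (the paper evaluates the same Fourier pair in the inverse direction, enumerating the poles of $\sinh(rz)/\sinh(z)$ at $z=n\pi i$ rather than those of $\beta_r$ at $i(2k+1\mp r)$, and cites a reference for \eqref{eq:deriv-matrix-power-standard} where you sketch a direct proof). The one step to tighten is the vanishing of the arc contribution, since the poles accumulate along the imaginary axis and the closing contour must avoid them; the paper devotes an explicit estimate to this.
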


\begin{proof}
See \cite[Proposition~50]{Wilde2025} for a proof of \eqref{eq:deriv-matrix-power-standard},
and see Appendix~\ref{app:Proof-of-Equation-matrix-power} for a
proof of \eqref{eq:deriv-matrix-power-fourier}.
\end{proof}
\begin{rem}
Note that the equality in \eqref{eq:deriv-matrix-log-fourier} is
a limiting case of the equality in \eqref{eq:deriv-matrix-power-fourier}
because
\begin{equation}
\lim_{r\to0}\frac{1}{r}\frac{\partial}{\partial x}A(x)^{r}=\frac{\partial}{\partial x}\ln A(x).\label{eq:r-0-limit-log}
\end{equation}
Indeed, the equality in \eqref{eq:r-0-limit-log} follows because
\begin{equation}
\frac{1}{r}\frac{\partial}{\partial x}A(x)^{r}=A(x)^{\frac{r-1}{2}}\Upsilon_{A(x)}^{r}\!\left(\frac{\partial}{\partial x}A(x)\right)A(x)^{\frac{r-1}{2}},
\end{equation}
so that
\begin{align}
\lim_{r\to0}\frac{1}{r}\frac{\partial}{\partial x}A(x)^{r} & =\lim_{r\to0}A(x)^{\frac{r-1}{2}}\Upsilon_{A(x)}^{r}\!\left(\frac{\partial}{\partial x}A(x)\right)A(x)^{\frac{r-1}{2}}\\
 & =A(x)^{-\frac{1}{2}}\Upsilon_{A(x)}\!\left(\frac{\partial}{\partial x}A(x)\right)A(x)^{-\frac{1}{2}},
\end{align}
where I used that
\begin{equation}
\lim_{r\to0}\beta_{r}(t)=\beta(t).
\end{equation}
\begin{rem}[Fr\'echet derivatives]
The various expressions in \eqref{eq:duhamel-form}, \eqref{eq:duhamel-fourier},
\eqref{eq:deriv-matrix-log-standard}, \eqref{eq:deriv-matrix-log-fourier},
\eqref{eq:deriv-matrix-power-standard}, and \eqref{eq:deriv-matrix-power-fourier}
can be expressed as the following Fr\'echet derivatives \cite{Coleman2012},
for all positive definite $A$, Hermitian $B$ and $H$, and $r\in\left(-1,0\right)\cup\left(0,1\right)$:
\begin{align}
D\exp(B)[H] & =\int_{0}^{1}dt\,e^{tB}He^{\left(1-t\right)B},\\
 & =\frac{1}{2}\left\{ \Phi_{B}\!\left(H\right),e^{B}\right\} ,\label{eq:frechet-exp}\\
D\ln(A)[H] & =\int_{0}^{\infty}ds\,\left(A+sI\right)^{-1}H\left(A+sI\right)^{-1}\\
 & =A^{-\frac{1}{2}}\Upsilon_{A}\!\left(H\right)A{}^{-\frac{1}{2}},\label{eq:frechet-log}\\
DA^{r}[H] & =\frac{\sin(\pi r)}{\pi}\int_{0}^{\infty}ds\,s^{r}\left(A+sI\right)^{-1}H\left(A+sI\right)^{-1},\\
 & =rA^{\frac{r-1}{2}}\Upsilon_{A}^{r}\!\left(H\right)A{}^{\frac{r-1}{2}}.\label{eq:frechet-power}
\end{align}
In the case that $H$ commutes with both $A$ and $B$, the above
expressions reduce to the conventional scalar derivatives. This is
especially transparent for the expressions in \eqref{eq:frechet-exp},
\eqref{eq:frechet-log}, and \eqref{eq:frechet-power}, given that
the channels $\Phi_{B}$, $\Upsilon_{A}$, and $\Upsilon_{A}^{r}$
each correspond to random-time Hamiltonian simulations that leave
$H$ invariant under the commuting assumption (that is, $\Upsilon_{A}(H)=H$
and $\Upsilon_{A}^{r}(H)=H$ if $\left[A,H\right]=0$ and $\Phi_{B}(H)=H$
if $\left[B,H\right]=0$).
\begin{rem}
\label{rem:constant-time-dists}Note that the probability densities
$\gamma(t)$, $\beta(t)$, and $\beta_{r}(t)$ have essentially all
of their probability mass concentrated on a constant-sized interval
containing $t=0$. To see this, consider that, for $T>\frac{\ln2}{\pi}\approx0.22$,
the probability mass of $\gamma(t)$ outside of the interval $\left[-T,T\right]$
is bounded from above as follows:
\begin{align}
\int_{\left|t\right|>T}dt\,\gamma(t) & =\frac{2}{\pi}\int_{\left|t\right|>T}dt\,\ln\!\left|\coth\!\left(\frac{\pi t}{2}\right)\right|.\\
 & =\frac{4}{\pi}\int_{T}^{\infty}dt\,\ln\!\left(\coth\!\left(\frac{\pi t}{2}\right)\right)\\
 & \leq\frac{4}{\pi}\int_{T}^{\infty}dt\,4e^{-\pi t}\\
 & =\frac{16}{\pi^{2}}e^{-\pi T},\label{eq:high-peak-tent-up-bnd}
\end{align}
where I used the facts that $\gamma(t)$ is an even function, $\ln(1+x)\leq x$
for $x>0$, $\coth\!\left(\frac{\pi t}{2}\right)=1+\frac{2}{e^{\pi t}-1}$,
and $e^{\pi t}-1\geq\frac{1}{2}e^{\pi t}$ for $t\geq\frac{\ln2}{\pi}$.
For example, at $T=10$, the upper bound in \eqref{eq:high-peak-tent-up-bnd}
evaluates to $\frac{16}{\pi^{2}}e^{-10\pi}\approx3.9\times10^{-14}$.
Similarly, the probability mass of $\beta(t)$ outside of the interval
$\left[-T,T\right]$ is bounded from above as follows:
\begin{align}
\int_{\left|t\right|>T}dt\,\beta(t) & =\frac{\pi}{2}\int_{\left|t\right|>T}dt\,\frac{1}{\cosh\!\left(\pi t\right)+1}.\\
 & =\pi\int_{T}^{\infty}dt\,\frac{1}{\cosh\!\left(\pi t\right)+1}\\
 & \leq2\pi\int_{T}^{\infty}dt\,e^{-\pi t}\\
 & =2e^{-\pi T},\label{eq:beta-t-up-bnd}
\end{align}
where I used the fact that $\beta(t)$ is an even function and that
$\cosh(x)+1\geq\frac{1}{2}e^{x}$ for $x\geq0$. At $T=10$, the upper
bound in \eqref{eq:beta-t-up-bnd} evaluates to $2e^{-10\pi}\approx4.5\times10^{-14}$.
A similar analysis and conclusion can be reached for $\beta_{r}(t)$. 
\end{rem}

\end{rem}

\end{rem}

\section{Quantum state learning using quantum Boltzmann machines}

\label{sec:Quantum-state-learning-QBMs}This section contains some
of the main claims of this paper, including an analytical formula
for the gradient of the quantum relative entropy in quantum state
learning, when using quantum Boltzmann machines with visible and hidden
units (Theorem~\ref{thm:q-state-learning-gradient-vh-q}). Additionally,
I present a quantum algorithm for estimating the gradient (Section~\ref{subsec:Q-algorithm-state-learning-QBM}).
I also show how the gradient formula simplifies when considering a
restricted quantum Boltzmann machine (Section~\ref{subsec:Application-to-restricted-QBMs})
and how it reduces to the known expression from \cite[Eq.~(4)]{Kieferova2017}
when there are only visible units (Section~\ref{subsec:Consistency-check-no-HUs}).
The results of this section also represent the starting point from
which the results in Sections \ref{sec:Quantum-state-learning-qc-QBMs}
and \ref{sec:Generative-modeling-using-cq-BMs} are derived.

To begin with, let us define a quantum Boltzmann machine (parameterized
thermal state) with visible and hidden units as follows:
\begin{align}
\sigma_{vh}(\theta) & \coloneqq\frac{e^{-G(\theta)}}{Z(\theta)},\label{eq:def-QBM-state-vh}\\
Z(\theta) & \coloneqq\Tr[e^{-G(\theta)}],\\
G(\theta) & \coloneqq\sum_{j=1}^{J}\theta_{j}G_{j},\label{eq:gen-param-ham}
\end{align}
where $J\in\mathbb{N}$, $\theta\coloneqq\left(\theta_{1},\ldots,\theta_{J}\right)\in\mathbb{R}^{J}$
is a parameter vector and each $G_{j}$ is a Hamiltonian acting on
both the visible and hidden systems. In the above, the letter $v$
denotes the visible system, and the letter $h$ denotes the hidden
system. The reduced state on the visible system $v$ is thus given
by
\begin{equation}
\sigma_{v}(\theta)\coloneqq\Tr_{h}[\sigma_{vh}(\theta)].\label{eq:def-reduced-state-vis}
\end{equation}

The goal of quantum state learning is to minimize the quantum relative
entropy between a target state $\rho$ and the reduced state $\sigma_{v}(\theta)$
of the visible system \cite{Kieferova2017}. Recall that the quantum
(Umegaki) relative entropy is defined as \cite{Umegaki1962}
\begin{equation}
D(\rho\|\sigma_{v}(\theta))\coloneqq\Tr[\rho\ln\rho]-\Tr[\rho\ln\sigma_{v}(\theta)].\label{eq:q-rel-ent-def}
\end{equation}

\subsection{Analytical formula for the gradient of quantum Boltzmann machines}

Theorem~\ref{thm:q-state-learning-gradient-vh-q} below provides
an analytical expression for the gradient of \eqref{eq:q-rel-ent-def},
which is a quantum generalization of the expression in \eqref{eq:classical-gradient}.
One key difference with the classical formula is that the first term
of \eqref{eq:gradient-fully-QBM} is expressed in terms of the Hermiticity-preserving,
trace-preserving (HPTP) map $\Sigma_{v\to vh}^{\theta}$. In general,
$\Sigma_{v\to vh}^{\theta}$ is not a quantum channel: if its input
is a quantum state, then its output is a quasi-state (Hermitian with
unit trace), as studied considerably in other contexts \cite{Fitzsimons2015,Fullwood2025,Fullwood2025a,Ji2025}.
However, if the Hamiltonian terms acting on the visible units commute,
then $\Sigma_{v\to vh}^{\theta}$ reduces to a classical--quantum
channel, as observed later on in Theorem~\ref{thm:gradient-cq-BMs}.
\begin{thm}
\label{thm:q-state-learning-gradient-vh-q}Let $\rho$ be a target
quantum state, and let $\sigma_{v}(\theta)$ be the state in \eqref{eq:def-reduced-state-vis}.
The partial derivatives of $D(\rho\|\sigma_{v}(\theta))$ are as follows:
\begin{equation}
\frac{\partial}{\partial\theta_{j}}D(\rho\|\sigma_{v}(\theta))=\left\langle G_{j}\right\rangle _{\Sigma_{v\to vh}^{\theta}(\rho)}-\left\langle G_{j}\right\rangle _{\sigma_{vh}(\theta)},\label{eq:gradient-fully-QBM}
\end{equation}
where 
\begin{align}
\Sigma_{v\to vh}^{\theta} & \coloneqq\Phi_{vh}^{\theta}\circ\Xi_{v\to vh}^{\theta}\circ\Upsilon_{v}^{\theta},\label{eq:herm-pres-map-Sigma}\\
\Xi_{v\to vh}^{\theta}(R_{v}) & \coloneqq\frac{1}{2}\left\{ \sigma_{vh}(\theta),\sigma_{v}(\theta)^{-\frac{1}{2}}R_{v}\sigma_{v}(\theta)^{-\frac{1}{2}}\otimes I_{h}\right\} ,\\
\Phi_{vh}^{\theta}(Y_{vh}) & \coloneqq\int_{-\infty}^{\infty}dt\,\gamma(t)\:e^{-iG(\theta)t}Y_{vh}e^{iG(\theta)t},\\
\Upsilon_{v}^{\theta}(X_{v}) & \coloneqq\int_{-\infty}^{\infty}dt\,\beta(t)\,\sigma_{v}(\theta)^{-\frac{it}{2}}X_{v}\sigma_{v}(\theta)^{\frac{it}{2}}. \label{eq:upsilon-channel}
\end{align}
In the above, $\Xi_{v\to vh}^{\theta}$ is a Hermiticity-preserving,
trace-preserving superoperator, $\Phi_{vh}^{\theta}$ is a quantum
channel, $\Upsilon_{v}^{\theta}$ is a quantum channel, $\gamma(t)$
is the high-peak tent probability density function defined in \eqref{eq:high-peak-tent-def},
and $\beta(t)$ is the logistic probability density function defined
in \eqref{eq:logistic-prob-dens}.
\end{thm}

\begin{proof}
Consider that
\begin{align}
 & \frac{\partial}{\partial\theta_{j}}D(\rho\|\sigma_{v}(\theta))\nonumber \\
 & =-\Tr\!\left[\rho\frac{\partial}{\partial\theta_{j}}\ln\sigma_{v}(\theta)\right]\\
 & \overset{(a)}{=}-\Tr\!\left[\rho\sigma_{v}(\theta)^{-\frac{1}{2}}\Upsilon_{v}^{\theta}\!\left(\frac{\partial}{\partial\theta_{j}}\sigma_{v}(\theta)\right)\sigma_{v}(\theta)^{-\frac{1}{2}}\right]\\
 & \overset{(b)}{=}-\Tr\!\left[\sigma_{v}(\theta)^{-\frac{1}{2}}\Upsilon_{v}^{\theta}(\rho)\sigma_{v}(\theta)^{-\frac{1}{2}}\left(\frac{\partial}{\partial\theta_{j}}\Tr_{h}\!\left[\sigma_{vh}(\theta)\right]\right)\right]\\
 & \overset{(c)}{=}-\Tr\!\left[\sigma_{v}(\theta)^{-\frac{1}{2}}\Upsilon_{v}^{\theta}(\rho)\sigma_{v}(\theta)^{-\frac{1}{2}}\Tr_{h}\!\left[\frac{\partial}{\partial\theta_{j}}\sigma_{vh}(\theta)\right]\right]\\
 & \overset{(d)}{=}-\Tr\!\left[\begin{array}{c}
\sigma_{v}(\theta)^{-\frac{1}{2}}\Upsilon_{v}^{\theta}(\rho)\sigma_{v}(\theta)^{-\frac{1}{2}}\times\\
\Tr_{h}\!\left[-\frac{1}{2}\left\{ \Phi_{vh}^{\theta}(G_{j}),\sigma_{vh}(\theta)\right\} +\sigma_{vh}(\theta)\left\langle G_{j}\right\rangle _{\sigma_{vh}(\theta)}\right]
\end{array}\right]\\
 & =\frac{1}{2}\Tr\!\left[\sigma_{v}(\theta)^{-\frac{1}{2}}\Upsilon_{v}^{\theta}(\rho)\sigma_{v}(\theta)^{-\frac{1}{2}}\Tr_{h}\!\left[\left\{ \Phi_{vh}^{\theta}(G_{j}),\sigma_{vh}(\theta)\right\} \right]\right]\nonumber \\
 & \qquad-\left\langle G_{j}\right\rangle _{\sigma_{vh}(\theta)}\Tr\!\left[\sigma_{v}(\theta)^{-\frac{1}{2}}\Upsilon_{v}^{\theta}(\rho)\sigma_{v}(\theta)^{-\frac{1}{2}}\Tr_{h}\!\left[\sigma_{vh}(\theta)\right]\right]\\
 & \overset{(e)}{=}\frac{1}{2}\Tr\!\left[\left(\sigma_{v}(\theta)^{-\frac{1}{2}}\Upsilon_{v}^{\theta}(\rho)\sigma_{v}(\theta)^{-\frac{1}{2}}\otimes I_{h}\right)\left\{ \Phi_{vh}^{\theta}(G_{j}),\sigma_{vh}(\theta)\right\} \right]\nonumber \\
 & \qquad-\left\langle G_{j}\right\rangle _{\sigma_{vh}(\theta)}\Tr\!\left[\sigma_{v}(\theta)^{-\frac{1}{2}}\Upsilon_{v}^{\theta}(\rho)\sigma_{v}(\theta)^{-\frac{1}{2}}\sigma_{v}(\theta)\right]\\
 & \overset{(f)}{=}\frac{1}{2}\Tr\!\left[\left(\sigma_{v}(\theta)^{-\frac{1}{2}}\Upsilon_{v}^{\theta}(\rho)\sigma_{v}(\theta)^{-\frac{1}{2}}\otimes I_{h}\right)\left\{ \Phi_{vh}^{\theta}(G_{j}),\sigma_{vh}(\theta)\right\} \right]-\left\langle G_{j}\right\rangle _{\sigma_{vh}(\theta)}.
\end{align}
The equality $(a)$ follows from Lemma \ref{lem:derivative-matrix-log}.
The equality $(b)$ follows from \eqref{eq:def-reduced-state-vis}
and the facts that $\Tr[A\Upsilon_{v}^{\theta}(B)]=\Tr[\Upsilon_{v}^{\theta}(A)B]$
for all linear operators $A$ and $B$. It also follows because
\begin{equation}
\Upsilon_{v}^{\theta}(\sigma_{v}(\theta)^{-\frac{1}{2}}(\cdot)\sigma_{v}(\theta)^{-\frac{1}{2}})=\sigma_{v}(\theta)^{-\frac{1}{2}}\Upsilon_{v}^{\theta}(\cdot)\sigma_{v}(\theta)^{-\frac{1}{2}}.
\end{equation}
The equality $(c)$ follows because the partial derivative and the
partial trace are linear operations. The equality $(d)$ follows from
\eqref{eq:deriv-thermal-state}. The equality $(e)$ follows because
\begin{equation}
\Tr\!\left[\left(A_{v}\otimes I_{h}\right)B_{vh}\right]=\Tr\!\left[A_{v}\Tr_{h}\!\left[B_{vh}\right]\right].
\end{equation}
The equality $(f)$ follows because 
\begin{equation}
\Tr\!\left[\sigma_{v}(\theta)^{-\frac{1}{2}}\Upsilon_{v}^{\theta}(\rho)\sigma_{v}(\theta)^{-\frac{1}{2}}\sigma_{v}(\theta)\right]=1,
\end{equation}
 given that $\rho$ is a state and $\Upsilon_{v}^{\theta}$ is a channel.
Now consider that
\begin{align}
 & \frac{1}{2}\Tr\!\left[\left(\sigma_{v}(\theta)^{-\frac{1}{2}}\Upsilon_{v}^{\theta}(\rho)\sigma_{v}(\theta)^{-\frac{1}{2}}\otimes I_{h}\right)\left\{ \Phi_{vh}^{\theta}(G_{j}),\sigma_{vh}(\theta)\right\} \right]\nonumber \\
 & =\frac{1}{2}\Tr\!\left[\left(\sigma_{v}(\theta)^{-\frac{1}{2}}\Upsilon_{v}^{\theta}(\rho)\sigma_{v}(\theta)^{-\frac{1}{2}}\otimes I_{h}\right)\Phi_{vh}^{\theta}(G_{j})\sigma_{vh}(\theta)\right]\nonumber \\
 & \qquad+\frac{1}{2}\Tr\!\left[\left(\sigma_{v}(\theta)^{-\frac{1}{2}}\Upsilon_{v}^{\theta}(\rho)\sigma_{v}(\theta)^{-\frac{1}{2}}\otimes I_{h}\right)\sigma_{vh}(\theta)\Phi_{vh}^{\theta}(G_{j})\right]\\
 & =\Tr\!\left[\Phi_{vh}^{\theta}(G_{j})\frac{1}{2}\left\{ \sigma_{vh}(\theta),\sigma_{v}(\theta)^{-\frac{1}{2}}\Upsilon_{v}^{\theta}(\rho)\sigma_{v}(\theta)^{-\frac{1}{2}}\otimes I_{h}\right\} \right]\\
 & \overset{(g)}{=}\Tr\!\left[G_{j}\Phi_{vh}^{\theta}\!\left(\frac{1}{2}\left\{ \sigma_{vh}(\theta),\sigma_{v}(\theta)^{-\frac{1}{2}}\Upsilon_{v}^{\theta}(\rho)\sigma_{v}(\theta)^{-\frac{1}{2}}\otimes I_{h}\right\} \right)\right]\\
 & =\Tr\!\left[G_{j}(\Phi_{vh}^{\theta}\circ\Xi_{v\to vh}^{\theta}\circ\Upsilon_{v}^{\theta})(\rho)\right]\\
 & \overset{(h)}{=}\Tr\!\left[G_{j}\Sigma_{v\to vh}^{\theta}(\rho)\right]\\
 & =\left\langle G_{j}\right\rangle _{\Sigma_{v\to vh}^{\theta}(\rho)},
\end{align}
where the equality $(g)$ follows because $\Tr[A\Phi_{vh}^{\theta}(B)]=\Tr[\Phi_{vh}^{\theta}(A)B]$
for all linear operators $A$ and $B$ and the equality $(h)$ follows
from the definition of the Hermiticity-preserving map $\Sigma_{v\to vh}^{\theta}$
in \eqref{eq:herm-pres-map-Sigma}.
\end{proof}

\subsection{Quantum algorithm for estimating the gradient}

\label{subsec:Q-algorithm-state-learning-QBM}

\begin{figure}

\centering{}\includegraphics[width=1\textwidth]{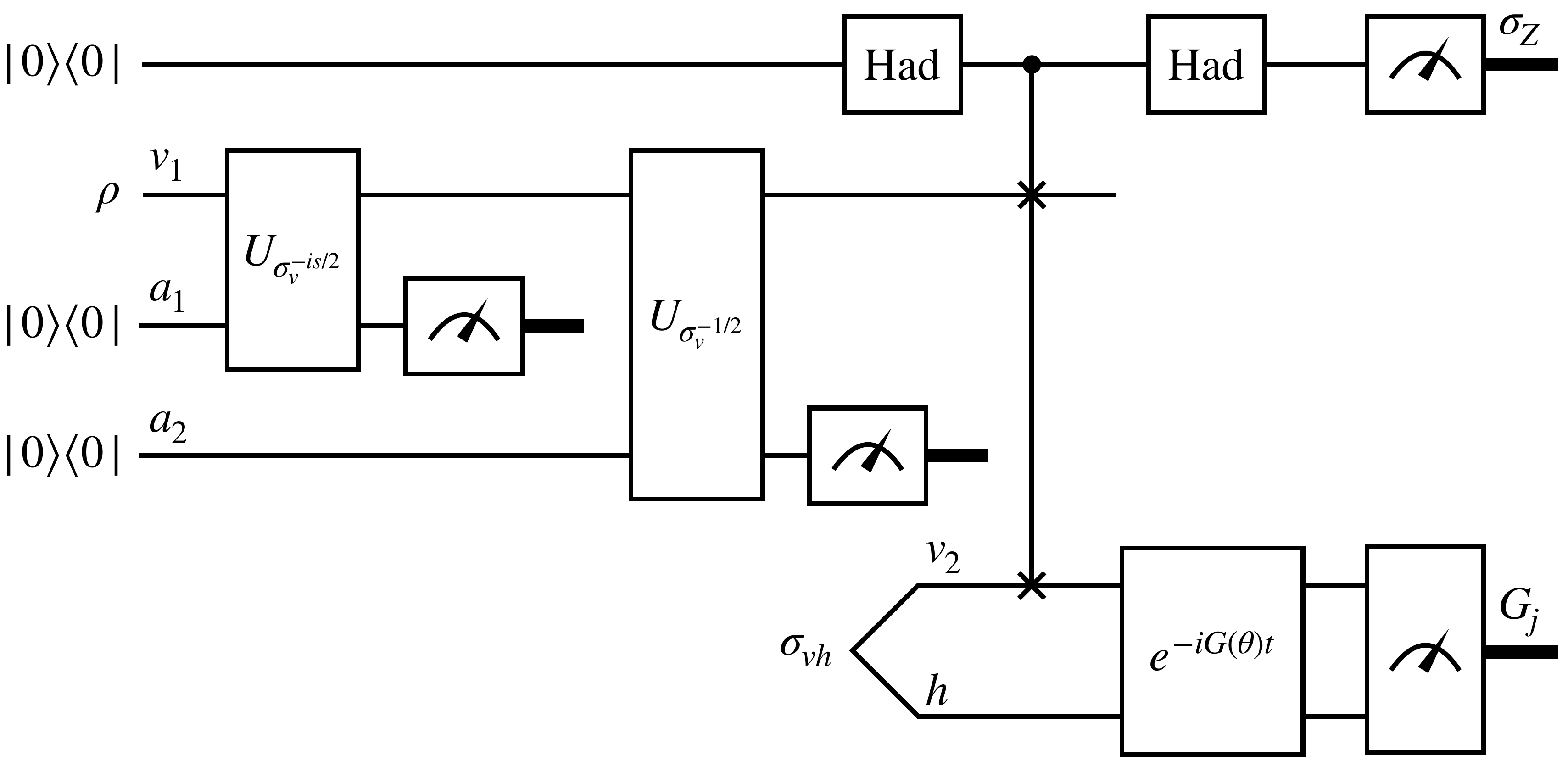}\caption{Depiction of a quantum circuit that estimates $\left\langle G_{j}\right\rangle _{\Sigma_{v\to vh}^{\theta}\!\left(\rho\right)}$,
the first term in \eqref{eq:gradient-fully-QBM}. The first part of
the circuit prepares a block-encoding of $\sigma_{v}^{-1/2}\sigma_{v}^{-is/2}$,
which acts on the target state $\rho$. The last part of the circuit
performs a swap test and measures the observable $e^{iG(\theta)t}G_{j}e^{-iG(\theta)t}$.
In each execution of the circuit, the value $s$ is sampled from the
logistic probability density $\beta(t)$ in \eqref{eq:logistic-prob-dens},
and the value $t$ is sampled from the high-peak tent probability
density $\gamma(t)$ in \eqref{eq:high-peak-tent-def}.}\label{fig:q-circuit-gradient-est}
\end{figure}

In this section, I present a quantum algorithm for estimating the
term $\left\langle G_{j}\right\rangle _{\Sigma_{v\to vh}^{\theta}(\rho)}$
in \eqref{eq:gradient-fully-QBM}. The key quantum circuit used for
this purpose is depicted in Figure~\ref{fig:q-circuit-gradient-est}.
It relies on quantum singular value transformation (QSVT) \cite{Gilyen2019},
which is a general-purpose framework for applying functions to matrices
block-encoded into unitary quantum circuits. It also relies on the
Hadamard test \cite{Cleve1998} and Hamiltonian simulation \cite{Lloyd1996,Childs2018},
which are two basic primitives used in quantum algorithms.

Before delving into the analysis of the quantum circuit depicted in
Figure~\ref{fig:q-circuit-gradient-est}, let us recall basics of
the block-encoding formalism \cite{Low2019hamiltonian,Gilyen2019}.
We say that a unitary $U$ is an $\left(\alpha,\delta\right)$-approximate
block-encoding of a matrix $A$ if
\begin{equation}
\left\Vert A-\alpha\left(\langle0|\otimes I\right)U\left(|0\rangle\otimes I\right)\right\Vert \leq\delta,
\end{equation}
where $\alpha$ is a normalization factor, $\delta$ is the approximation
error, and the norm $\left\Vert \cdot\right\Vert $ is the spectral
norm. It is also common to use the notation $\left(\alpha,a,\delta\right)$
for approximate block-encodings, where $a$ is the number of qubits
needed for the state vector $|0\rangle$. If $\delta=0$, then the
block-encoding is exact and the equality $\frac{A}{\alpha}=\left(\langle0|\otimes I\right)U\left(|0\rangle\otimes I\right)$
holds. In this case, the unitary $U$ is called an exact block-encoding
of $A$ because it encodes the matrix $A$ in the top-left block up
to the normalization factor $\alpha$:
\begin{equation}
U=\begin{bmatrix}\frac{A}{\alpha} & \cdot\\
\cdot & \cdot
\end{bmatrix}.
\end{equation}
Given a unitary that prepares a purification of a state $\sigma$,
it is possible to block-encode the state $\sigma$ \cite[Lemma~45]{Gilyen2019}.
If one only has sample access to a state $\sigma$, it is still possible
to block-encode it approximately by density matrix exponentiation
\cite[Corollary~21]{Gilyen2022a} (see also \cite[Lemma~2.21]{Wang2025}).
Essential to the quantum algorithm presented here, these block-encodings
of $\sigma$ can be transformed by QSVT to block-encodings of $\sigma^{-\frac{1}{2}}$
and $\sigma^{-\frac{is}{2}}$, as done in \cite{Gilyen2022} and Appendix~\ref{subsec:Block-encoding-for-modular-flow},
respectively, with a number of queries that depends inversely on the
minimum eigenvalue of $\sigma$ and logarithmically on the inverse
of the desired approximation error. Note that the modular-flow-simulation
algorithm in Appendix~\ref{subsec:Block-encoding-for-modular-flow}
improves upon the quantum algorithm from \cite{Lim2025}.

With this background in place, I now present an algorithm for estimating
the term $\left\langle G_{j}\right\rangle _{\Sigma_{v\to vh}^{\theta}(\rho)}$
in \eqref{eq:gradient-fully-QBM}. Note that the second term $\left\langle G_{j}\right\rangle _{\sigma_{vh}(\theta)}$
in \eqref{eq:gradient-fully-QBM} can be estimated by preparing the
thermal state $\sigma_{vh}(\theta)$ and measuring the observable
$G_{j}$.
\begin{lyxalgorithm}
\label{alg:estimate-grad-QBM}Suppose that we have access to a unitary
$U_{\psi}$ that prepares a purification $\psi^{\sigma}$ of the state
$\sigma_{v}(\theta)$. Set $\kappa>0$ to satisfy $\kappa^{-1}I\leq\sigma_{v}(\theta)$.
Set $\varepsilon>0$ to be the desired error, $\delta\in\left(0,1\right)$
to be the desired failure probability, and $m=1$. The algorithm for
estimating $\left\langle G_{j}\right\rangle _{\Sigma_{v\to vh}^{\theta}(\rho)}$
proceeds according to the following steps:
\begin{enumerate}
\item Pick $s\in\mathbb{R}$ at random according to the logistic probability
density $\beta(t)$ in \eqref{eq:logistic-prob-dens}, and pick $t\in\mathbb{R}$
at random according to the high-peak-tent probability density $\gamma(t)$
in \eqref{eq:high-peak-tent-def}.
\item Transform $U_{\psi}$ to an approximate block-encoding of $\sigma_{v}(\theta)^{-\frac{1}{2}}\sigma_{v}(\theta)^{-\frac{is}{2}}$,
by querying it multiple times and using QSVT, as shown in Appendix~\ref{sec:Detailed-analysis-QSVT}.
\item Execute the quantum circuit depicted in Figure~\ref{fig:q-circuit-gradient-est},
realizing the unitary evolution $e^{-iG(\theta)t}$ by Hamiltonian
simulation \cite{Lloyd1996,Childs2018}, and record the measurement
outcomes $z_{m}$ and $g_{m}$, which are the outcomes observed when
measuring $\sigma_{Z}$ on the control qubit and $G_{j}$ on the data
qubits in systems $v_{2}$ and $h$, respectively.
\item Set $Y_{m}\leftarrow\left(-1\right)^{z_{m}}g_{m}$, and $m\leftarrow m+1$.
\item Repeat Steps 1-4 $M\in\mathbb{N}$ times, where
\begin{align}
M & \geq O\!\left(\left(\frac{\kappa\left\Vert G_{j}\right\Vert }{\varepsilon}\right)^{2}\ln\!\left(\frac{1}{\delta}\right)\right),\label{eq:number-trials-main-alg}
\end{align}
and set $\overline{Y_{M}}\coloneqq\frac{\kappa}{M}\sum_{m=1}^{M}Y_{m}$
as an estimate of $\left\langle G_{j}\right\rangle _{\Sigma_{v\to vh}^{\theta}(\rho)}$.
\end{enumerate}
\end{lyxalgorithm}

\begin{thm}
\label{thm:alg-complexity-claim}For $\varepsilon>0$ and $\delta\in\left(0,1\right)$,
the estimate $\overline{Y_{M}}$ produced by Algorithm~\ref{alg:estimate-grad-QBM}
satisfies the following:
\begin{equation}
\Pr\!\left[\left|\overline{Y_{M}}-\left\langle G_{j}\right\rangle _{\Sigma_{v\to vh}^{\theta}(\rho)}\right|\leq\varepsilon\right]\geq1-\delta.
\end{equation}
Each execution of the quantum circuit in Figure~\ref{fig:q-circuit-gradient-est}
queries the unitary $U_{\psi}$ the following number of times:
\begin{equation}
\tilde{O}\!\left(\kappa\ln\!\left(\frac{\kappa\left\Vert G_{j}\right\Vert }{\varepsilon}\right)\right),
\end{equation}
where the notation $\tilde{O}$ suppresses various logarithmic factors.
Thus, the total number of times that Algorithm~\ref{alg:estimate-grad-QBM}
queries the unitary $U_{\psi}$ is given by
\begin{equation}
\tilde{O}\!\left(\frac{\kappa^{3}\left\Vert G_{j}\right\Vert ^{2}}{\varepsilon^{2}}\ln\!\left(\frac{\kappa\left\Vert G_{j}\right\Vert }{\varepsilon}\right)\ln\!\left(\frac{1}{\delta}\right)\right).\label{eq:total-num-queries-alg}
\end{equation}
\end{thm}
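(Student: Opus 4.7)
The plan is to establish three ingredients in sequence: (i) unbiasedness of the rescaled estimator $\kappa Y_m$, (ii) the sample-size bound \eqref{eq:number-trials-main-alg} via Hoeffding's inequality on bounded random variables, and (iii) the per-execution query bound via standard QSVT inversion combined with the modular-flow simulator of Appendix~\ref{subsec:Block-encoding-for-modular-flow}. The total cost \eqref{eq:total-num-queries-alg} is then the product of the last two. The algorithm is classical Monte Carlo on top of a quantum subroutine whose output $Y_m \in [-\Vert G_j\Vert, \Vert G_j\Vert]$ has conditional mean equal (up to a factor of $\kappa^{-1}$) to $\langle G_j\rangle_{\Sigma_{v\to vh}^\theta(\rho)}$, which is why the final output is rescaled by $\kappa$.

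First I would verify the mean computation by conditioning on fixed $s,t\in\mathbb{R}$. The QSVT stage built from $U_\psi$ yields an approximate block-encoding of $\sigma_v(\theta)^{-\frac{1}{2}}\sigma_v(\theta)^{-\frac{is}{2}}/\kappa$, which when applied to the register holding $\rho$ prepares (in the encoded subspace) the operator $B_v(s) \coloneqq \sigma_v(\theta)^{-\frac{1}{2}}\sigma_v(\theta)^{-\frac{is}{2}}\rho\,\sigma_v(\theta)^{\frac{is}{2}}\sigma_v(\theta)^{-\frac{1}{2}}$ up to the normalization. A standard analysis of the subsequent Hadamard/swap-style subcircuit in Figure~\ref{fig:q-circuit-gradient-est}, combined with the controlled evolution $e^{-iG(\theta)t}$ and the final measurement of $G_j$, then gives $\mathbb{E}[(-1)^{z_m}g_m \mid s,t] = \tfrac{1}{\kappa}\Tr\!\left[e^{iG(\theta)t}G_j e^{-iG(\theta)t}\,\tfrac{1}{2}\left\{\sigma_{vh}(\theta),\, B_v(s)\otimes I_h\right\}\right]$. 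Averaging over $s\sim\beta$ realizes the integral defining $\Upsilon_v^\theta$, and averaging over $t\sim\gamma$ (using that $\gamma$ is even so $\Phi_{vh}^\theta$ is self-adjoint) realizes $\Phi_{vh}^\theta$ by the cyclic identity $\Tr[A\Phi_{vh}^\theta(B)] = \Tr[\Phi_{vh}^\theta(A)B]$. Composing yields $\tfrac{1}{\kappa}\Tr[G_j\,(\Phi_{vh}^\theta\circ\Xi_{v\to vh}^\theta\circ\Upsilon_v^\theta)(\rho)]$, which by \eqref{eq:herm-pres-map-Sigma} equals $\tfrac{1}{\kappa}\langle G_j\rangle_{\Sigma_{v\to vh}^\theta(\rho)}$ as required.

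Given unbiasedness together with the uniform bound $|\kappa Y_m|\leq\kappa\Vert G_j\Vert$, Hoeffding's inequality applied to the i.i.d.\ family $\{\kappa Y_m\}_{m=1}^M$ gives $\Pr[|\overline{Y_M}-\langle G_j\rangle_{\Sigma_{v\to vh}^\theta(\rho)}|\geq\varepsilon] \leq 2\exp(-M\varepsilon^2/(2\kappa^2\Vert G_j\Vert^2))$, and setting the right-hand side equal to $\delta$ reproduces \eqref{eq:number-trials-main-alg}. For the per-execution count, I would invoke \cite[Corollary~60 and Lemma~61]{Gilyen2019} together with the analysis of \cite{Gilyen2022} to block-encode $\sigma_v(\theta)^{-\frac{1}{2}}$ in $\tilde{O}(\kappa\log(1/\delta'))$ queries to $U_\psi$ under the assumption $\sigma_v(\theta)\succeq\kappa^{-1}I$, and invoke the improved modular-flow simulator of Appendix~\ref{subsec:Block-encoding-for-modular-flow} to block-encode $\sigma_v(\theta)^{-\frac{is}{2}}$ at comparable cost, with $|s|$ truncated to an interval of radius $O(\log(\kappa\Vert G_j\Vert/\varepsilon))$ in view of the tail bound \eqref{eq:beta-t-up-bnd}. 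Combining the two block-encodings by the standard product rule and setting $\delta' = \Theta(\varepsilon/(\kappa\Vert G_j\Vert))$ so that each sample's bias is at most $\varepsilon/2$ yields the claimed $\tilde{O}(\kappa\log(\kappa\Vert G_j\Vert/\varepsilon))$ queries per execution; multiplying by $M$ from \eqref{eq:number-trials-main-alg} produces \eqref{eq:total-num-queries-alg}.

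The hard part will be propagating approximation errors coherently through the pipeline. Both the polynomial approximations underlying QSVT and the truncation of the $s$ and $t$ samples introduce bias in each $Y_m$, and these biases must collectively stay below $\varepsilon/2$ so that the Hoeffding tail of $\varepsilon/2$ combines to the overall $\varepsilon$ in the statement. Because the stages applied after the block-encoding are unitary, operator-norm errors propagate only additively, and the truncation bounds in Remark~\ref{rem:constant-time-dists}, \eqref{eq:high-peak-tent-up-bnd}, and \eqref{eq:beta-t-up-bnd} show that cutting the sampling support at radius $O(\log(\kappa\Vert G_j\Vert/\varepsilon))$ costs only exponentially small bias; these adjustments are absorbed into the logarithmic factors hidden inside $\tilde{O}$.
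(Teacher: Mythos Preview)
Your proposal is correct and follows essentially the same route as the paper: unbiasedness of $\kappa Y_m$ via the swap-test/Hadamard analysis, Hoeffding for the sample count~\eqref{eq:number-trials-main-alg}, QSVT costs for $\sigma_v^{-1/2}$ and $\sigma_v^{-is/2}$ from Appendices~\ref{subsec:Block-encoding-for-inverse-sqrt} and~\ref{subsec:Block-encoding-for-modular-flow}, the block-encoding product rule, and the choice of intermediate precisions $\Theta(\varepsilon/(\kappa\Vert G_j\Vert))$. One small deviation worth noting: you handle the $|s|$-dependence of the modular-flow block-encoding by explicitly truncating the $\beta$-sample to radius $O(\log(\kappa\Vert G_j\Vert/\varepsilon))$, whereas the paper simply declares $|s|$ ``effectively constant'' via Remark~\ref{rem:constant-time-dists} and absorbs it into $\tilde O$; your version is slightly more honest but lands in the same place. (Minor quibble: \cite[Corollary~60 and Lemma~61]{Gilyen2019} are the Hamiltonian-simulation-from-block-encoding results and belong with the $\sigma_v^{-is/2}$ step, not with $\sigma_v^{-1/2}$; for the latter the relevant input is the negative-power QSVT of \cite{Gilyen2022}.)
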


\begin{proof}
This claim follows as a consequence of the Hoeffding bound \cite{Hoeffding1963}
and the error analysis presented in Appendix~\ref{sec:Detailed-analysis-QSVT}.
\end{proof}
In the remainder of this section, I show how the state evolves through
the circuit depicted in Figure~\ref{fig:q-circuit-gradient-est},
in order to give a basic sense of the key quantum subroutine of Algorithm~\ref{alg:estimate-grad-QBM},
while leaving a detailed error analysis to Appendix~\ref{sec:Detailed-analysis-QSVT}.
In various steps below, ‘$\approx$’ denotes equality up to the approximation
errors of the QSVT block encodings, and I also make the abbreviations
$\sigma_{v}\equiv\sigma_{v}(\theta)$ and $\sigma_{vh}\equiv\sigma_{vh}(\theta)$.

Prior to the controlled swap in Figure~\ref{fig:q-circuit-gradient-est}
(which acts on the control qubit $c$ and systems $v_{1}$ and $v_{2}$),
the state of the registers $v_{1}a_{1}a_{2}v_{2}h$ is
\begin{equation}
\omega_{v_{1}a_{1}a_{2}v_{2}h}\equiv U_{\sigma_{v}^{-1/2}}(U_{\sigma_{v}^{-is/2}}(\rho_{v_{1}}\otimes|0\rangle\!\langle0|_{a_{1}})U_{\sigma_{v}^{-is/2}}^{\dag}\otimes|0\rangle\!\langle0|_{a_{2}})U_{\sigma_{v}^{-1/2}}^{\dag}\otimes\sigma_{v_{2}h}.\label{eq:omega-state-alg}
\end{equation}
The unitary $U_{\sigma_{v}^{-1/2}}$ is a block-encoding of $\sigma_{v}^{-1/2}$,
and the unitary $U_{\sigma_{v}^{-is/2}}$ is a block-encoding of $\sigma_{v}^{-is/2}$,
so that
\begin{align}
\alpha_{1}\left(I_{v_{1}}\otimes\langle0|_{a_{2}}\right)U_{\sigma_{v}^{-is/2}}\left(I_{v_{1}}\otimes|0\rangle_{a_{2}}\right) & \approx\sigma_{v}^{-is/2},\label{eq:QSVT-approx-1}\\
\alpha_{2}\left(I_{v_{1}}\otimes\langle0|_{a_{1}}\right)U_{\sigma_{v}^{-1/2}}\left(I_{v_{1}}\otimes|0\rangle_{a_{1}}\right) & \approx\sigma_{v}^{-1/2},\label{eq:QSVT-approx-2}
\end{align}
for constants $\alpha_{1}=1$ and $\alpha_{2}=\sqrt{\kappa}$, where
$\kappa>0$ satisfies $\kappa^{-1}I\leq\sigma_{v}(\theta)$. The constants
$\alpha_{1}$ and $\alpha_{2}$ arise from the normalization of the
QSVT block encodings. After performing the controlled swap, the following
observable is measured:
\begin{equation}
X_{c}\otimes O_{v_{1}a_{1}a_{2}v_{2}h},
\end{equation}
where $X_{c}$ denotes the Pauli-$X$ matrix acting on the control
qubit $c$, and
\begin{align}
O_{v_{1}a_{1}a_{2}v_{2}h} & \equiv I_{v_{1}}\otimes|0\rangle\!\langle0|_{a_{1}}\otimes|0\rangle\!\langle0|_{a_{2}}\otimes O_{v_{2}h}(t),\label{eq:observable-all-sys-alg}\\
O_{v_{2}h}(t) & \equiv\left(e^{iG(\theta)t}G_{j}e^{-iG(\theta)t}\right)_{v_{2}h}.
\end{align}
Thus, the expectation of the measurement outcome is given by
\begin{equation}
\Tr\!\left[\left(X_{c}\otimes O_{v_{1}a_{1}a_{2}v_{2}h}\right)\left(\text{c-}F_{v_{1}v_{2}}\right)\left(|+\rangle\!\langle+|_{c}\otimes\omega_{v_{1}a_{1}a_{2}v_{2}h}\right)\left(\text{c-}F_{v_{1}v_{2}}\right)^{\dag}\right],
\end{equation}
where
\begin{equation}
\text{c-}F_{v_{1}v_{2}}\coloneqq|0\rangle\!\langle0|_{c}\otimes I_{v_{1}v_{2}}+|1\rangle\!\langle1|_{c}\otimes F_{v_{1}v_{2}}
\end{equation}
denotes the controlled swap unitary.

Let us now explicitly evaluate the measurement expectation value and
show that it approximately yields the desired term term $\left\langle G_{j}\right\rangle _{\Sigma_{v\to vh}^{\theta}(\rho)}$
in \eqref{eq:gradient-fully-QBM}. Consider that
\begin{align}
 & \left(\alpha_{1}\alpha_{2}\right)^{2}\Tr\!\left[\left(X_{c}\otimes O_{v_{1}a_{1}a_{2}v_{2}h}\right)\left(\text{c-}F_{v_{1}v_{2}}\right)\left(|+\rangle\!\langle+|_{c}\otimes\omega_{v_{1}a_{1}a_{2}v_{2}h}\right)\left(\text{c-}F_{v_{1}v_{2}}\right)^{\dag}\right]\nonumber \\
 & \overset{(a)}{=}\frac{\kappa}{2}\Tr\!\left[\left\{ |0\rangle\!\langle0|_{a_{1}}\otimes|0\rangle\!\langle0|_{a_{2}}\otimes O_{v_{2}h}(t),F_{v_{1}v_{2}}\right\} \omega_{v_{1}a_{1}a_{2}v_{2}h}\right]\\
 & =\frac{\kappa}{2}\Tr\!\left[\left\{ O_{v_{2}h}(t),F_{v_{1}v_{2}}\right\} \left(\langle0|_{a_{1}}\otimes\langle0|_{a_{2}}\right)\omega_{v_{1}a_{1}a_{2}v_{2}h}\left(|0\rangle_{a_{1}}\otimes|0\rangle_{a_{2}}\right)\right]\\
 & \overset{(b)}{\approx}\frac{1}{2}\Tr\!\left[\left\{ O_{v_{2}h}(t),F_{v_{1}v_{2}}\right\} \left(\sigma_{v}^{-\frac{1}{2}}\sigma_{v}^{-\frac{is}{2}}\rho_{v_{1}}\sigma_{v}^{\frac{is}{2}}\sigma_{v}^{-\frac{1}{2}}\otimes\sigma_{v_{2}h}\right)\right]\\
 & =\frac{1}{2}\Tr\!\left[F_{v_{1}v_{2}}O_{v_{2}h}(t)\left(\sigma_{v}^{-\frac{1}{2}}\sigma_{v}^{-\frac{is}{2}}\rho_{v_{1}}\sigma_{v}^{\frac{is}{2}}\sigma_{v}^{-\frac{1}{2}}\otimes\sigma_{v_{2}h}\right)\right]\nonumber \\
 & \qquad+\frac{1}{2}\Tr\!\left[O_{v_{2}h}(t)F_{v_{1}v_{2}}\left(\sigma_{v}^{-\frac{1}{2}}\sigma_{v}^{-\frac{is}{2}}\rho_{v_{1}}\sigma_{v}^{\frac{is}{2}}\sigma_{v}^{-\frac{1}{2}}\otimes\sigma_{v_{2}h}\right)\right]\\
 & =\frac{1}{2}\Tr\!\left[F_{v_{1}v_{2}}\left(\sigma_{v}^{-\frac{1}{2}}\sigma_{v}^{-\frac{is}{2}}\rho_{v_{1}}\sigma_{v}^{\frac{is}{2}}\sigma_{v}^{-\frac{1}{2}}\otimes O_{v_{2}h}(t)\sigma_{v_{2}h}\right)\right]\nonumber \\
 & \qquad+\frac{1}{2}\Tr\!\left[F_{v_{1}v_{2}}\left(\sigma_{v}^{-\frac{1}{2}}\sigma_{v}^{-\frac{is}{2}}\rho_{v_{1}}\sigma_{v}^{\frac{is}{2}}\sigma_{v}^{-\frac{1}{2}}\otimes\sigma_{v_{2}h}O_{v_{2}h}(t)\right)\right]\\
 & \overset{(c)}{=}\frac{1}{2}\Tr\!\left[\sigma_{v}^{-\frac{1}{2}}\sigma_{v}^{-\frac{is}{2}}\rho_{v}\sigma_{v}^{\frac{is}{2}}\sigma_{v}^{-\frac{1}{2}}O_{v_{2}h}(t)\sigma_{vh}\right]\nonumber \\
 & \qquad+\frac{1}{2}\Tr\!\left[\sigma_{v}^{-\frac{1}{2}}\sigma_{v}^{-\frac{is}{2}}\rho_{v}\sigma_{v}^{\frac{is}{2}}\sigma_{v}^{-\frac{1}{2}}\sigma_{vh}O_{v_{2}h}(t)\right]\\
 & =\frac{1}{2}\Tr\!\left[e^{iG(\theta)t}G_{j}e^{-iG(\theta)t}\left\{ \sigma_{vh},\sigma_{v}^{-\frac{1}{2}}\sigma_{v}^{-\frac{is}{2}}\rho_{v}\sigma_{v}^{\frac{is}{2}}\sigma_{v}^{-\frac{1}{2}}\otimes I_{h}\right\} \right].
\end{align}
The equality $(a)$ follows from an analysis similar to that in \cite[Eqs.~(B1)--(B10)]{Minervini2025}.
The approximation $(b)$ follows from \eqref{eq:QSVT-approx-1}--\eqref{eq:QSVT-approx-2};
i.e., 
\begin{align}
 & \left(\alpha_{1}\alpha_{2}\right)^{2}\Tr_{a_{1}a_{2}}\!\left[\left(|0\rangle\!\langle0|_{a_{1}}\otimes|0\rangle\!\langle0|_{a_{2}}\right)\omega_{v_{1}a_{1}a_{2}v_{2}h}\right]\nonumber \\
 & =\left(\alpha_{1}\alpha_{2}\right)^{2}\left(\langle0|_{a_{1}}\otimes\langle0|_{a_{2}}\right)\omega_{v_{1}a_{1}a_{2}v_{2}h}\left(|0\rangle_{a_{1}}\otimes|0\rangle_{a_{2}}\right)\\
 & \approx\sigma_{v}^{-\frac{1}{2}}\sigma_{v}^{-\frac{is}{2}}\rho_{v_{1}}\sigma_{v}^{-\frac{is}{2}}\sigma_{v}^{-\frac{1}{2}}\otimes\sigma_{v_{2}h}.
\end{align}
The equality $(c)$ follows from the swap-trick identity $\Tr[F(X\otimes Y)]=\Tr[XY]$.

Thus, the quantum circuit in Figure~\ref{fig:q-circuit-gradient-est}
produces an estimate of
\begin{equation}
\frac{1}{2}\Tr\!\left[e^{iG(\theta)t}G_{j}e^{-iG(\theta)t}\left\{ \sigma_{vh},\sigma_{v}^{-\frac{1}{2}}\sigma_{v}^{-\frac{is}{2}}\rho_{v}\sigma_{v}^{\frac{is}{2}}\sigma_{v}^{-\frac{1}{2}}\otimes I_{h}\right\} \right],
\end{equation}
which is the desired quantity $\left\langle G_{j}\right\rangle _{\Sigma_{v\to vh}^{\theta}(\rho)}$
after taking the expectation over $s$ and $t$. This completes the
derivation that the circuit in Figure~\ref{fig:q-circuit-gradient-est}
estimates the term $\left\langle G_{j}\right\rangle _{\Phi_{vh}^{\theta}\!\left(\eta(\theta)\right)}$
in \eqref{eq:gradient-fully-QBM}.
\begin{rem}
Algorithm~\ref{alg:estimate-grad-QBM} assumes access to a unitary
$U_{\psi}$ that prepares a purification of the state $\sigma_{v}(\theta)$,
in order to realize a block-encoding of $\sigma_{v}(\theta)$. Realizing
this unitary relies on quantum algorithms for thermal state preparation
(see, e.g., \cite{Chen2025}). As indicated above, one can still arrive
at a block-encoding of $\sigma_{v}(\theta)$ even if one just has
sample access to $\sigma_{v}(\theta)$, by employing \cite[Corollary~21]{Gilyen2022a}
(see also \cite[Lemma~2.21]{Wang2025}). In doing so, extra error
terms will be introduced that can be accounted for.
\end{rem}

\subsection{Application to restricted quantum Boltzmann machines}

\label{subsec:Application-to-restricted-QBMs}In this brief section,
I introduce a model of restricted quantum Boltzmann machines (rQBMs)
that generalizes that put forward in \cite[Section~2.1]{Wiebe2019},
in the sense that there are no commuting restrictions imposed. Furthermore,
I show what the gradient terms in Theorem~\ref{thm:q-state-learning-gradient-vh-q}
evaluate to for rQBMs. Indeed, there is not much of a distinction
with the general case presented in Theorem~\ref{thm:q-state-learning-gradient-vh-q},
but I still list the terms for completeness. 

Let us begin by defining a general model of rQBMs. Let $m,n\in\mathbb{N}$.
Let $\theta\equiv\left(a,b,w\right)$, where $a\in\mathbb{R}^{m}$,
$b\in\mathbb{R}^{n}$, and $w\in\mathbb{R}^{m\times n}$. Let $\left(V_{i}\right)_{i=1}^{m}$
be a tuple of Hermitian operators, and let $\left(H_{j}\right)_{j=1}^{n}$
be a tuple of Hermitian operators. Then the rQBM Hamiltonian $G(\theta)$
is defined as
\begin{equation}
G(\theta)\coloneqq\sum_{i=1}^{m}a_{i}V_{i}\otimes I+I\otimes\sum_{j=1}^{n}b_{j}H_{j}+\sum_{i=1}^{m}\sum_{j=1}^{n}w_{i,j}V_{i}\otimes H_{j}.\label{eq:rQBM-def}
\end{equation}
The thermal state is defined as
\begin{align}
\sigma_{vh}(\theta) & \coloneqq\frac{e^{-G(\theta)}}{Z(\theta)},\\
Z(\theta) & \coloneqq\Tr[e^{-G(\theta)}].
\end{align}

\begin{thm}
For the rQBM model defined in \eqref{eq:rQBM-def}, the partial derivatives
of the quantum relative entropy $D(\rho\|\sigma_{v}(\theta))$ are
as follows:
\begin{align}
\frac{\partial}{\partial a_{i}}D(\rho\|\sigma_{v}(\theta)) & =\left\langle V_{i}\right\rangle _{(\Tr_{h}\circ\Sigma_{v\to vh}^{\theta})\left(\rho\right)}-\left\langle V_{i}\right\rangle _{\sigma_{v}(\theta)},\\
\frac{\partial}{\partial b_{j}}D(\rho\|\sigma_{v}(\theta)) & =\left\langle H_{j}\right\rangle _{(\Tr_{v}\circ\Sigma_{v\to vh}^{\theta})\left(\rho\right)}-\left\langle H_{j}\right\rangle _{\sigma_{h}(\theta)},\\
\frac{\partial}{\partial w_{i,j}}D(\rho\|\sigma_{v}(\theta)) & =\left\langle V_{i}\otimes H_{j}\right\rangle _{\Sigma_{v\to vh}^{\theta}\!\left(\rho\right)}-\left\langle V_{i}\otimes H_{j}\right\rangle _{\sigma_{vh}(\theta)},
\end{align}
where the HPTP map $\Sigma_{v\to vh}^{\theta}$ is defined as in \eqref{eq:herm-pres-map-Sigma},
but with respect to the Hamiltonian $G(\theta)$ in \eqref{eq:rQBM-def}.
\end{thm}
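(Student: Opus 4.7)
The plan is to apply Theorem~\ref{thm:q-state-learning-gradient-vh-q} directly to the restricted Hamiltonian in \eqref{eq:rQBM-def}, and then use the tensor-product structure of each Hamiltonian term to simplify the expectation values into the form claimed. The restricted Hamiltonian $G(\theta)$ is already in the general parameterized form of \eqref{eq:gen-param-ham}, with the parameter vector $\theta$ split into the blocks $a$, $b$, and $w$, and with corresponding operators $V_i\otimes I$, $I\otimes H_j$, and $V_i\otimes H_j$, respectively. Since the map $\Sigma_{v\to vh}^{\theta}$ depends only on $\sigma_{vh}(\theta)$ and $\sigma_v(\theta)$, it is defined with respect to the rQBM Hamiltonian without any additional modification; the only work left is to rewrite the traces.

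First, I would differentiate with respect to $a_i$. Theorem~\ref{thm:q-state-learning-gradient-vh-q} yields
\begin{equation}
\frac{\partial}{\partial a_i}D(\rho\|\sigma_v(\theta)) = \langle V_i\otimes I\rangle_{\Sigma_{v\to vh}^\theta(\rho)} - \langle V_i\otimes I\rangle_{\sigma_{vh}(\theta)}.
\end{equation}
Using the partial-trace identity $\Tr[(A_v\otimes I_h)\,X_{vh}] = \Tr[A_v\Tr_h[X_{vh}]]$ on both terms, together with $\Tr_h[\sigma_{vh}(\theta)] = \sigma_v(\theta)$, gives the first claim. The argument for $b_j$ is symmetric, using $\Tr[(I_v\otimes B_h)\,X_{vh}] = \Tr[B_h\Tr_v[X_{vh}]]$ and $\Tr_v[\sigma_{vh}(\theta)] = \sigma_h(\theta)$. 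For $w_{i,j}$, the associated operator is already $V_i\otimes H_j$, so no partial trace is needed and the formula follows immediately from Theorem~\ref{thm:q-state-learning-gradient-vh-q}.

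There is essentially no obstacle here, as the claim is a direct specialization of Theorem~\ref{thm:q-state-learning-gradient-vh-q}. The only subtlety worth flagging is that the HPTP nature of $\Sigma_{v\to vh}^\theta$ means $\Sigma_{v\to vh}^\theta(\rho)$ need not be a quantum state, but it is Hermitian and trace-one, so the expressions $(\Tr_h\circ\Sigma_{v\to vh}^\theta)(\rho)$ and $(\Tr_v\circ\Sigma_{v\to vh}^\theta)(\rho)$ are well defined as Hermitian trace-one operators and the expectation values $\langle V_i\rangle_{(\cdot)}$ and $\langle H_j\rangle_{(\cdot)}$ are legitimate real numbers. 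Beyond that, the proof is a one-line substitution plus the partial-trace identity applied three times.
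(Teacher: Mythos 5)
Your proposal is correct and matches the paper's intent exactly: the paper gives no separate proof for this theorem, presenting it as a direct specialization of Theorem~\ref{thm:q-state-learning-gradient-vh-q} to the Hamiltonian terms $V_i\otimes I$, $I\otimes H_j$, and $V_i\otimes H_j$, with the partial-trace identity handling the first two cases just as you describe.
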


The expressions above generalize those from \eqref{eq:rBMs-gradient-1}--\eqref{eq:rBMs-gradient-last},
for classical restricted Boltzmann machines.

\subsection{Consistency check: no hidden units}

\label{subsec:Consistency-check-no-HUs}In this brief section, I show
that the gradient formula in Theorem~\ref{thm:q-state-learning-gradient-vh-q}
is consistent with the formula reported in \cite[Eq.~(4)]{Kieferova2017}
whenever there are no hidden units (i.e., the system $h$ in \eqref{eq:def-QBM-state-vh}
is trivial). 
\begin{prop}
If there are no hidden units, then
\begin{align}
\frac{\partial}{\partial\theta_{j}}D(\rho\|\sigma_{v}(\theta)) & =\left\langle G_{j}\right\rangle _{\Sigma_{v}^{\theta}(\rho)}-\left\langle G_{j}\right\rangle _{\sigma_{v}(\theta)},\label{eq:no-hidden-unit-consistency-1}\\
 & =\left\langle G_{j}\right\rangle _{\rho}-\left\langle G_{j}\right\rangle _{\sigma_{v}(\theta)},\label{eq:no-hidden-unit-consistency-2}
\end{align}
where
\begin{align}
\Sigma_{v}^{\theta} & \coloneqq\Phi_{v}^{\theta}\circ\Xi_{v}\circ\Upsilon_{v}^{\theta},\label{eq:Sigma-map-no-hidden-units}\\
\Xi_{v}^{\theta}(R_{v}) & \coloneqq\frac{1}{2}\left\{ \sigma_{v}(\theta),\sigma_{v}(\theta)^{-\frac{1}{2}}R_{v}\sigma_{v}(\theta)^{-\frac{1}{2}}\otimes I_{h}\right\} ,\\
\sigma_{v}(\theta) & =\frac{e^{-G(\theta)}}{Z(\theta)}.
\end{align}
\end{prop}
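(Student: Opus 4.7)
The plan is to handle the two equalities separately, leveraging the fact that each has a very different and complementary derivation; together they also provide a nontrivial sanity check on the definition of the map $\Sigma_v^{\theta}$.

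For the first equality \eqref{eq:no-hidden-unit-consistency-1}, I would simply specialize Theorem~\ref{thm:q-state-learning-gradient-vh-q} to the case in which the hidden system $h$ is one-dimensional. In this limit $\sigma_{vh}(\theta)=\sigma_v(\theta)$, the identity $I_h$ becomes the scalar $1$, and the partial trace $\Tr_h$ acts trivially. Consequently, the channel $\Phi_{vh}^{\theta}$ reduces to the channel $\Phi_v^{\theta}$ (still generated by $G(\theta)$ but now acting only on $v$), and the HPTP map $\Sigma_{v\to vh}^{\theta}$ defined in \eqref{eq:herm-pres-map-Sigma} collapses exactly to the map $\Sigma_v^{\theta}$ in \eqref{eq:Sigma-map-no-hidden-units}. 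This step is essentially bookkeeping.

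For the second equality \eqref{eq:no-hidden-unit-consistency-2}, I would bypass $\Sigma_v^\theta$ entirely and compute the derivative directly from the definition of the quantum relative entropy. With no hidden units, the matrix logarithm is affine in $\theta$,
\begin{equation}
\ln \sigma_v(\theta) = -G(\theta) - \ln Z(\theta)\, I,
\end{equation}
so differentiation is immediate:
\begin{equation}
\frac{\partial}{\partial \theta_j} \ln \sigma_v(\theta) = -G_j + \langle G_j \rangle_{\sigma_v(\theta)}\, I,
\end{equation}
where I use the standard identity $\frac{\partial}{\partial \theta_j} \ln Z(\theta) = -\langle G_j \rangle_{\sigma_v(\theta)}$. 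Substituting this into $\frac{\partial}{\partial \theta_j} D(\rho\|\sigma_v(\theta)) = -\Tr[\rho\, \frac{\partial}{\partial \theta_j} \ln \sigma_v(\theta)]$ and using $\Tr[\rho]=1$ yields \eqref{eq:no-hidden-unit-consistency-2}.

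I do not anticipate any real obstacle, since both pieces are elementary once one recognizes that differentiating the logarithm is trivial in the absence of a partial trace. The most interesting aspect is conceptual: combining the two equalities yields the nontrivial identity $\langle G_j \rangle_{\Sigma_v^\theta(\rho)} = \langle G_j \rangle_\rho$, even though $G_j$ generically does not commute with $G(\theta)$ and so $\Sigma_v^\theta$ genuinely rotates $\rho$. A direct verification would require exploiting the commutation relation $\Phi_v^\theta(\sigma_v^a X \sigma_v^b) = \sigma_v^a \Phi_v^\theta(X) \sigma_v^b$ (which holds because $[\sigma_v,G(\theta)]=0$) together with the evenness of the probability densities $\beta$ and $\gamma$; the two-step argument above avoids that calculation entirely, which is precisely why it is the cleanest strategy.
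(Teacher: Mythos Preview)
Your proposal is correct. The treatment of the first equality matches the paper's: both simply observe that when $h$ is trivial the map $\Sigma_{v\to vh}^{\theta}$ collapses to $\Sigma_v^{\theta}$.

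For the second equality you take a genuinely different and more elementary route than the paper. You exploit the fact that, without a partial trace, $\ln\sigma_v(\theta)=-G(\theta)-\ln Z(\theta)\,I$ is affine in $\theta$, so the derivative of the relative entropy can be read off immediately; the identity $\langle G_j\rangle_{\Sigma_v^{\theta}(\rho)}=\langle G_j\rangle_{\rho}$ then follows as a corollary of the two separately established formulas. The paper instead proves $\langle G_j\rangle_{\Sigma_v^{\theta}(\rho)}=\langle G_j\rangle_{\rho}$ head-on: it unpacks $\Sigma_v^{\theta}$, rewrites the $\Phi_v^{\theta}$ and $\Upsilon_v^{\theta}$ pieces in their Duhamel and resolvent-integral forms \eqref{eq:duhamel-form} and \eqref{eq:deriv-matrix-log-standard}, and then carries out a spectral calculation showing that the divided differences of $\exp$ and $\ln$ cancel pairwise, i.e., that the Fr\'echet-derivative superoperators of the matrix exponential and logarithm are mutual inverses. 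Your argument is shorter and avoids any eigenbasis computation; the paper's argument is longer but makes explicit the structural reason the channels $\Phi_v^{\theta}$ and $\Upsilon_v^{\theta}$ together with the anticommutator undo one another, which is the mechanism you allude to but deliberately sidestep in your final paragraph.
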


\begin{proof}
If there are no hidden units, then observe that the HPTP map $\Sigma_{v\to vh}^{\theta}$
in \eqref{eq:herm-pres-map-Sigma} reduces to
\begin{equation}
\Sigma_{v}^{\theta}(R_{v})=\Phi_{v}^{\theta}\!\left(\left\{ \sigma_{v}(\theta),\sigma_{v}(\theta)^{-\frac{1}{2}}\Upsilon_{v}^{\theta}(R_{v})\sigma_{v}(\theta)^{-\frac{1}{2}}\right\} \right),
\end{equation}
as stated in \eqref{eq:Sigma-map-no-hidden-units}. To establish the
equality between \eqref{eq:no-hidden-unit-consistency-1} and \eqref{eq:no-hidden-unit-consistency-2},
it suffices to prove that $\left\langle G_{j}\right\rangle _{\Sigma_{v}^{\theta}(\rho)}=\left\langle G_{j}\right\rangle _{\rho}$.
To this end, consider that
\begin{align}
 & \left\langle G_{j}\right\rangle _{\Sigma_{v}^{\theta}(\rho)}\nonumber \\
 & =\Tr\!\left[G_{j}\frac{1}{2}\Phi_{v}^{\theta}\!\left(\left\{ \sigma_{v}(\theta),\sigma_{v}(\theta)^{-\frac{1}{2}}\Upsilon_{v}^{\theta}(\rho)\sigma_{v}(\theta)^{-\frac{1}{2}}\right\} \right)\right]\\
 & =\Tr\!\left[\frac{1}{2}\left\{ \Phi_{v}^{\theta}(G_{j}),\sigma_{v}(\theta)\right\} \sigma_{v}(\theta)^{-\frac{1}{2}}\Upsilon_{v}^{\theta}(\rho)\sigma_{v}(\theta)^{-\frac{1}{2}}\right]\\
 & =\frac{1}{Z(\theta)}\Tr\!\left[\frac{1}{2}\left\{ \Phi_{v}^{\theta}(G_{j}),e^{-G(\theta)}\right\} \sigma_{v}(\theta)^{-\frac{1}{2}}\Upsilon_{v}^{\theta}(\rho)\sigma_{v}(\theta)^{-\frac{1}{2}}\right]\\
 & \overset{(a)}{=}\frac{1}{Z(\theta)}\Tr\!\left[\begin{array}{c}
\int_{0}^{1}dt\,e^{-tG(\theta)}G_{j}e^{-\left(1-t\right)G(\theta)}\times\\
\int_{0}^{\infty}ds\,\left(\sigma_{v}(\theta)+sI\right)^{-1}\rho\left(\sigma_{v}(\theta)+sI\right)^{-1}
\end{array}\right]\\
 & =\int_{0}^{1}dt\,\int_{0}^{\infty}ds\,\Tr\!\left[\sigma_{v}(\theta)^{t}G_{j}\sigma_{v}(\theta)^{1-t}\left(\sigma_{v}(\theta)+sI\right)^{-1}\rho\left(\sigma_{v}(\theta)+sI\right)^{-1}\right]\\
 & =\int_{0}^{1}dt\,\int_{0}^{\infty}ds\,\Tr\!\left[G_{j}\sigma_{v}(\theta)^{1-t}\left(\sigma_{v}(\theta)+sI\right)^{-1}\rho\left(\sigma_{v}(\theta)+sI\right)^{-1}\sigma_{v}(\theta)^{t}\right]\\
 & \overset{(b)}{=}\Tr\!\left[G_{j}\rho\right].
\end{align}
The equality $(a)$ follows from \eqref{eq:duhamel-form}--\eqref{eq:duhamel-fourier}
and \eqref{eq:deriv-matrix-log-standard}--\eqref{eq:deriv-matrix-log-fourier}.
The equality $(b)$ can be understood as follows: the superoperator
formed from the derivative of the matrix exponential and the superoperator
formed from the derivative of the matrix logarithm, as given above,
are inverses of each other. Indeed, to see this, let $\sum_{k}\lambda_{k}\Pi_{k}$
be a spectral decomposition of $\sigma_{v}(\theta)$ and consider
that
\begin{align}
 & \Tr\!\left[G_{j}\sigma_{v}(\theta)^{1-t}\left(\sigma_{v}(\theta)+sI\right)^{-1}\rho\left(\sigma_{v}(\theta)+sI\right)^{-1}\sigma_{v}(\theta)^{t}\right]\nonumber \\
 & =\Tr\!\left[\begin{array}{c}
G_{j}\left(\sum_{j}\lambda_{j}\Pi_{j}\right)^{1-t}\left(\sum_{m}\lambda_{m}\Pi_{m}+sI\right)^{-1}\times\\
\rho\left(\sum_{\ell}\lambda_{\ell}\Pi_{\ell}+sI\right)^{-1}\left(\sum_{k}\lambda_{k}\Pi_{k}\right)^{t}
\end{array}\right]\\
 & =\Tr\!\left[\begin{array}{c}
G_{j}\left(\sum_{j}\lambda_{j}^{1-t}\Pi_{j}\right)\left(\sum_{m}\left(\lambda_{m}+s\right)^{-1}\Pi_{m}\right)\times\\
\rho\left(\sum_{\ell}\left(\lambda_{\ell}+s\right)^{-1}\Pi_{\ell}\right)\left(\sum_{k}\lambda_{k}^{t}\Pi_{k}\right)
\end{array}\right]\\
 & =\sum_{j,m,\ell,k}\lambda_{j}^{1-t}\left(\lambda_{m}+s\right)^{-1}\left(\lambda_{\ell}+s\right)^{-1}\lambda_{k}^{t}\Tr\!\left[G_{j}\Pi_{j}\Pi_{m}\rho\Pi_{\ell}\Pi_{k}\right]\\
 & =\sum_{j,\ell}\lambda_{j}^{1-t}\lambda_{\ell}^{t}\left(\lambda_{j}+s\right)^{-1}\left(\lambda_{\ell}+s\right)^{-1}\Tr\!\left[G_{j}\Pi_{j}\rho\Pi_{\ell}\right],
\end{align}
which implies that
\begin{align}
 & \int_{0}^{1}dt\,\int_{0}^{\infty}ds\,\Tr\!\left[G_{j}\sigma_{v}(\theta)^{1-t}\left(\sigma_{v}(\theta)+sI\right)^{-1}\rho\left(\sigma_{v}(\theta)+sI\right)^{-1}\sigma_{v}(\theta)^{t}\right]\nonumber \\
 & =\int_{0}^{1}dt\,\int_{0}^{\infty}ds\,\sum_{j,\ell}\lambda_{j}^{1-t}\lambda_{\ell}^{t}\left(\lambda_{j}+s\right)^{-1}\left(\lambda_{\ell}+s\right)^{-1}\Tr\!\left[G_{j}\Pi_{j}\rho\Pi_{\ell}\right]\\
 & =\sum_{j,\ell}\left(\int_{0}^{1}dt\,\lambda_{j}^{1-t}\lambda_{\ell}^{t}\right)\int_{0}^{\infty}ds\,\left(\lambda_{j}+s\right)^{-1}\left(\lambda_{\ell}+s\right)^{-1}\Tr\!\left[G_{j}\Pi_{j}\rho\Pi_{\ell}\right]\\
 & =\sum_{j,\ell:\lambda_{j}=\lambda_{\ell}}\Tr\!\left[G_{j}\Pi_{j}\rho\Pi_{\ell}\right]\nonumber \\
 & \qquad+\sum_{j,\ell:\lambda_{j}\neq\lambda_{\ell}}\left(\frac{\lambda_{j}-\lambda_{\ell}}{\ln\lambda_{j}-\ln\lambda_{\ell}}\right)\left(\frac{\ln\lambda_{j}-\ln\lambda_{\ell}}{\lambda_{j}-\lambda_{\ell}}\right)\Tr\!\left[G_{j}\Pi_{j}\rho\Pi_{\ell}\right]\\
 & =\sum_{j,\ell}\Tr\!\left[G_{j}\Pi_{j}\rho\Pi_{\ell}\right]\\
 & =\Tr\!\left[G_{j}\rho\right],
\end{align}
thus concluding the proof.
\end{proof}

\section{Quantum state learning using quantum--classical Boltzmann machines}

\label{sec:Quantum-state-learning-qc-QBMs}In this section, I consider
the quantum state learning task when using a quantum--classical Boltzmann
machine, i.e., one for which the visible system is quantum and the
hidden system is classical. Here I provide an analytical formula for
the gradient of the quantum relative entropy in quantum state learning,
when using quantum--classical quantum Boltzmann machines (Theorem~\ref{thm:qc-vh-gradient}).
Additionally, I briefly sketch a quantum algorithm for estimating
the gradient (Section~\ref{subsec:Quantum-algorithm-qc-BM}), which
is a special case of the quantum algorithm from Section~\ref{subsec:Q-algorithm-state-learning-QBM}.
I also show how the gradient formula simplifies when considering a
restricted quantum--classical Boltzmann machine (Section~\ref{subsec:Application-to-restricted-qc-QBMs}).

The most general form for a quantum--classical Boltzmann machine
involves a parameterized Hamiltonian $G(\theta)$ of the form in \eqref{eq:gen-param-ham},
but with each term $G_{j}$ having a classical hidden system, so that
it can be written as 
\begin{equation}
G_{j}\coloneqq\sum_{x}G_{v}^{j,x}\otimes|x\rangle\!\langle x|_{h},\label{eq:term-commuting-hidden}
\end{equation}
where $\left\{ |x\rangle\right\} _{x}$ is an orthonormal basis and
each $G_{v}^{j,x}$ is a Hermitian operator acting on the visible
system. This assumption implies that
\begin{align}
G(\theta) & =\sum_{j=1}^{J}\theta_{j}G_{j}\\
 & =\sum_{j=1}^{J}\theta_{j}\left(\sum_{x}G_{v}^{j,x}\otimes|x\rangle\!\langle x|_{h}\right),\\
 & =\sum_{x}G_{v}^{x}(\theta)\otimes|x\rangle\!\langle x|_{h},
\end{align}
where
\begin{align*}
G_{v}^{x}(\theta) & \coloneqq\sum_{j=1}^{J}\theta_{j}G_{v}^{j,x}.
\end{align*}
The thermal state in \eqref{eq:def-QBM-state-vh} then has the following
quantum--classical form:
\begin{equation}
\sigma_{vh}(\theta)=\sum_{x}p_{x}(\theta)\sigma_{v}^{x}(\theta)\otimes|x\rangle\!\langle x|_{h},\label{eq:qc-state}
\end{equation}
where
\begin{align}
\sigma_{v}^{x}(\theta) & \coloneqq\frac{e^{-G_{v}^{x}(\theta)}}{\Tr[e^{-G_{v}^{x}(\theta)}]},\label{eq:sigma-x-v-def}\\
p_{x}(\theta) & \coloneqq\frac{\Tr[e^{-G_{v}^{x}(\theta)}]}{\Tr[e^{-G(\theta)}]}=\frac{\Tr[e^{-G_{v}^{x}(\theta)}]}{\sum_{x}\Tr[e^{-G_{v}^{x}(\theta)}]}.\label{eq:p_x_theta_def}
\end{align}
Observe that the reduced state of the visible system is
\begin{equation}
\sigma_{v}(\theta)=\Tr_{h}\!\left[\sigma_{vh}(\theta)\right]=\sum_{x}p_{x}(\theta)\sigma_{v}^{x}(\theta).\label{eq:visible-state-qc}
\end{equation}

\subsection{Analytical formula for the gradient of quantum--classical Boltzmann
machines}

\label{subsec:Analytical-formula-for-gradient-qc-QBMs}Theorem~\ref{thm:qc-vh-gradient}
below provides an analytical expression for the gradient of \eqref{eq:q-rel-ent-def},
under the model assumptions presented in \eqref{eq:term-commuting-hidden}--\eqref{eq:visible-state-qc}.
This expression is also a quantum generalization of the expression
in \eqref{eq:classical-gradient}, reducing to it in the fully classical
case. The first term of \eqref{eq:gradient-qc-QBM} is expressed in
terms of the HPTP map $\rho\to\sum_{x}p_{x}(\theta)\Sigma_{v}^{\theta,x}(\rho)\otimes|x\rangle\!\langle x|_{h}$,
which is reminiscent of, yet distinct from, the rotated pretty good
instrument of \cite[Remark~5.7]{Wilde2015}. Each term in \eqref{eq:gradient-qc-QBM}
can be estimated by first sampling $x$ from $p_{x}(\theta)$, and
then estimating $\langle G_{v}^{j,x}\rangle_{\Sigma_{v}^{\theta,x}(\rho)}$
and $\langle G_{v}^{j,x}\rangle_{\sigma_{v}^{x}(\theta)}$.
\begin{thm}
\label{thm:qc-vh-gradient}Let $\rho$ be a target quantum state,
and let $\sigma_{v}(\theta)$ be a model state of the form in \eqref{eq:visible-state-qc}.
The partial derivatives of $D(\rho\|\sigma_{v}(\theta))$ are as follows:
\begin{equation}
\frac{\partial}{\partial\theta_{j}}D(\rho\|\sigma_{v}(\theta))=\sum_{x}p_{x}(\theta)\left\langle G_{v}^{j,x}\right\rangle _{\Sigma_{v}^{\theta,x}(\rho)}-\sum_{x}p_{x}(\theta)\left\langle G_{v}^{j,x}\right\rangle _{\sigma_{v}^{x}(\theta)},\label{eq:gradient-qc-QBM}
\end{equation}
where
\begin{align}
\Sigma_{v}^{\theta,x} & \coloneqq\Phi_{v}^{\theta,x}\circ\Xi_{v}^{\theta,x}\circ\Upsilon_{v}^{\theta},\label{eq:Sigma-map-x-dependent}\\
\Xi_{v}^{\theta,x}(R_{v}) & \coloneqq\frac{1}{2}\left\{ \sigma_{v}^{x}(\theta),\sigma_{v}(\theta)^{-\frac{1}{2}}R_{v}\sigma_{v}(\theta)^{-\frac{1}{2}}\right\} ,\label{eq:Xi-map-x-dependent}\\
\Phi_{v}^{\theta,x}(Y_{v}) & \coloneqq\int_{-\infty}^{\infty}dt\,\gamma(t)\,e^{-iG_{v}^{x}(\theta)t}Y_{v}e^{iG_{v}^{x}(\theta)t},\\
\Upsilon_{v}^{\theta}(X_{v}) & \coloneqq\int_{-\infty}^{\infty}dt\,\beta(t)\,\sigma_{v}(\theta)^{-\frac{it}{2}}X_{v}\sigma_{v}(\theta)^{\frac{it}{2}}.
\end{align}
\end{thm}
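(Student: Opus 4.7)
The plan is to derive Theorem~\ref{thm:qc-vh-gradient} by specializing Theorem~\ref{thm:q-state-learning-gradient-vh-q} to the quantum--classical setting defined in \eqref{eq:term-commuting-hidden}--\eqref{eq:visible-state-qc}. Since the Hamiltonian $G(\theta)$, each term $G_j$, and the thermal state $\sigma_{vh}(\theta)$ are all simultaneously block diagonal in the $\{|x\rangle_h\}$ basis, each of the three constituent maps in the HPTP map $\Sigma_{v\to vh}^{\theta} = \Phi_{vh}^{\theta}\circ\Xi_{v\to vh}^{\theta}\circ\Upsilon_v^{\theta}$ should decompose block by block, and then evaluating the expectation of $G_j$ on the result should pick up one term per $x$ weighted by $p_x(\theta)$. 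The goal is to show that this reduction produces exactly the claimed formula.

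First I would handle the easier piece, the subtracted term $\langle G_j\rangle_{\sigma_{vh}(\theta)}$ of \eqref{eq:gradient-fully-QBM}: substituting \eqref{eq:term-commuting-hidden} and \eqref{eq:qc-state} and computing the partial trace over $h$ gives directly
\begin{equation}
\langle G_j\rangle_{\sigma_{vh}(\theta)} = \sum_x p_x(\theta)\,\Tr\!\left[G_v^{j,x}\,\sigma_v^x(\theta)\right] = \sum_x p_x(\theta)\,\langle G_v^{j,x}\rangle_{\sigma_v^x(\theta)},
\end{equation}
which matches the second term in \eqref{eq:gradient-qc-QBM}.

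Next I would attack the first term $\langle G_j\rangle_{\Sigma_{v\to vh}^{\theta}(\rho)}$, which is where the structural content lies. The key observation is that $\Upsilon_v^{\theta}$ acts purely on $v$ and the operator $\sigma_v(\theta)^{-1/2}\Upsilon_v^{\theta}(\rho)\sigma_v(\theta)^{-1/2}\otimes I_h$ is block constant in $x$, so the anticommutator in $\Xi_{v\to vh}^{\theta}$ factors through the block decomposition of $\sigma_{vh}(\theta)$, yielding a block-diagonal operator whose $x$-th block equals $p_x(\theta)\,\Xi_v^{\theta,x}(\Upsilon_v^{\theta}(\rho))$ with $\Xi_v^{\theta,x}$ as in \eqref{eq:Xi-map-x-dependent}. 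Because $e^{\pm iG(\theta)t} = \sum_x e^{\pm iG_v^x(\theta)t}\otimes|x\rangle\!\langle x|_h$, the channel $\Phi_{vh}^{\theta}$ preserves this block structure and acts on each block by $\Phi_v^{\theta,x}$. Composing, one obtains
\begin{equation}
\Sigma_{v\to vh}^{\theta}(\rho) = \sum_x p_x(\theta)\,\Sigma_v^{\theta,x}(\rho)\otimes|x\rangle\!\langle x|_h,
\end{equation}
with $\Sigma_v^{\theta,x}$ as in \eqref{eq:Sigma-map-x-dependent}. Pairing against $G_j = \sum_x G_v^{j,x}\otimes|x\rangle\!\langle x|_h$ then yields the first term in \eqref{eq:gradient-qc-QBM}.

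The main obstacle, such as it is, is the bookkeeping in the middle step: verifying that the anticommutator $\tfrac{1}{2}\{\sigma_{vh}(\theta),\,\sigma_v(\theta)^{-1/2}\Upsilon_v^{\theta}(\rho)\sigma_v(\theta)^{-1/2}\otimes I_h\}$ really does collapse into $\sum_x p_x(\theta)\,\Xi_v^{\theta,x}(\Upsilon_v^{\theta}(\rho))\otimes|x\rangle\!\langle x|_h$, and that the outer $\Phi_{vh}^{\theta}$ then also respects this decomposition. Once this commutativity with the block structure is in place, the rest of the argument is routine algebra, and no further derivative calculations are needed beyond those already carried out in the proof of Theorem~\ref{thm:q-state-learning-gradient-vh-q}.
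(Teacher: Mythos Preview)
Your proposal is correct and follows essentially the same route as the paper: specialize Theorem~\ref{thm:q-state-learning-gradient-vh-q} by exploiting the block-diagonal structure in the $\{|x\rangle_h\}$ basis to reduce $\Sigma_{v\to vh}^{\theta}(\rho)$ to $\sum_x p_x(\theta)\,\Sigma_v^{\theta,x}(\rho)\otimes|x\rangle\!\langle x|_h$, then pair with $G_j$ and handle the second term by direct computation. The paper carries out the same three-step decomposition of $\Xi_{v\to vh}^{\theta}$ and $\Phi_{vh}^{\theta}$ that you outline, just with the algebra written out more explicitly.
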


\begin{proof}
Let us first prove that the HPTP map $\Sigma_{v\to vh}^{\theta}$
in \eqref{eq:herm-pres-map-Sigma} reduces as follows under the model
assumptions in \eqref{eq:term-commuting-hidden}--\eqref{eq:visible-state-qc}:
\begin{align}
\Sigma_{v\to vh}^{\theta}(\rho) & =\sum_{x}p_{x}(\theta)\Sigma_{v}^{\theta,x}(\rho)\otimes|x\rangle\!\langle x|_{h}\\
 & =\sum_{x}p_{x}(\theta)\Phi_{v}^{\theta,x}\!\left(\frac{1}{2}\left\{ \sigma_{v}^{x}(\theta),\sigma_{v}(\theta)^{-\frac{1}{2}}\Upsilon_{v}^{\theta}(\rho)\sigma_{v}(\theta)^{-\frac{1}{2}}\right\} \right)\otimes|x\rangle\!\langle x|_{h},
\end{align}
so that it is reminiscent of, yet distinct from, the rotated pretty
good instrument from \cite[Remark~5.7]{Wilde2015}. To this end, consider
that
\begin{align}
 & \Sigma_{v\to vh}^{\theta}(\rho)\nonumber \\
 & =\Phi_{vh}^{\theta}\!\left(\frac{1}{2}\left\{ \sigma_{vh}(\theta),\sigma_{v}(\theta)^{-\frac{1}{2}}\Upsilon_{v}^{\theta}(\rho)\sigma_{v}(\theta)^{-\frac{1}{2}}\otimes I_{h}\right\} \right),\\
 & =\Phi_{vh}^{\theta}\!\left(\frac{1}{2}\left\{ \sum_{x}p_{x}(\theta)\sigma_{v}^{x}(\theta)\otimes|x\rangle\!\langle x|_{h},\sigma_{v}(\theta)^{-\frac{1}{2}}\Upsilon_{v}^{\theta}(\rho)\sigma_{v}(\theta)^{-\frac{1}{2}}\otimes I_{h}\right\} \right)\\
 & =\Phi_{vh}^{\theta}\!\left(\sum_{x}p_{x}(\theta)\frac{1}{2}\left\{ \sigma_{v}^{x}(\theta),\sigma_{v}(\theta)^{-\frac{1}{2}}\Upsilon_{v}^{\theta}(\rho)\sigma_{v}(\theta)^{-\frac{1}{2}}\right\} \otimes|x\rangle\!\langle x|_{h}\right)\\
 & =\int_{-\infty}^{\infty}dt\,\gamma(t)\:e^{-iG(\theta)t}\times\nonumber \\
 & \quad\left(\sum_{x}p_{x}(\theta)\frac{1}{2}\left\{ \sigma_{v}^{x}(\theta),\sigma_{v}(\theta)^{-\frac{1}{2}}\Upsilon_{v}^{\theta}(\rho)\sigma_{v}(\theta)^{-\frac{1}{2}}\right\} \otimes|x\rangle\!\langle x|_{h}\right)e^{iG(\theta)t}\\
 & =\int_{-\infty}^{\infty}dt\,\gamma(t)\:\left(\sum_{x'}e^{-iG_{v}^{x'}(\theta)t}\otimes|x'\rangle\!\langle x'|_{h}\right)\times\nonumber \\
 & \qquad\left(\sum_{x}p_{x}(\theta)\frac{1}{2}\left\{ \sigma_{v}^{x}(\theta),\sigma_{v}(\theta)^{-\frac{1}{2}}\Upsilon_{v}^{\theta}(\rho)\sigma_{v}(\theta)^{-\frac{1}{2}}\right\} \otimes|x\rangle\!\langle x|_{h}\right)\times\nonumber \\
 & \qquad\left(\sum_{x''}e^{iG_{v}^{x''}(\theta)t}\otimes|x''\rangle\!\langle x''|_{h}\right)\\
 & =\int_{-\infty}^{\infty}dt\,\gamma(t)\:\sum_{x}p_{x}(\theta)e^{-iG_{v}^{x}(\theta)t}\times\nonumber \\
 & \qquad\left(\frac{1}{2}\left\{ \sigma_{v}^{x}(\theta),\sigma_{v}(\theta)^{-\frac{1}{2}}\Upsilon_{v}^{\theta}(\rho)\sigma_{v}(\theta)^{-\frac{1}{2}}\right\} \right)e^{iG_{v}^{x}(\theta)t}\otimes|x\rangle\!\langle x|_{h}\\
 & =\sum_{x}p_{x}(\theta)\Phi_{v}^{\theta,x}\!\left(\frac{1}{2}\left\{ \sigma_{v}^{x}(\theta),\sigma_{v}(\theta)^{-\frac{1}{2}}\Upsilon_{v}^{\theta}(\rho)\sigma_{v}(\theta)^{-\frac{1}{2}}\right\} \right)\otimes|x\rangle\!\langle x|_{h}.
\end{align}

The remainder of the proof involves showing how the two terms in \eqref{eq:gradient-fully-QBM}
simplify under the model assumptions in \eqref{eq:term-commuting-hidden}--\eqref{eq:visible-state-qc}.
For the first term, consider that
\begin{align}
 & \left\langle G_{j}\right\rangle _{\Sigma_{v\to vh}^{\theta}\!\left(\rho\right)}\nonumber \\
 & =\Tr\!\left[\begin{array}{c}
\left(\sum_{x'}G_{v}^{j,x'}\otimes|x'\rangle\!\langle x'|_{h}\right)\times\\
\left(\sum_{x}p_{x}(\theta)\Phi_{v}^{\theta,x}\!\left(\frac{1}{2}\left\{ \sigma_{v}^{x}(\theta),\sigma_{v}(\theta)^{-\frac{1}{2}}\Upsilon_{v}^{\theta}(\rho)\sigma_{v}(\theta)^{-\frac{1}{2}}\right\} \right)\otimes|x\rangle\!\langle x|_{h}\right)
\end{array}\right]\\
 & =\sum_{x}p_{x}(\theta)\Tr\!\left[G_{v}^{j,x}\Phi_{v}^{\theta,x}\!\left(\frac{1}{2}\left\{ \sigma_{v}^{x}(\theta),\sigma_{v}(\theta)^{-\frac{1}{2}}\Upsilon_{v}^{\theta}(\rho)\sigma_{v}(\theta)^{-\frac{1}{2}}\right\} \right)\right]\\
 & =\sum_{x}p_{x}(\theta)\Tr\!\left[G_{v}^{j,x}\Sigma_{v}^{\theta,x}(\rho)\right]\\
 & =\sum_{x}p_{x}(\theta)\left\langle G_{v}^{j,x}\right\rangle _{\Sigma_{v}^{\theta,x}(\rho)},
\end{align}
where the map $\Sigma_{v}^{\theta,x}$ is defined in \eqref{eq:Sigma-map-x-dependent}.
For the second term, consider that
\begin{align}
\left\langle G_{j}\right\rangle _{\sigma_{vh}(\theta)} & =\Tr\!\left[\left(\sum_{x}G_{v}^{j,x}\otimes|x\rangle\!\langle x|_{h}\right)\left(\sum_{x'}p_{x'}(\theta)\sigma_{v}^{x'}(\theta)\otimes|x'\rangle\!\langle x'|_{h}\right)\right]\label{eq:reduction-2nd-term-qc-1}\\
 & =\sum_{x}p_{x}(\theta)\Tr\!\left[G_{v}^{j,x}\sigma_{v}^{x}(\theta)\right]\\
 & =\sum_{x}p_{x}(\theta)\left\langle G_{v}^{j,x}\right\rangle _{\sigma_{v}^{x}(\theta)},\label{eq:reduction-2nd-term-qc-last}
\end{align}
thus concluding the proof.
\end{proof}

\subsection{Quantum algorithm for estimating the gradient}

\label{subsec:Quantum-algorithm-qc-BM}In this brief subsection, I
sketch a quantum algorithm for estimating the gradient in \eqref{eq:gradient-qc-QBM}.
Here again I assume the ability to prepare thermal states, as is common
in theoretical work on quantum Boltzmann machines. In this case, I
assume the ability to prepare the quantum--classical state in \eqref{eq:qc-state}.
Estimating the second term $\sum_{x}p_{x}(\theta)\langle G_{v}^{j,x}\rangle_{\sigma_{v}^{x}(\theta)}$
in \eqref{eq:gradient-qc-QBM} then consists of preparing the state
$\sigma_{vh}(\theta)$ and measuring the observable $G_{v}^{j,x}$
if the value $x$ is observed in the classical hidden system. Estimating
the first term $\sum_{x}p_{x}(\theta)\langle G_{v}^{j,x}\rangle_{\Sigma_{v}^{\theta,x}(\rho)}$
in \eqref{eq:gradient-qc-QBM} is possible by the same procedure described
in Algorithm \ref{alg:estimate-grad-QBM}: however, in this case,
the hidden system is classical, which simplifies the implementation,
so that the Hamiltonian evolution $e^{-iG_{v}^{x}(\theta)t}$ can
be realized conditioned on the value $x$ in the classical hidden
system, as well as the observable $G_{v}^{j,x}$ to be measured. The
query complexity to a unitary that prepares a purification of $\sigma_{v}(\theta)$
is the same as that given in \eqref{eq:number-trials-main-alg}.

\subsection{Application to restricted quantum--classical Boltzmann machines}

\label{subsec:Application-to-restricted-qc-QBMs}In this section,
I apply the results of Section~\ref{subsec:Analytical-formula-for-gradient-qc-QBMs}
to restricted quantum--classical Boltzmann machines, which is a model
first established in \cite[Section~2.1]{Wiebe2019}. Restricted quantum--classical
Boltzmann machines are of the form introduced in \eqref{eq:rQBM-def},
with the exception that the tuple $\left(H_{j}\right)_{j=1}^{n}$
forms a commuting tuple. Under this assumption, the Hamiltonian $G(\theta)$
in \eqref{eq:rQBM-def} takes on a particular form, as presented in
Lemma \ref{lem:rqcBM-Hamiltonian} below.
\begin{lem}
\label{lem:rqcBM-Hamiltonian}Suppose that $\left(H_{j}\right)_{j=1}^{n}$
forms a commuting tuple. Then it follows that
\begin{equation}
H_{j}=\sum_{x}h_{j,x}|x\rangle\!\langle x|,
\end{equation}
where $\left\{ |x\rangle\right\} _{x}$ is an orthonormal basis and
$h_{j,x}\in\mathbb{R}$ for all $j$ and $x$, and the Hamiltonian
$G(\theta)$ can be written as
\begin{equation}
G(\theta)=\sum_{x}G_{v}^{x}(\theta)\otimes|x\rangle\!\langle x|,
\end{equation}
where
\begin{equation}
G_{v}^{x}(\theta)\coloneqq\sum_{j=1}^{n}b_{j}h_{j,x}I+\sum_{i=1}^{m}\left(a_{i}+\sum_{j=1}^{n}w_{i,j}h_{j,x}\right)V_{i}.\label{eq:G_x-r-qc-BMs}
\end{equation}
\end{lem}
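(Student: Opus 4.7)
The plan is to invoke simultaneous diagonalization of the commuting Hermitian tuple $(H_j)_{j=1}^n$ to obtain a common eigenbasis, and then to substitute the resulting spectral decompositions into \eqref{eq:rQBM-def} and regroup terms so that the hidden register appears only through the rank-one projectors $|x\rangle\!\langle x|$.

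First, since $(H_j)_{j=1}^n$ is a family of pairwise commuting Hermitian operators on a finite-dimensional Hilbert space, the spectral theorem furnishes a single orthonormal basis $\{|x\rangle\}_x$ that simultaneously diagonalizes all of them. This yields the joint spectral decomposition $H_j = \sum_x h_{j,x}|x\rangle\!\langle x|$ with $h_{j,x}\in\mathbb{R}$, establishing the first half of the lemma.

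Next, I would substitute this decomposition into \eqref{eq:rQBM-def} and use the resolution of identity $I = \sum_x |x\rangle\!\langle x|$ on the hidden register to rewrite each of the three summands as a sum over $x$ of operators of the form (something on the visible system) $\otimes\, |x\rangle\!\langle x|$. Concretely, the $a$-term becomes $\sum_x \bigl(\sum_i a_i V_i\bigr)\otimes|x\rangle\!\langle x|$, the $b$-term becomes $\sum_x \bigl(\sum_j b_j h_{j,x}\bigr) I\otimes|x\rangle\!\langle x|$, and the $w$-term becomes $\sum_x \bigl(\sum_{i,j} w_{i,j} h_{j,x} V_i\bigr)\otimes|x\rangle\!\langle x|$. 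Collecting coefficients of $|x\rangle\!\langle x|$ yields exactly the expression in \eqref{eq:G_x-r-qc-BMs}.

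There is no substantive obstacle here; the argument is a direct algebraic manipulation once simultaneous diagonalization has been invoked. The only minor bookkeeping point is to separate the scalar $\sum_j b_j h_{j,x}$ (attached to $I$ on the visible system) from the $V_i$-dependent piece $a_i + \sum_j w_{i,j} h_{j,x}$, so that the final expression for $G_v^x(\theta)$ appears in precisely the form stated in the lemma.
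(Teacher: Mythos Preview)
Your proposal is correct and follows essentially the same approach as the paper: invoke simultaneous diagonalization of the commuting Hermitian operators $(H_j)_{j=1}^n$, substitute the resulting spectral decompositions and the resolution of the identity $I=\sum_x|x\rangle\!\langle x|$ into \eqref{eq:rQBM-def}, and regroup the three summands by the projectors $|x\rangle\!\langle x|$ to read off $G_v^x(\theta)$.
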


\begin{proof}
Consider that
\begin{align}
G(\theta) & =\sum_{i=1}^{m}a_{i}V_{i}\otimes I+I\otimes\sum_{j=1}^{n}b_{j}H_{j}+\sum_{i=1}^{m}\sum_{j=1}^{n}w_{i,j}V_{i}\otimes H_{j}\\
 & =\sum_{i=1}^{m}a_{i}V_{i}\otimes\sum_{x}|x\rangle\!\langle x|+I\otimes\sum_{j=1}^{n}b_{j}\left(\sum_{x}h_{j,x}|x\rangle\!\langle x|\right)\nonumber \\
 & \qquad+\sum_{i=1}^{m}\sum_{j=1}^{n}w_{i,j}V_{i}\otimes\left(\sum_{x}h_{j,x}|x\rangle\!\langle x|\right)\\
 & =\sum_{x}\sum_{i=1}^{m}a_{i}V_{i}\otimes|x\rangle\!\langle x|+\sum_{x}\sum_{j=1}^{n}b_{j}h_{j,x}I\otimes|x\rangle\!\langle x|\nonumber \\
 & \qquad+\sum_{x}\sum_{i=1}^{m}\sum_{j=1}^{n}w_{i,j}h_{j,x}V_{i}\otimes|x\rangle\!\langle x|\\
 & =\sum_{x}\left(\sum_{i=1}^{m}a_{i}V_{i}+\sum_{j=1}^{n}b_{j}h_{j,x}I+\sum_{i=1}^{m}\sum_{j=1}^{n}w_{i,j}h_{j,x}V_{i}\right)\otimes|x\rangle\!\langle x|\\
 & =\sum_{x}\left(\sum_{j=1}^{n}b_{j}h_{j,x}I+\sum_{i=1}^{m}\left(a_{i}+\sum_{j=1}^{n}w_{i,j}h_{j,x}\right)V_{i}\right)\otimes|x\rangle\!\langle x|\\
 & =\sum_{x}G_{v}^{x}(\theta)\otimes|x\rangle\!\langle x|,
\end{align}
thus concluding the proof.
\end{proof}
For each $x$, define the following thermal state:
\begin{align}
\sigma_{v}^{x}(\theta) & \coloneqq\frac{e^{-G_{v}^{x}(\theta)}}{Z_{x}(\theta)},\\
Z_{x}(\theta) & \coloneqq\Tr\!\left[e^{-G_{v}^{x}(\theta)}\right],
\end{align}
and observe that
\begin{align}
\sigma_{v}^{x}(\theta) & =\frac{e^{-\tilde{G}_{v}^{x}(\theta)}}{\tilde{Z}_{x}(\theta)},\label{eq:modified-sig-x-state-1}
\end{align}
where
\begin{align}
\tilde{G}_{v}^{x}(\theta) & \coloneqq\sum_{i=1}^{m}\left(a_{i}+\sum_{j=1}^{n}w_{i,j}h_{j,x}\right)V_{i},\\
\tilde{Z}_{x}(\theta) & \coloneqq\Tr\!\left[e^{-\tilde{G}_{v}^{x}(\theta)}\right],
\end{align}
so that $\sigma_{v}^{x}(\theta)$ has no dependence on the parameter
vector $b$. For $t\in\mathbb{R}$, observe that
\begin{align}
e^{-iG_{v}^{x}(\theta)t}Ye^{iG_{v}^{x}(\theta)t} & =e^{-i\tilde{G}_{v}^{x}(\theta)t}Ye^{i\tilde{G}_{v}^{x}(\theta)t},\label{eq:rqcBM-unitary-evol}
\end{align}
so that this unitary channel also has no dependence on $b$. As such,
both the thermal state $\sigma_{v}^{x}(\theta)$ and the related unitary
evolution in \eqref{eq:rqcBM-unitary-evol} simplify for restricted
quantum--classical Boltzmann machines, having no dependence on the
parameter vector $b$.

In the following theorem, I show what the gradient formula in \eqref{eq:gradient-qc-QBM}
evaluates to for restricted quantum--classical Boltzmann machines.
These expressions generalize those from \eqref{eq:rBMs-gradient-1}--\eqref{eq:rBMs-gradient-last}
for classical restricted Boltzmann machines.
\begin{thm}
The partial derivatives of $D(\rho\|\sigma_{v}(\theta))$ are as follows:
\begin{align}
\frac{\partial}{\partial a_{i}}D(\rho\|\sigma_{v}(\theta)) & =\sum_{x}p_{x}(\theta)\left\langle V_{i}\right\rangle _{\Sigma_{v}^{\theta,x}(\rho)}-\sum_{x}p_{x}(\theta)\left\langle V_{i}\right\rangle _{\sigma_{v}^{x}(\theta)}\label{eq:gradient-restricted-qc-BM-1}\\
\frac{\partial}{\partial b_{j}}D(\rho\|\sigma_{v}(\theta)) & =\sum_{x}q_{x}(\theta)h_{j,x}-\sum_{x}p_{x}(\theta)h_{j,x},\label{eq:gradient-restricted-qc-BM-2}\\
\frac{\partial}{\partial w_{i,j}}D(\rho\|\sigma_{v}(\theta)) & =\sum_{x}r_{x}h_{j,x}\left\langle V_{i}\right\rangle _{\Sigma_{v}^{\theta,x}(\rho)}-\sum_{x}p_{x}(\theta)h_{j,x}\left\langle V_{i}\right\rangle _{\sigma_{x}(\theta)},
\end{align}
where $\sigma_{x}(\theta)$ is given in \eqref{eq:modified-sig-x-state-1},
$p_{x}(\theta)$ is defined from \eqref{eq:p_x_theta_def} and \eqref{eq:G_x-r-qc-BMs},
$\Sigma_{v}^{\theta,x}$and $\Xi_{v}^{\theta,x}$ are as given in
\eqref{eq:Sigma-map-x-dependent} and \eqref{eq:Xi-map-x-dependent},
respectively, and the quantum channel $\Phi_{v}^{\theta,x}$ can be
simplified as follows:
\begin{equation}
\Phi_{v}^{\theta,x}(Y_{v})\coloneqq\int_{-\infty}^{\infty}dt\,\gamma(t)\,e^{-i\tilde{G}_{v}^{x}(\theta)t}Y_{v}e^{i\tilde{G}_{v}^{x}(\theta)t}.
\end{equation}
Furthermore, the probability distribution $q_{x}(\theta)$ is given
by
\begin{equation}
q_{x}(\theta)\coloneqq\Tr\!\left[\Lambda_{x}(\theta)\rho\right],
\end{equation}
where $\left(\Lambda_{x}(\theta)\right)_{x}$ is a positive operator-valued
measure (POVM), and each measurement operator $\Lambda_{x}$ is defined
as
\begin{equation}
\Lambda_{x}\coloneqq\Upsilon_{v}^{\theta}\!\left(\sigma_{v}(\theta)^{-\frac{1}{2}}p_{x}(\theta)\sigma_{v}^{x}(\theta)\sigma_{v}(\theta)^{-\frac{1}{2}}\right),
\label{eq:rotated-pretty-good}
\end{equation}
and the channel $\Upsilon_{v}^{\theta}$ is defined in~\eqref{eq:upsilon-channel}.
\end{thm}

\begin{proof}
These equalities follow in large part directly from plugging \eqref{eq:G_x-r-qc-BMs}
into \eqref{eq:gradient-qc-QBM} and simplifying. To understand where
the first term in \eqref{eq:gradient-restricted-qc-BM-2} comes from,
consider that
\begin{align}
\left\langle I\right\rangle _{\Sigma_{v}^{\theta,x}(\rho)} & =\Tr\!\left[\Sigma_{v}^{\theta,x}(\rho)\right]\\
 & =\Tr\!\left[\left(\Phi_{v}^{\theta,x}\circ\Xi_{v}^{\theta,x}\circ\Upsilon_{v}^{\theta}\right)(\rho)\right]\\
 & =\Tr\!\left[\left(\Xi_{v}^{\theta,x}\circ\Upsilon_{v}^{\theta}\right)(\rho)\right]\\
 & =\Tr\!\left[\frac{1}{2}\left\{ \sigma_{v}^{x}(\theta),\sigma_{v}(\theta)^{-\frac{1}{2}}\Upsilon_{v}^{\theta}(\rho)\sigma_{v}(\theta)^{-\frac{1}{2}}\right\} \right]\\
 & =\Tr\!\left[\sigma_{v}^{x}(\theta)\sigma_{v}(\theta)^{-\frac{1}{2}}\Upsilon_{v}^{\theta}(\rho)\sigma_{v}(\theta)^{-\frac{1}{2}}\right]\\
 & =\Tr\!\left[\Upsilon_{v}^{\theta}\!\left(\sigma_{v}(\theta)^{-\frac{1}{2}}\sigma_{v}^{x}(\theta)\sigma_{v}(\theta)^{-\frac{1}{2}}\right)\rho\right],
\end{align}
so that
\begin{align}
p_{x}(\theta)\left\langle I\right\rangle _{\Sigma_{v}^{\theta,x}(\rho)} & =p_{x}(\theta)\Tr\!\left[\Upsilon_{v}^{\theta}\!\left(\sigma_{v}(\theta)^{-\frac{1}{2}}\sigma_{v}^{x}(\theta)\sigma_{v}(\theta)^{-\frac{1}{2}}\right)\rho\right],\\
 & =\Tr\!\left[\Lambda_{x}(\theta)\rho\right].
\end{align}
The set $\left(\Lambda_{x}(\theta)\right)_{x}$ forms a legitimate
POVM, essentially a twirled version of the well known pretty good
measurement \cite{Belavkin75,Belavkin75a,HolPGM78,Hausladen93bach,Hausladen1994},
because
\begin{align}
\sum_{x}\Lambda_{x}(\theta) & =\sum_{x}\Upsilon_{v}^{\theta}\!\left(\sigma_{v}(\theta)^{-\frac{1}{2}}p_{x}(\theta)\sigma_{v}^{x}(\theta)\sigma_{v}(\theta)^{-\frac{1}{2}}\right)\\
 & =\Upsilon_{v}^{\theta}\!\left(\sigma_{v}(\theta)^{-\frac{1}{2}}\left(\sum_{x}p_{x}(\theta)\sigma_{v}^{x}(\theta)\right)\sigma_{v}(\theta)^{-\frac{1}{2}}\right)\\
 & =\Upsilon_{v}^{\theta}\!\left(\sigma_{v}(\theta)^{-\frac{1}{2}}\sigma_{v}(\theta)\sigma_{v}(\theta)^{-\frac{1}{2}}\right)\\
 & =\Upsilon_{v}^{\theta}\!\left(I\right)\\
 & =I,
\end{align}
thus concluding the proof.
\end{proof}

\begin{rem}
    I note here that the POVM  in \eqref{eq:rotated-pretty-good} was originally put forward in \cite[Remark~5.7]{Wilde2015} and \cite[Corollary~4.3]{Junge2018}. This decoder was rediscovered more recently in \cite{Beigi2024}.
\end{rem}

\section{Generative modeling using classical--quantum Boltzmann machines}

\label{sec:Generative-modeling-using-cq-BMs}In this section, I consider
the generative modeling task when using a classical--quantum Boltzmann
machine, i.e., one for which the visible system is classical and the
hidden system is quantum. This kind of model has been considered in
recent work on the theory of quantum Boltzmann machines \cite{Demidik2025,Kimura2025,Vishnu2025,Demidik2025a}.
Here I provide an analytical formula for the gradient of the quantum
relative entropy in quantum state learning, when using classical--quantum
Boltzmann machines (Theorem~\ref{thm:gradient-cq-BMs}), which can
be understood as a conditional version of the gradient formula in
\eqref{eq:no-hidden-unit-consistency-2}, the latter applying when
a quantum Boltzmann machine has visible units only. See also \cite{Demidik2025}
in this context. Additionally, I present a quantum algorithm for estimating
the gradient (Section~\ref{subsec:Quantum-algorithm-cq-BMs}), which
is simpler than the quantum algorithm from Section~\ref{subsec:Q-algorithm-state-learning-QBM}.
I also show how the formula simplifies when considering a restricted
classical--quantum Boltzmann machine (Section~\ref{subsec:Application-to-restricted-cq-BMs}).

In generative modeling, the target state is classical, i.e., diagonal
in an orthonormal basis $\left\{ |x\rangle\right\} _{x}$:
\begin{equation}
\rho=\sum_{x}r_{x}|x\rangle\!\langle x|_{v},\label{eq:diagonal-target-state}
\end{equation}
where $\left(r_{x}\right)_{x}$ is a probability distribution. 

The most general form for a classical--quantum Boltzmann machine
involves a parameterized Hamiltonian $G(\theta)$ of the form in \eqref{eq:gen-param-ham},
but with each term $G_{j}$ having a classical visible system, so
that it can be written as
\begin{equation}
G_{j}\coloneqq\sum_{x}|x\rangle\!\langle x|_{v}\otimes G_{h}^{j,x},\label{eq:term-commuting-visible}
\end{equation}
where each $G_{h}^{j,x}$ is a Hermitian operator acting on the hidden
system. This assumption implies that
\begin{align}
G(\theta) & =\sum_{j=1}^{J}\theta_{j}\left(\sum_{x}|x\rangle\!\langle x|_{v}\otimes G_{h}^{j,x}\right),\\
 & =\sum_{x}|x\rangle\!\langle x|_{v}\otimes G_{h}^{x}(\theta),
\end{align}
where
\begin{align*}
G_{h}^{x}(\theta) & \coloneqq\sum_{j=1}^{J}\theta_{j}G_{h}^{j,x}.
\end{align*}
The thermal state in \eqref{eq:def-QBM-state-vh} then has the following
classical--quantum form:
\begin{equation}
\sigma_{vh}(\theta)=\sum_{x}p_{x}(\theta)|x\rangle\!\langle x|_{v}\otimes\sigma_{h}^{x}(\theta),
\end{equation}
where
\begin{align}
\sigma_{h}^{x}(\theta) & \coloneqq\frac{e^{-G_{h}^{x}(\theta)}}{\Tr[e^{-G_{h}^{x}(\theta)}]},\\
p_{x}(\theta) & \coloneqq\frac{\Tr[e^{-G_{h}^{x}(\theta)}]}{\Tr[e^{-G(\theta)}]}=\frac{\Tr[e^{-G_{h}^{x}(\theta)}]}{\sum_{x}\Tr[e^{-G_{h}^{x}(\theta)}]}.\label{eq:prob-dist-visible-cq-BM}
\end{align}
Observe that the reduced state of the visible system is given by
\begin{equation}
\sigma_{v}(\theta)=\Tr_{h}\!\left[\sigma_{vh}(\theta)\right]=\sum_{x}p_{x}(\theta)|x\rangle\!\langle x|_{v}.\label{eq:diagonal-visible-state}
\end{equation}

\subsection{Analytical formula for the gradient of classical--quantum Boltzmann
machines}

\label{subsec:Analytical-formula-for-gradient-cq-QBMs}Theorem~\ref{thm:gradient-cq-BMs}
below provides an analytical expression for the gradient of \eqref{eq:q-rel-ent-def},
under the model assumptions presented in \eqref{eq:term-commuting-visible}--\eqref{eq:diagonal-visible-state}.
Given that both the target state and the reduced state of the visible
system is classical, the quantum relative entropy reduces to the classical
relative entropy:
\begin{equation}
D(\rho\|\sigma_{v}(\theta))=\sum_{x}r_{x}\ln\!\left(\frac{r_{x}}{p_{x}(\theta)}\right),
\end{equation}
where $r_{x}$ is the probability distribution in \eqref{eq:diagonal-target-state}
and $p_{x}(\theta)$ is given in \eqref{eq:prob-dist-visible-cq-BM}.
The gradient formula in \eqref{eq:gradient-cq-QBM} is a quantum generalization
of the expression in \eqref{eq:classical-gradient}, reducing to it
in the fully classical case. It can also be understood as a generalization
of the formula in \eqref{eq:no-hidden-unit-consistency-2} for the
gradient when there are visibly units only: in \eqref{eq:gradient-cq-QBM},
there is extra conditioning on the classical value $x$. Each term
in \eqref{eq:gradient-cq-QBM} can be estimated by first sampling
$x$ from either $r_{x}$ or $p_{x}(\theta)$, and then estimating
$\langle G_{v}^{j,x}\rangle_{\sigma_{v}^{x}(\theta)}$.
\begin{thm}
\label{thm:gradient-cq-BMs}Let $\rho$ be a diagonal target state
of the form in \eqref{eq:diagonal-target-state}, and let $\sigma_{v}(\theta)$
be a diagonal model state of the form in \eqref{eq:diagonal-visible-state}.
The partial derivatives of $D(\rho\|\sigma_{v}(\theta))$ are as follows:
\begin{equation}
\frac{\partial}{\partial\theta_{j}}D(\rho\|\sigma_{v}(\theta))=\sum_{x}r_{x}\left\langle G_{h}^{j,x}\right\rangle _{\sigma_{h}^{x}(\theta)}-\sum_{x}p_{x}(\theta)\left\langle G_{h}^{j,x}\right\rangle _{\sigma_{h}^{x}(\theta)}.\label{eq:gradient-cq-QBM}
\end{equation}
\end{thm}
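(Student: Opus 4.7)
The plan is to derive the claim by specializing Theorem~\ref{thm:q-state-learning-gradient-vh-q} to the classical--quantum structure in \eqref{eq:term-commuting-visible}--\eqref{eq:diagonal-visible-state}, which directly gives $\frac{\partial}{\partial\theta_j}D(\rho\|\sigma_v(\theta)) = \langle G_j\rangle_{\Sigma_{v\to vh}^\theta(\rho)} - \langle G_j\rangle_{\sigma_{vh}(\theta)}$; the entire task then reduces to simplifying these two expectations under the present assumptions. This mirrors the hint in the paper that it is instructive to see the reduction from the fully quantum case.

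First, I would simplify $\Sigma_{v\to vh}^\theta(\rho) = (\Phi_{vh}^\theta \circ \Xi_{v\to vh}^\theta \circ \Upsilon_v^\theta)(\rho)$ one map at a time. Because $\rho$ and $\sigma_v(\theta)$ are simultaneously diagonal in $\{|x\rangle\}_x$, they commute, so the modular-flow channel collapses: $\Upsilon_v^\theta(\rho) = \rho$. This yields $\sigma_v(\theta)^{-1/2} \Upsilon_v^\theta(\rho) \sigma_v(\theta)^{-1/2} = \sum_x (r_x/p_x(\theta)) |x\rangle\!\langle x|_v$. Next, since $\sigma_{vh}(\theta) = \sum_x p_x(\theta)|x\rangle\!\langle x|_v \otimes \sigma_h^x(\theta)$ is block diagonal on $v$, the anticommutator in $\Xi_{v\to vh}^\theta$ collapses to an ordinary product through the orthogonality of the $|x\rangle$'s, giving $\Xi_{v\to vh}^\theta(\Upsilon_v^\theta(\rho)) = \sum_x r_x |x\rangle\!\langle x|_v \otimes \sigma_h^x(\theta)$. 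Finally, since $G(\theta)$ is block diagonal with blocks $G_h^x(\theta)$, the unitary $e^{-iG(\theta)t}$ commutes with $|x\rangle\!\langle x|_v \otimes \sigma_h^x(\theta)$ for every $x$ and $t$ (because $\sigma_h^x(\theta)$ commutes with $G_h^x(\theta)$). Hence $\Phi_{vh}^\theta$ acts as the identity on this state, and $\Sigma_{v\to vh}^\theta(\rho) = \sum_x r_x |x\rangle\!\langle x|_v \otimes \sigma_h^x(\theta)$.

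The last step is to plug this simplification, together with $G_j = \sum_x |x\rangle\!\langle x|_v \otimes G_h^{j,x}$, into the two expectations. Orthogonality of $\{|x\rangle\}_x$ collapses each trace blockwise, yielding $\langle G_j\rangle_{\Sigma_{v\to vh}^\theta(\rho)} = \sum_x r_x \langle G_h^{j,x}\rangle_{\sigma_h^x(\theta)}$ and $\langle G_j\rangle_{\sigma_{vh}(\theta)} = \sum_x p_x(\theta) \langle G_h^{j,x}\rangle_{\sigma_h^x(\theta)}$, whose difference is exactly \eqref{eq:gradient-cq-QBM}.

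I do not foresee a serious obstacle: the work is bookkeeping to confirm that each of the three superoperators $\Upsilon_v^\theta$, $\Xi_{v\to vh}^\theta$, and $\Phi_{vh}^\theta$ preserves the classical--quantum block structure, so that the rotated-Petz-like map $\Sigma_{v\to vh}^\theta$ degenerates to classical conditioning on $x$ followed by the associated quantum Gibbs state $\sigma_h^x(\theta)$. A completely independent cross-check is available by observing that $D(\rho\|\sigma_v(\theta))$ reduces to the classical relative entropy $\sum_x r_x \ln(r_x/p_x(\theta))$ and then applying the classical chain rule together with the identity $\frac{\partial}{\partial\theta_j} \ln \Tr[e^{-G_h^x(\theta)}] = -\langle G_h^{j,x}\rangle_{\sigma_h^x(\theta)}$; this should yield the same expression and directly parallels the derivation reviewed in Section~\ref{subsec:Review-of-classical-BMs}.
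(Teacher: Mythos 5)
Your proposal is correct and follows essentially the same route as the paper: both reduce the claim to Theorem~\ref{thm:q-state-learning-gradient-vh-q} and exploit the simultaneous diagonality of $\rho$ and $\sigma_{v}(\theta)$ so that $\Upsilon_{v}^{\theta}$ acts trivially, the anticommutator in $\Xi_{v\to vh}^{\theta}$ collapses to a product, and the modular rotation $\Phi_{vh}^{\theta}$ is absorbed because $\sigma_{h}^{x}(\theta)$ commutes with $G_{h}^{x}(\theta)$. The only cosmetic difference is that you apply $\Phi_{vh}^{\theta}$ to the state while the paper applies it to the observable $G_{j}$, which is equivalent by the self-adjointness of that channel.
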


\begin{proof}
Consider that
\begin{align}
\left\langle G_{j}\right\rangle _{\sigma_{vh}(\theta)} & =\Tr\!\left[G_{j}\sigma_{vh}(\theta)\right]\label{eq:reduction-2nd-term-cq-1}\\
 & =\Tr\!\left[\left(\sum_{x}|x\rangle\!\langle x|_{v}\otimes G_{h}^{j,x}\right)\left(\sum_{x'}p_{x'}(\theta)|x'\rangle\!\langle x'|_{v}\otimes\sigma_{h}^{x'}(\theta)\right)\right]\\
 & =\sum_{x}p_{x}(\theta)\Tr\!\left[G_{h}^{j,x}\sigma_{h}^{x}(\theta)\right]\\
 & =\sum_{x}p_{x}(\theta)\left\langle G_{h}^{j,x}\right\rangle _{\sigma_{h}^{x}(\theta)}.\label{eq:reduction-2nd-term-cq-last}
\end{align}
Furthermore, observe that
\begin{equation}
\sigma_{v}(\theta)^{-\frac{1}{2}}\Upsilon_{v}^{\theta}(\rho)\sigma_{v}(\theta)^{-\frac{1}{2}}=\sum_{x}\frac{r_{x}}{p_{x}(\theta)}|x\rangle\!\langle x|_{v},
\end{equation}
implying that
\begin{align}
 & \Tr\!\left[\Phi_{vh}^{\theta}(G_{j})\frac{1}{2}\left\{ \sigma_{vh}(\theta),\sigma_{v}(\theta)^{-\frac{1}{2}}\Upsilon_{v}^{\theta}(\rho)\sigma_{v}(\theta)^{-\frac{1}{2}}\otimes I_{h}\right\} \right]\nonumber \\
 & =\Tr\!\left[\Phi_{vh}^{\theta}(G_{j})\frac{1}{2}\left\{ \sum_{x}p_{x}(\theta)|x\rangle\!\langle x|_{v}\otimes\sigma_{h}^{x}(\theta),\sum_{x}\frac{r_{x}}{p_{x}(\theta)}|x\rangle\!\langle x|_{v}\otimes I_{h}\right\} \right]\\
 & =\Tr\!\left[\Phi_{vh}^{\theta}\!\left(\sum_{x}|x\rangle\!\langle x|_{v}\otimes G_{h}^{j,x}\right)\left(\sum_{x}r_{x}|x\rangle\!\langle x|_{v}\otimes\sigma_{h}^{x}(\theta)\right)\right]\\
 & =\int_{-\infty}^{\infty}dt\,\gamma(t)\:\Tr\!\left[\begin{array}{c}
\left(\sum_{x}|x\rangle\!\langle x|_{v}\otimes e^{-iG_{h}^{x}(\theta)t}G_{h}^{j,x}e^{iG_{h}^{x}(\theta)t}\right)\times\\
\left(\sum_{x}r_{x}|x\rangle\!\langle x|_{v}\otimes\sigma_{h}^{x}(\theta)\right)
\end{array}\right]\\
 & =\int_{-\infty}^{\infty}dt\,\gamma(t)\:\sum_{x}r_{x}\Tr\!\left[e^{-iG_{h}^{x}(\theta)t}G_{h}^{j,x}e^{iG_{h}^{x}(\theta)t}\sigma_{h}^{x}(\theta)\right]\\
 & =\int_{-\infty}^{\infty}dt\,\gamma(t)\:\sum_{x}r_{x}\Tr\!\left[G_{h}^{j,x}e^{iG_{h}^{x}(\theta)t}\sigma_{h}^{x}(\theta)e^{-iG_{h}^{x}(\theta)t}\right]\\
 & =\sum_{x}r_{x}\Tr\!\left[G_{h}^{j,x}\sigma_{h}^{x}(\theta)\right]\\
 & =\sum_{x}r_{x}\left\langle G_{h}^{j,x}\right\rangle _{\sigma_{h}^{x}(\theta)},
\end{align}
thus concluding the proof. The penultimate equality follows because
$G_{h}^{x}(\theta)$ and $\sigma_{h}^{x}(\theta)$ commute.
\end{proof}

\subsection{Quantum algorithm for estimating the gradient}

\label{subsec:Quantum-algorithm-cq-BMs}In this case, an algorithm
for estimating the gradient in \eqref{eq:gradient-cq-QBM} is simpler
than that presented in Sections \ref{subsec:Q-algorithm-state-learning-QBM}
and \ref{subsec:Quantum-algorithm-qc-BM}. This follows because the
visible system is classical and due to the form of the gradient in
\eqref{eq:gradient-cq-QBM}. Indeed, one can estimate the first term
$\sum_{x}r_{x}\langle G_{h}^{j,x}\rangle_{\sigma_{h}^{x}(\theta)}$
in \eqref{eq:gradient-cq-QBM} by sampling $x$ from the target distribution
$r_{x}$, preparing the thermal state $\sigma_{h}^{x}(\theta)$, and
measuring the observable $G_{h}^{j,x}$. Similarly, the second term
$\sum_{x}p_{x}(\theta)\langle G_{h}^{j,x}\rangle_{\sigma_{h}^{x}(\theta)}$
in \eqref{eq:gradient-cq-QBM} can be estimated by sampling $x$ from
$p_{x}(\theta)$, preparing the thermal state $\sigma_{h}^{x}(\theta)$,
and measuring the observable $G_{h}^{j,x}$.

\subsection{Application to restricted classical--quantum Boltzmann machines}

\label{subsec:Application-to-restricted-cq-BMs}In this section, I
apply the results of Section~\ref{subsec:Analytical-formula-for-gradient-cq-QBMs}
to restricted classical--quantum Boltzmann machines, which is a model
introduced in \cite[Section~2.1]{Demidik2025} and called semi-quantum
restricted Boltzmann machines therein. Restricted classical--quantum
Boltzmann machines are of the form introduced in \eqref{eq:rQBM-def},
with the exception that the tuple $\left(V_{i}\right)_{i=1}^{m}$
forms a commuting tuple. Under this assumption, the Hamiltonian $G(\theta)$
in \eqref{eq:rQBM-def} takes on a particular form, as presented in
Lemma \ref{lem:rcqBM-Hamiltonian} below.
\begin{lem}
\label{lem:rcqBM-Hamiltonian}Suppose that $\left(V_{i}\right)_{i=1}^{m}$
forms a commuting tuple. Then it follows that
\begin{equation}
V_{i}=\sum_{x}v_{i,x}|x\rangle\!\langle x|,\label{eq:visible-rcqBM-form}
\end{equation}
where $\left\{ |x\rangle\right\} _{x}$ is an orthonormal basis and
$v_{i,x}\in\mathbb{R}$ for all $i$ and $x$, and the Hamiltonian
$G(\theta)$ can be written as
\begin{equation}
G(\theta)=\sum_{x}|x\rangle\!\langle x|\otimes G_{x}(\theta),
\end{equation}
where
\begin{equation}
G_{x}(\theta)\coloneqq\sum_{i=1}^{m}a_{i}v_{i,x}I+\sum_{j=1}^{n}\left(b_{j}+\sum_{i=1}^{m}w_{i,j}v_{i,x}\right)H_{j}.\label{eq:conditioned-Ham-visible-commuting}
\end{equation}
\end{lem}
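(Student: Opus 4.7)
The plan is to mirror the proof of Lemma \ref{lem:rqcBM-Hamiltonian}, exchanging the roles of the visible and hidden tuples. First, I would invoke the spectral theorem for commuting Hermitian operators: since $(V_i)_{i=1}^m$ is a tuple of Hermitian operators that pairwise commute, they are simultaneously diagonalizable in a common orthonormal eigenbasis $\{|x\rangle\}_x$, so that each $V_i$ admits the decomposition in \eqref{eq:visible-rcqBM-form} with real eigenvalues $v_{i,x}\in\mathbb{R}$. This is the only nonconstructive input to the proof; everything afterward is algebraic bookkeeping.

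Next, I would substitute $V_i=\sum_x v_{i,x}|x\rangle\!\langle x|$ and $I=\sum_x |x\rangle\!\langle x|$ into the rQBM Hamiltonian \eqref{eq:rQBM-def}, namely
\begin{equation}
G(\theta)=\sum_{i=1}^{m}a_{i}V_{i}\otimes I+I\otimes\sum_{j=1}^{n}b_{j}H_{j}+\sum_{i=1}^{m}\sum_{j=1}^{n}w_{i,j}V_{i}\otimes H_{j}.
\end{equation}
Pulling the sums over $x$ outside and using the orthogonality of $\{|x\rangle\!\langle x|\}_x$, each of the three terms takes the form $\sum_x |x\rangle\!\langle x|\otimes(\cdot)_x$, where the operator acting on the hidden system is, respectively, $\sum_i a_i v_{i,x} I$, $\sum_j b_j H_j$, and $\sum_{i,j} w_{i,j}v_{i,x}H_j$. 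Combining these three and collecting the coefficient of each $H_j$ immediately gives
\begin{equation}
G(\theta)=\sum_{x}|x\rangle\!\langle x|\otimes\!\left[\sum_{i=1}^{m}a_{i}v_{i,x}I+\sum_{j=1}^{n}\!\left(b_{j}+\sum_{i=1}^{m}w_{i,j}v_{i,x}\right)\!H_{j}\right],
\end{equation}
which matches the stated definition of $G_x(\theta)$ in \eqref{eq:conditioned-Ham-visible-commuting}.

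There is no real obstacle here: the argument is a direct transcription of the proof of Lemma \ref{lem:rqcBM-Hamiltonian} with the tensor factors swapped, and the only substantive step is the invocation of simultaneous diagonalizability. The one minor care point is keeping track of where the identity $I$ on the visible side is rewritten as $\sum_x |x\rangle\!\langle x|$ in order to factor the middle $I\otimes\sum_j b_j H_j$ term into the common $\sum_x |x\rangle\!\langle x|\otimes(\cdot)$ structure; this is what produces the additive $\sum_j b_j H_j$ contribution inside each $G_x(\theta)$, independent of $x$.
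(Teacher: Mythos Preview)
Your proposal is correct and follows essentially the same approach as the paper: substitute the simultaneous diagonalization \eqref{eq:visible-rcqBM-form} into the rQBM Hamiltonian \eqref{eq:rQBM-def}, expand $I=\sum_x|x\rangle\!\langle x|$ on the visible factor, and regroup in $x$. The paper's proof is exactly this algebraic expansion, just as you anticipated by analogy with Lemma~\ref{lem:rqcBM-Hamiltonian}.
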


\begin{proof}
By plugging \eqref{eq:visible-rcqBM-form} into \eqref{eq:rQBM-def},
consider that
\begin{align}
G(\theta) & =\sum_{i=1}^{m}a_{i}V_{i}\otimes I+I\otimes\sum_{j=1}^{n}b_{j}H_{j}+\sum_{i=1}^{m}\sum_{j=1}^{n}w_{i,j}V_{i}\otimes H_{j}\\
 & =\sum_{i=1}^{m}a_{i}\left(\sum_{x}v_{i,x}|x\rangle\!\langle x|\right)\otimes I+\sum_{x}|x\rangle\!\langle x|\otimes\sum_{j=1}^{n}b_{j}H_{j}\nonumber \\
 & \qquad+\sum_{i=1}^{m}\sum_{j=1}^{n}w_{i,j}\left(\sum_{x}v_{i,x}|x\rangle\!\langle x|\right)\otimes H_{j}\\
 & =\sum_{x}|x\rangle\!\langle x|\otimes\sum_{i=1}^{m}a_{i}v_{i,x}I+\sum_{x}|x\rangle\!\langle x|\otimes\sum_{j=1}^{n}b_{j}H_{j}\nonumber \\
 & \qquad+\sum_{x}|x\rangle\!\langle x|\otimes\sum_{i=1}^{m}\sum_{j=1}^{n}w_{i,j}v_{i,x}H_{j}\\
 & =\sum_{x}|x\rangle\!\langle x|\otimes\left(\sum_{i=1}^{m}a_{i}v_{i,x}I+\sum_{j=1}^{n}b_{j}H_{j}+\sum_{i=1}^{m}\sum_{j=1}^{n}w_{i,j}v_{i,x}H_{j}\right)\\
 & =\sum_{x}|x\rangle\!\langle x|\otimes\left(\sum_{i=1}^{m}a_{i}v_{i,x}I+\sum_{j=1}^{n}\left(b_{j}+\sum_{i=1}^{m}w_{i,j}v_{i,x}\right)H_{j}\right)\\
 & =\sum_{x}|x\rangle\!\langle x|\otimes G_{x}(\theta),
\end{align}
thus concluding the proof.
\end{proof}
For each $x$, define the following thermal state:
\begin{align}
\sigma_{h}^{x}(\theta) & \coloneqq\frac{e^{-G_{x}(\theta)}}{Z_{x}(\theta)},\\
Z_{x}(\theta) & \coloneqq\Tr\!\left[e^{-G_{x}(\theta)}\right],
\end{align}
and observe that
\begin{align}
\sigma_{h}^{x}(\theta) & =\frac{e^{-\tilde{G}_{h}^{x}(\theta)}}{\tilde{Z}_{x}(\theta)},\label{eq:modified-sig-x-state}
\end{align}
where
\begin{align}
\tilde{G}_{h}^{x}(\theta) & \coloneqq\sum_{j=1}^{n}\left(b_{j}+\sum_{i=1}^{m}w_{i,j}v_{i,x}\right)H_{j},\\
\tilde{Z}_{x}(\theta) & \coloneqq\Tr\!\left[e^{-\tilde{G}_{h}^{x}(\theta)}\right],
\end{align}
so that $\sigma_{h}^{x}(\theta)$ has no dependence on the parameter
vector $a$.

In the following theorem, I show what the gradient formula in \eqref{eq:gradient-qc-QBM}
evaluates to for restricted classical--quantum Boltzmann machines.
These expressions generalize those from \eqref{eq:rBMs-gradient-1}--\eqref{eq:rBMs-gradient-last},
for classical restricted Boltzmann machines.
\begin{thm}
The partial derivatives of $D(\rho\|\sigma_{v}(\theta))$ are as follows:
\begin{align}
\frac{\partial}{\partial a_{i}}D(\rho\|\sigma_{v}(\theta)) & =\sum_{x}r_{x}v_{i,x}-\sum_{x}p_{x}(\theta)v_{i,x},\label{eq:gradient-restricted-cq-QBM}\\
\frac{\partial}{\partial b_{j}}D(\rho\|\sigma_{v}(\theta)) & =\sum_{x}r_{x}\left\langle H_{j}\right\rangle _{\sigma_{h}^{x}(\theta)}-\sum_{x}p_{x}(\theta)\left\langle H_{j}\right\rangle _{\sigma_{h}^{x}(\theta)},\\
\frac{\partial}{\partial w_{i,j}}D(\rho\|\sigma_{v}(\theta)) & =\sum_{x}r_{x}v_{i,x}\left\langle H_{j}\right\rangle _{\sigma_{h}^{x}(\theta)}-\sum_{x}p_{x}(\theta)v_{i,x}\left\langle H_{j}\right\rangle _{\sigma_{h}^{x}(\theta)},
\end{align}
where $\sigma_{h}^{x}(\theta)$ is given in \eqref{eq:modified-sig-x-state}.
\end{thm}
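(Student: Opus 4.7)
The proof will proceed by direct specialization of Theorem~\ref{thm:gradient-cq-BMs} using the structural form of $G(\theta)$ provided by Lemma~\ref{lem:rcqBM-Hamiltonian}. Since the visible operators $V_i$ commute, the Hamiltonian already decomposes as $G(\theta) = \sum_x |x\rangle\!\langle x| \otimes G_x(\theta)$ with $G_x(\theta)$ given in \eqref{eq:conditioned-Ham-visible-commuting}, so the parameters $a_i$, $b_j$, $w_{i,j}$ each correspond to an explicit linear term in $G_x(\theta)$. Reading off the $\theta$-derivatives of $G_x(\theta)$ by inspection, the Hamiltonian components $G_h^{j,x}$ appearing in \eqref{eq:gradient-cq-QBM} are $v_{i,x} I$ (for parameter $a_i$), $H_j$ (for parameter $b_j$), and $v_{i,x} H_j$ (for parameter $w_{i,j}$).

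Next I would substitute each of these into the gradient formula \eqref{eq:gradient-cq-QBM}. For the $a_i$ derivative, the expectation $\langle v_{i,x} I\rangle_{\sigma_h^x(\theta)}$ collapses to the scalar $v_{i,x}$ because $\sigma_h^x(\theta)$ is normalized, which after summing against $r_x$ and $p_x(\theta)$ reproduces the first displayed equation. For the $b_j$ derivative, the expectation is $\langle H_j\rangle_{\sigma_h^x(\theta)}$ with no further simplification required, yielding the second equation directly. For the $w_{i,j}$ derivative, the scalar $v_{i,x}$ factors outside of the trace, leaving $v_{i,x}\langle H_j\rangle_{\sigma_h^x(\theta)}$ and hence the third equation.

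No serious obstacle arises: all the analytic content has already been absorbed into Theorem~\ref{thm:gradient-cq-BMs}, and the present claim is pure bookkeeping of which linear parameter is conjugate to which Hermitian operator in \eqref{eq:conditioned-Ham-visible-commuting}. Two consistency checks are worth flagging along the way. First, the $a_i$-derivative involves only the classical marginals $r_x$ and $p_x(\theta)$, matching the observation around \eqref{eq:modified-sig-x-state} that $\sigma_h^x(\theta)$ is independent of $a$ (the $a_i v_{i,x} I$ terms contribute only to the partition function, hence only to $p_x(\theta)$). Second, in the fully classical limit where each $H_j$ is diagonal in the same basis as $\sigma_h^x(\theta)$, the expectations $\langle H_j\rangle_{\sigma_h^x(\theta)}$ become ordinary conditional expectations, and the three equations reduce exactly to the classical restricted-Boltzmann-machine gradients \eqref{eq:rBMs-gradient-1}--\eqref{eq:rBMs-gradient-last}.
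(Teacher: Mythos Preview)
Your proposal is correct and follows exactly the same approach as the paper, which states only that the formulas follow by plugging \eqref{eq:conditioned-Ham-visible-commuting} into \eqref{eq:gradient-cq-QBM} and simplifying. Your write-up is in fact more detailed than the paper's one-line proof, correctly identifying the relevant $G_h^{j,x}$ for each parameter and carrying out the substitutions explicitly.
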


\begin{proof}
These formulas follow by plugging \eqref{eq:conditioned-Ham-visible-commuting}
into \eqref{eq:gradient-cq-QBM} and simplifying.
\end{proof}

\section{Extensions to Petz--Tsallis relative entropies}

\label{sec:Extensions-to-Petz=002013Tsallis}In this section, I show
how to extend all of the results of this paper to the case when the
objective function to minimize is the Petz--Tsallis relative quasi-entropy.
For $q\in\left(0,1\right)\cup(1,2]$ and for states $\rho$ and $\sigma$,
the Petz--Tsallis relative quasi-entropy is defined as
\begin{align}
D_{q}(\rho\|\sigma) & \coloneqq\frac{Q_{q}(\rho\|\sigma)-1}{q-1},\\
Q_{q}(\rho\|\sigma) & \coloneqq\Tr\!\left[\rho^{q}\sigma^{1-q}\right],
\end{align}
and it obeys the data-processing inequality for these values of $q$
\cite{Petz1985,Petz1986}. Note that the Petz--Tsallis relative entropy
converges to the quantum relative entropy in the $q\to1$ limit:
\begin{equation}
D(\rho\|\sigma)=\lim_{q\to1}D_{q}(\rho\|\sigma).
\end{equation}
Consistent with this limit, all of the formulas that follow converge
to prior formulas presented in this paper in the $q\to1$ limit.

\subsection{Quantum Boltzmann machines}

The following theorem generalizes the gradient formula in \eqref{eq:gradient-fully-QBM}
for quantum Boltzmann machines to the case when the objective function
is the Petz--Tsallis relative entropy. The main distinction with
the formula in \eqref{eq:gradient-fully-QBM} is that various powers
involving the parameter $q$ appear in the expression in \eqref{eq:gradient-fully-QBM-petz-tsallis}
and the map $\Sigma_{v\to vh}^{\theta,q}$ in \eqref{eq:herm-pres-map-Sigma-petz-tsalis}
is Hermiticity preserving, but no longer trace preserving.
\begin{thm}
For all $q\in\left(0,1\right)\cup(1,2]$, the partial derivatives
of $D_{q}(\rho\|\sigma)$ are as follows:
\begin{equation}
\frac{\partial}{\partial\theta_{j}}D_{q}(\rho\|\sigma_{v}(\theta))=\left\langle G_{j}\right\rangle _{\Sigma_{v\to vh}^{\theta,q}(\rho^{q})}-Q_{q}(\rho\|\sigma_{v}(\theta))\left\langle G_{j}\right\rangle _{\sigma_{vh}(\theta)},\label{eq:gradient-fully-QBM-petz-tsallis}
\end{equation}
where 
\begin{align}
\Sigma_{v\to vh}^{\theta,q} & \coloneqq\Phi_{vh}^{\theta}\circ\Xi_{v\to vh}^{\theta,q}\circ\Upsilon_{v}^{\theta,1-q},\label{eq:herm-pres-map-Sigma-petz-tsalis}\\
\Xi_{v\to vh}^{\theta,q}(R_{v}) & \coloneqq\frac{1}{2}\left\{ \sigma_{vh}(\theta),\sigma_{v}(\theta)^{-\frac{q}{2}}R_{v}\sigma_{v}(\theta)^{-\frac{q}{2}}\otimes I_{h}\right\} ,\\
\Phi_{vh}^{\theta}(Y_{vh}) & \coloneqq\int_{-\infty}^{\infty}dt\,\gamma(t)\:e^{-iG(\theta)t}Y_{vh}e^{iG(\theta)t},\\
\Upsilon_{v}^{\theta,q}(X_{v}) & \coloneqq\int_{-\infty}^{\infty}dt\,\beta_{1-q}(t)\,\sigma_{v}(\theta)^{-\frac{it}{2}}X_{v}\sigma_{v}(\theta)^{\frac{it}{2}},
\end{align}
and the probability density function $\beta_{1-q}(t)$ is defined
in \eqref{eq:logistic-prob-dens-1}.
\end{thm}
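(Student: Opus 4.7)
The plan is to mimic the derivation of Theorem~\ref{thm:q-state-learning-gradient-vh-q}, but replace the derivative of the matrix logarithm (Lemma~\ref{lem:derivative-matrix-log}) with the derivative of the matrix power (Lemma~\ref{lem:derivative-matrix-power}). First I would use the definition of $D_q$ to reduce the problem to differentiating $Q_q(\rho\|\sigma_{v}(\theta)) = \Tr[\rho^{q}\sigma_v(\theta)^{1-q}]$, which gives
\begin{equation}
\frac{\partial}{\partial\theta_j}D_q(\rho\|\sigma_v(\theta)) = \frac{1}{q-1}\Tr\!\left[\rho^q\,\frac{\partial}{\partial\theta_j}\sigma_v(\theta)^{1-q}\right].
\end{equation}
Next I would apply \eqref{eq:deriv-matrix-power-fourier} with $r = 1-q \in (-1,0)\cup(0,1)$ to express $\frac{\partial}{\partial\theta_j}\sigma_v(\theta)^{1-q}$ as $(1-q)\,\sigma_v(\theta)^{-q/2}\,\Upsilon_{v}^{\theta,1-q}\!\left(\frac{\partial}{\partial\theta_j}\sigma_v(\theta)\right)\sigma_v(\theta)^{-q/2}$. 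The prefactor $(1-q)/(q-1) = -1$ eats the sign, and cycling the trace (together with the self-adjointness of $\Upsilon_{v}^{\theta,1-q}$, which follows from the evenness of $\beta_{1-q}(t)$) moves $\Upsilon_{v}^{\theta,1-q}$ onto $\rho^q$.

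The next step is to substitute the thermal-state derivative formula \eqref{eq:deriv-thermal-state} for $\frac{\partial}{\partial\theta_j}\sigma_{vh}(\theta)$ after pulling the partial trace $\Tr_h$ through the derivative. This produces two terms: one proportional to $\Tr_h[\{\Phi_{vh}^\theta(G_j),\sigma_{vh}(\theta)\}]$ and another proportional to $\langle G_j\rangle_{\sigma_{vh}(\theta)}\,\Tr_h[\sigma_{vh}(\theta)] = \langle G_j\rangle_{\sigma_{vh}(\theta)}\,\sigma_v(\theta)$. The first term, after using $\Tr[(A_v\otimes I_h)B_{vh}] = \Tr[A_v\Tr_h B_{vh}]$ and the self-adjointness of $\Phi_{vh}^\theta$, rearranges exactly into $\Tr[G_j\,\Phi_{vh}^\theta(\Xi_{v\to vh}^{\theta,q}(\Upsilon_v^{\theta,1-q}(\rho^q)))] = \langle G_j\rangle_{\Sigma_{v\to vh}^{\theta,q}(\rho^q)}$, by the definitions \eqref{eq:herm-pres-map-Sigma-petz-tsalis} and of $\Xi_{v\to vh}^{\theta,q}$.

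The main obstacle, and the place where the argument diverges from the Umegaki case, will be showing that the second term collapses to $Q_q(\rho\|\sigma_v(\theta))\langle G_j\rangle_{\sigma_{vh}(\theta)}$ rather than just $\langle G_j\rangle_{\sigma_{vh}(\theta)}$. Concretely, one must evaluate
\begin{equation}
\Tr\!\left[\sigma_v(\theta)^{-q/2}\,\Upsilon_v^{\theta,1-q}(\rho^q)\,\sigma_v(\theta)^{-q/2}\,\sigma_v(\theta)\right] = \Tr\!\left[\sigma_v(\theta)^{1-q}\,\Upsilon_v^{\theta,1-q}(\rho^q)\right],
\end{equation}
obtained by trace-cycling. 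Since $\Upsilon_v^{\theta,1-q}$ is a random unitary channel generated by the modular flow of $\sigma_v(\theta)$, and $\sigma_v(\theta)^{1-q}$ commutes with every $\sigma_v(\theta)^{\pm it/2}$, we have $\Upsilon_v^{\theta,1-q}(\sigma_v(\theta)^{1-q}) = \sigma_v(\theta)^{1-q}$. Moving $\Upsilon_v^{\theta,1-q}$ onto its other argument by self-adjointness then yields $\Tr[\sigma_v(\theta)^{1-q}\rho^q] = Q_q(\rho\|\sigma_v(\theta))$, as required. Unlike in the Umegaki case where $\Upsilon_v^\theta$ is a trace-preserving channel and this factor is just $1$, here it is a genuine Petz--Tsallis overlap, which is exactly the prefactor appearing in \eqref{eq:gradient-fully-QBM-petz-tsallis}. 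Taking $q\to 1$ recovers Theorem~\ref{thm:q-state-learning-gradient-vh-q} via \eqref{eq:r-0-limit-log}, providing a consistency check.
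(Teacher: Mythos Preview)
Your proof is correct and follows essentially the same approach as the paper: apply the matrix-power derivative formula \eqref{eq:deriv-matrix-power-fourier}, move $\Upsilon_v^{\theta,1-q}$ onto $\rho^q$ by self-adjointness, insert the thermal-state derivative \eqref{eq:deriv-thermal-state}, and split into the two terms. Your treatment of the second term---cycling to $\Tr[\sigma_v(\theta)^{1-q}\Upsilon_v^{\theta,1-q}(\rho^q)]$ and then using that $\sigma_v(\theta)^{1-q}$ is fixed by the modular-flow channel---is exactly the mechanism the paper uses (and your explanation of why this yields $Q_q$ rather than $1$ is in fact more explicit than the paper's).
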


\begin{proof}
Consider that
\begin{align}
 & \frac{\partial}{\partial\theta_{j}}D_{q}(\rho\|\sigma_{v}(\theta))\nonumber \\
 & =\frac{1}{q-1}\frac{\partial}{\partial\theta_{j}}\Tr\!\left[\rho^{q}\sigma_{v}(\theta)^{1-q}\right]\\
 & =\frac{1}{q-1}\Tr\!\left[\rho^{q}\frac{\partial}{\partial\theta_{j}}\sigma_{v}(\theta)^{1-q}\right]\\
 & \overset{(a)}{=}-\Tr\!\left[\rho^{q}\sigma_{v}(\theta)^{-\frac{q}{2}}\Upsilon_{v}^{\theta,q}\left(\frac{\partial}{\partial\theta_{j}}\sigma_{v}(\theta)\right)\sigma_{v}(\theta)^{-\frac{q}{2}}\right]\\
 & =-\Tr\!\left[\sigma_{v}(\theta)^{-\frac{q}{2}}\Upsilon_{v}^{\theta,q}(\rho^{q})\sigma_{v}(\theta)^{-\frac{q}{2}}\frac{\partial}{\partial\theta_{j}}\Tr_{h}\!\left[\sigma_{vh}(\theta)\right]\right]\\
 & =-\Tr\!\left[\sigma_{v}(\theta)^{-\frac{q}{2}}\Upsilon_{v}^{\theta,q}(\rho^{q})\sigma_{v}(\theta)^{-\frac{q}{2}}\Tr_{h}\!\left[\frac{\partial}{\partial\theta_{j}}\sigma_{vh}(\theta)\right]\right]\\
 & =-\Tr\!\left[\begin{array}{c}
\sigma_{v}(\theta)^{-\frac{q}{2}}\Upsilon_{v}^{\theta,q}(\rho^{q})\sigma_{v}(\theta)^{-\frac{q}{2}}\times\\
\Tr_{h}\!\left[-\frac{1}{2}\left\{ \Phi_{vh}^{\theta}(G_{j}),\sigma_{vh}(\theta)\right\} +\sigma_{vh}(\theta)\left\langle G_{j}\right\rangle _{\sigma_{vh}(\theta)}\right]
\end{array}\right]\\
 & =\frac{1}{2}\Tr\!\left[\sigma_{v}(\theta)^{-\frac{q}{2}}\Upsilon_{v}^{\theta,q}(\rho^{q})\sigma_{v}(\theta)^{-\frac{q}{2}}\Tr_{h}\!\left[\left\{ \Phi_{vh}^{\theta}(G_{j}),\sigma_{vh}(\theta)\right\} \right]\right]\nonumber \\
 & \qquad-\Tr\!\left[\sigma_{v}(\theta)^{-\frac{q}{2}}\Upsilon_{v}^{\theta,q}(\rho^{q})\sigma_{v}(\theta)^{-\frac{q}{2}}\Tr_{h}\!\left[\sigma_{vh}(\theta)\right]\right]\left\langle G_{j}\right\rangle _{\sigma_{vh}(\theta)}\\
 & =\frac{1}{2}\Tr\!\left[\left(\sigma_{v}(\theta)^{-\frac{q}{2}}\Upsilon_{v}^{\theta,q}(\rho^{q})\sigma_{v}(\theta)^{-\frac{q}{2}}\otimes I_{h}\right)\left\{ \Phi_{vh}^{\theta}(G_{j}),\sigma_{vh}(\theta)\right\} \right]\nonumber \\
 & \qquad-\Tr\!\left[\rho^{q}\Upsilon_{v}^{\theta,1-q}(\sigma_{v}(\theta)^{1-q})\right]\left\langle G_{j}\right\rangle _{\sigma_{vh}(\theta)}\\
 & =\frac{1}{2}\Tr\!\left[G_{j}\Phi_{vh}^{\theta}\!\left(\left\{ \sigma_{vh}(\theta),\sigma_{v}(\theta)^{-\frac{q}{2}}\Upsilon_{v}^{\theta,q}(\rho^{q})\sigma_{v}(\theta)^{-\frac{q}{2}}\otimes I_{h}\right\} \right)\right]\nonumber \\
 & \qquad-\Tr\!\left[\rho^{q}\sigma_{v}(\theta)^{1-q}\right]\left\langle G_{j}\right\rangle _{\sigma_{vh}(\theta)}\\
 & =\left\langle G_{j}\right\rangle _{\Sigma_{v\to vh}^{\theta,q}(\rho^{q})}-Q_{q}(\rho\|\sigma_{v}(\theta))\left\langle G_{j}\right\rangle _{\sigma_{vh}(\theta)},
\end{align}
thus concluding the proof. The reasoning behind the steps is similar
to that given in the proof of Theorem~\ref{thm:q-state-learning-gradient-vh-q}.
The equality $(a)$ follows from \eqref{eq:deriv-matrix-power-fourier}.
\end{proof}

\subsection{Quantum--classical Boltzmann machines}

The following theorem generalizes the gradient formula in \eqref{eq:gradient-qc-QBM}
for quantum--classical Boltzmann machines to the case when the objective
function is the Petz--Tsallis relative entropy:
\begin{thm}
For all $q\in\left(0,1\right)\cup(1,2]$, the partial derivatives
of $D_{q}(\rho\|\sigma_{v}(\theta))$ are as follows:
\begin{multline}
\frac{\partial}{\partial\theta_{j}}D_{q}(\rho\|\sigma_{v}(\theta))=\sum_{x}p_{x}(\theta)\left\langle G_{v}^{j,x}\right\rangle _{\Sigma_{v}^{\theta,q,x}(\rho^{q})}\\
-Q_{q}(\rho\|\sigma_{v}(\theta))\sum_{x}p_{x}(\theta)\left\langle G_{v}^{j,x}\right\rangle _{\sigma_{v}^{x}(\theta)},\label{eq:gradient-qc-BM-petz-tsallis}
\end{multline}
where $\sigma_{v}^{x}(\theta)$ is defined in \eqref{eq:sigma-x-v-def}
and 
\begin{align}
\Sigma_{v}^{\theta,q,x} & \coloneqq\Phi_{v}^{\theta,x}\circ\Xi_{v}^{\theta,q,x}\circ\Upsilon_{v}^{\theta,q},\\
\Xi_{v\to v}^{\theta,q,x}(R_{v}) & \coloneqq\frac{1}{2}\left\{ \sigma_{v}^{x}(\theta),\sigma_{v}(\theta)^{-\frac{q}{2}}R_{v}\sigma_{v}(\theta)^{-\frac{q}{2}}\right\} ,\\
\Phi_{v}^{\theta,x}(Y_{v}) & \coloneqq\int_{-\infty}^{\infty}dt\,\gamma(t)\,e^{-iG_{v}^{x}(\theta)t}Y_{v}e^{iG_{v}^{x}(\theta)t}.
\end{align}
\end{thm}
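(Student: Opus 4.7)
The plan is to specialize the fully quantum Petz--Tsallis gradient formula \eqref{eq:gradient-fully-QBM-petz-tsallis} to the quantum--classical structure defined by \eqref{eq:term-commuting-hidden}--\eqref{eq:visible-state-qc}. The argument mirrors the proof of Theorem~\ref{thm:qc-vh-gradient}, with the Umegaki-specific quantities replaced by their Petz--Tsallis generalizations. Crucially, the parameter $q$ enters only through the matrix power $\sigma_v(\theta)^{-q/2}$ and the twirling channel $\Upsilon_v^{\theta,q}$ defined via the density $\beta_{1-q}(t)$, neither of which acts on the hidden register; hence the reduction over the classical index $x$ proceeds in essentially the same way as in the Umegaki case.

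First I would establish the factorization
\[
\Sigma_{v\to vh}^{\theta,q}(\rho^q) = \sum_x p_x(\theta)\,\Sigma_v^{\theta,q,x}(\rho^q)\otimes |x\rangle\!\langle x|_h.
\]
The derivation is formally identical to the corresponding step in Theorem~\ref{thm:qc-vh-gradient}: the thermal state $\sigma_{vh}(\theta)$ is block-diagonal in $\{|x\rangle\!\langle x|_h\}_x$ by \eqref{eq:qc-state}, so the anticommutator defining $\Xi_{v\to vh}^{\theta,q}$ splits over $x$; the channel $\Phi_{vh}^\theta$ preserves this block structure because $G(\theta)$ is itself block-diagonal in the same basis; and $\Upsilon_v^{\theta,q}$ together with the sandwich by $\sigma_v(\theta)^{-q/2}$ act only on the visible register, so they commute with the direct sum over $x$.

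Given this factorization, the first term of \eqref{eq:gradient-fully-QBM-petz-tsallis} evaluates to
\[
\langle G_j\rangle_{\Sigma_{v\to vh}^{\theta,q}(\rho^q)} = \sum_x p_x(\theta)\,\langle G_v^{j,x}\rangle_{\Sigma_v^{\theta,q,x}(\rho^q)},
\]
by plugging in the expansion \eqref{eq:term-commuting-hidden} of $G_j$ and computing the partial trace over $h$. The second term reduces to $\sum_x p_x(\theta)\langle G_v^{j,x}\rangle_{\sigma_v^x(\theta)}$ by exactly the calculation already performed in \eqref{eq:reduction-2nd-term-qc-1}--\eqref{eq:reduction-2nd-term-qc-last}, which does not depend on $q$. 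Substituting both reductions into \eqref{eq:gradient-fully-QBM-petz-tsallis} and keeping the prefactor $Q_q(\rho\|\sigma_v(\theta))$ intact yields \eqref{eq:gradient-qc-BM-petz-tsallis}.

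The only subtlety—really just bookkeeping—is that $\sigma_v(\theta) = \sum_x p_x(\theta)\sigma_v^x(\theta)$ is a convex mixture on the visible register rather than a direct sum over $x$, so $\sigma_v(\theta)^{-q/2}$ and $\Upsilon_v^{\theta,q}$ do not decompose across $x$. This is inconsequential: the entire $x$-splitting is generated by the block structure on the hidden system via $\sigma_{vh}(\theta)$ and $\Phi_{vh}^\theta$, while $\sigma_v(\theta)^{-q/2}$ and $\Upsilon_v^{\theta,q}$ act as an inert operator-valued prefactor on the visible register that carries through the reduction unchanged and is then absorbed into the definition of $\Sigma_v^{\theta,q,x}$.
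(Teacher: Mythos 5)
Your proposal is correct and follows essentially the same route as the paper's proof: establish the block-diagonal factorization $\Sigma_{v\to vh}^{\theta,q}(\rho^{q})=\sum_{x}p_{x}(\theta)\,\Sigma_{v}^{\theta,q,x}(\rho^{q})\otimes|x\rangle\!\langle x|_{h}$ by exploiting the block structure of $\sigma_{vh}(\theta)$ and of $e^{-iG(\theta)t}$, then reduce the two terms of \eqref{eq:gradient-fully-QBM-petz-tsallis} exactly as in Theorem~\ref{thm:qc-vh-gradient}, reusing \eqref{eq:reduction-2nd-term-qc-1}--\eqref{eq:reduction-2nd-term-qc-last} for the second term. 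Your remark that $\sigma_{v}(\theta)^{-q/2}$ and $\Upsilon_{v}^{\theta,q}$ do not split over $x$ but carry through as a visible-register prefactor is precisely how the paper's calculation handles it.
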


\begin{proof}
Consider that
\begin{align}
 & \Sigma_{v\to vh}^{\theta,q}(\rho^{q})\nonumber \\
 & =\Phi_{vh}^{\theta}\!\left(\frac{1}{2}\left\{ \sigma_{vh}(\theta),\sigma_{v}(\theta)^{-\frac{q}{2}}\Upsilon_{v}^{\theta,q}(\rho^{q})\sigma_{v}(\theta)^{-\frac{q}{2}}\otimes I_{h}\right\} \right),\\
 & =\Phi_{vh}^{\theta}\!\left(\frac{1}{2}\left\{ \sum_{x}p_{x}(\theta)\sigma_{v}^{x}(\theta)\otimes|x\rangle\!\langle x|_{h},\sigma_{v}(\theta)^{-\frac{q}{2}}\Upsilon_{v}^{\theta,q}(\rho)\sigma_{v}(\theta)^{-\frac{q}{2}}\otimes I_{h}\right\} \right)\\
 & =\Phi_{vh}^{\theta}\!\left(\sum_{x}p_{x}(\theta)\frac{1}{2}\left\{ \sigma_{v}^{x}(\theta),\sigma_{v}(\theta)^{-\frac{q}{2}}\Upsilon_{v}^{\theta,q}(\rho)\sigma_{v}(\theta)^{-\frac{q}{2}}\right\} \otimes|x\rangle\!\langle x|_{h}\right)\\
 & =\int_{-\infty}^{\infty}dt\,\gamma(t)\:e^{-iG(\theta)t}\times\nonumber \\
 & \quad\left(\sum_{x}p_{x}(\theta)\frac{1}{2}\left\{ \sigma_{v}^{x}(\theta),\sigma_{v}(\theta)^{-\frac{q}{2}}\Upsilon_{v}^{\theta,q}(\rho^{q})\sigma_{v}(\theta)^{-\frac{q}{2}}\right\} \otimes|x\rangle\!\langle x|_{h}\right)e^{iG(\theta)t}\\
 & =\int_{-\infty}^{\infty}dt\,\gamma(t)\:\left(\sum_{x'}e^{-iG_{v}^{x'}(\theta)t}\otimes|x'\rangle\!\langle x'|_{h}\right)\times\nonumber \\
 & \qquad\left(\sum_{x}p_{x}(\theta)\frac{1}{2}\left\{ \sigma_{v}^{x}(\theta),\sigma_{v}(\theta)^{-\frac{q}{2}}\Upsilon_{v}^{\theta,q}(\rho^{q})\sigma_{v}(\theta)^{-\frac{q}{2}}\right\} \otimes|x\rangle\!\langle x|_{h}\right)\times\nonumber \\
 & \qquad\left(\sum_{x''}e^{iG_{v}^{x''}(\theta)t}\otimes|x''\rangle\!\langle x''|_{h}\right)\\
 & =\int_{-\infty}^{\infty}dt\,\gamma(t)\:\sum_{x}p_{x}(\theta)e^{-iG_{v}^{x}(\theta)t}\times\nonumber \\
 & \qquad\left(\frac{1}{2}\left\{ \sigma_{v}^{x}(\theta),\sigma_{v}(\theta)^{-\frac{q}{2}}\Upsilon_{v}^{\theta,q}(\rho^{q})\sigma_{v}(\theta)^{-\frac{q}{2}}\right\} \right)e^{iG_{v}^{x}(\theta)t}\otimes|x\rangle\!\langle x|_{h}\\
 & =\sum_{x}p_{x}(\theta)\Phi_{v}^{\theta,x}\!\left(\frac{1}{2}\left\{ \sigma_{v}^{x}(\theta),\sigma_{v}(\theta)^{-\frac{q}{2}}\Upsilon_{v}^{\theta,q}(\rho^{q})\sigma_{v}(\theta)^{-\frac{q}{2}}\right\} \right)\otimes|x\rangle\!\langle x|_{h}.
\end{align}
Then it follows that
\begin{align}
 & \left\langle G_{j}\right\rangle _{\Sigma_{v\to vh}^{\theta,q}\!\left(\rho^{q}\right)}\nonumber \\
 & =\Tr\!\left[\begin{array}{c}
\left(\sum_{x'}G_{v}^{j,x'}\otimes|x'\rangle\!\langle x'|_{h}\right)\times\\
\left(\sum_{x}p_{x}(\theta)\Phi_{v}^{\theta,x}\!\left(\frac{1}{2}\left\{ \sigma_{v}^{x}(\theta),\sigma_{v}(\theta)^{-\frac{q}{2}}\Upsilon_{v}^{\theta,q}(\rho^{q})\sigma_{v}(\theta)^{-\frac{q}{2}}\right\} \right)\otimes|x\rangle\!\langle x|_{h}\right)
\end{array}\right]\\
 & =\sum_{x}p_{x}(\theta)\Tr\!\left[G_{v}^{j,x}\Phi_{v}^{\theta,x}\!\left(\frac{1}{2}\left\{ \sigma_{v}^{x}(\theta),\sigma_{v}(\theta)^{-\frac{q}{2}}\Upsilon_{v}^{\theta,q}(\rho^{q})\sigma_{v}(\theta)^{-\frac{q}{2}}\right\} \right)\right]\\
 & =\sum_{x}p_{x}(\theta)\Tr\!\left[G_{v}^{j,x}\Sigma_{v}^{\theta,q,x}(\rho^{q})\right]\\
 & =\sum_{x}p_{x}(\theta)\left\langle G_{v}^{j,x}\right\rangle _{\Sigma_{v}^{\theta,q,x}(\rho^{q})},
\end{align}
where
\begin{equation}
\Sigma_{v}^{\theta,q,x}(R_{v})\coloneqq\Phi_{G_{v}^{x}(\theta)}\!\left(\frac{1}{2}\left\{ \sigma_{v}^{x}(\theta),\sigma_{v}(\theta)^{-\frac{q}{2}}\Upsilon_{v}^{\theta,q}(R_{v})\sigma_{v}(\theta)^{-\frac{q}{2}}\right\} \right).
\end{equation}
Additionally, the following equality is a consequence of \eqref{eq:reduction-2nd-term-qc-1}--\eqref{eq:reduction-2nd-term-qc-last}:
\begin{align}
\left\langle G_{j}\right\rangle _{\sigma_{vh}(\theta)} & =\sum_{x}p_{x}(\theta)\left\langle G_{v}^{j,x}\right\rangle _{\sigma_{v}^{x}(\theta)},
\end{align}
thus concluding the proof.
\end{proof}

\subsection{Classical--quantum Boltzmann machines}

The following theorem generalizes the gradient formula in \eqref{eq:gradient-cq-QBM}
for classical--quantum Boltzmann machines to the case when the objective
function is the Petz--Tsallis relative entropy. However, since the
states being compared are classical, the Petz--Tsallis relative entropy
reduces to the Tsallis relative entropy:
\begin{equation}
D_{q}(\rho\|\sigma_{v}(\theta))=\frac{\sum_{x}r_{x}^{q}p_{x}(\theta)^{1-q}-1}{q-1}.
\end{equation}

\begin{thm}
For all $q\in\left(0,1\right)\cup(1,2]$, the partial derivatives
of $D_{q}(\rho\|\sigma_{v}(\theta))$ are as follows:
\begin{multline}
\frac{\partial}{\partial\theta_{j}}D_{q}(\rho\|\sigma_{v}(\theta))=\sum_{x}r_{x}^{q}p_{x}(\theta)^{1-q}\left\langle G_{h}^{j,x}\right\rangle _{\sigma_{h}^{x}(\theta)}\\
-\left(\sum_{x}r_{x}^{q}p_{x}(\theta)^{1-q}\right)\left(\sum_{x}p_{x}(\theta)\left\langle G_{h}^{j,x}\right\rangle _{\sigma_{h}^{x}(\theta)}\right).\label{eq:gradient-cq-BM-petz-tsallis}
\end{multline}
\end{thm}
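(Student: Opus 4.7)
The plan is to specialize the fully quantum Petz--Tsallis gradient formula \eqref{eq:gradient-fully-QBM-petz-tsallis} to the classical--quantum setting, in direct parallel with how Theorem~\ref{thm:gradient-cq-BMs} was derived from Theorem~\ref{thm:q-state-learning-gradient-vh-q}. The essential structural observation is that, under the assumptions \eqref{eq:term-commuting-visible}--\eqref{eq:diagonal-visible-state}, the operators $\rho$, $\sigma_v(\theta)$, $\sigma_{vh}(\theta)$, and $G(\theta)$ are all simultaneously block-diagonal in the basis $\{|x\rangle\!\langle x|_v\}_x$, so the three ingredients $\Upsilon_v^{\theta,q}$, $\Xi_{v\to vh}^{\theta,q}$, and $\Phi_{vh}^{\theta}$ of $\Sigma_{v\to vh}^{\theta,q}$ decompose cleanly across the blocks labeled by $x$.

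First I would evaluate $\Sigma_{v\to vh}^{\theta,q}(\rho^q)$ on $\rho^q=\sum_x r_x^q|x\rangle\!\langle x|_v$. The modular rotations $\sigma_v(\theta)^{-it/2}(\cdot)\sigma_v(\theta)^{it/2}$ composing $\Upsilon_v^{\theta,q}$ leave $\rho^q$ invariant, so $\Upsilon_v^{\theta,q}(\rho^q)=\rho^q$. Feeding this into $\Xi_{v\to vh}^{\theta,q}$ produces the anticommutator of $\sigma_{vh}(\theta)=\sum_y p_y(\theta)|y\rangle\!\langle y|_v\otimes\sigma_h^y(\theta)$ with the diagonal operator $\sum_x r_x^q p_x(\theta)^{-q}|x\rangle\!\langle x|_v\otimes I_h$, which collapses to $\sum_x r_x^q p_x(\theta)^{1-q}|x\rangle\!\langle x|_v\otimes\sigma_h^x(\theta)$. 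Finally, $\Phi_{vh}^{\theta}$ leaves this operator unchanged, because $e^{\pm iG(\theta)t}=\sum_x|x\rangle\!\langle x|_v\otimes e^{\pm iG_h^x(\theta)t}$ and each $\sigma_h^x(\theta)$ commutes with $G_h^x(\theta)$. Pairing against $G_j=\sum_x|x\rangle\!\langle x|_v\otimes G_h^{j,x}$ then reproduces the first term in \eqref{eq:gradient-cq-BM-petz-tsallis}.

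For the second term, the chain of equalities \eqref{eq:reduction-2nd-term-cq-1}--\eqref{eq:reduction-2nd-term-cq-last} already gives $\langle G_j\rangle_{\sigma_{vh}(\theta)}=\sum_x p_x(\theta)\langle G_h^{j,x}\rangle_{\sigma_h^x(\theta)}$, and the joint diagonality of $\rho$ and $\sigma_v(\theta)$ yields $Q_q(\rho\|\sigma_v(\theta))=\sum_x r_x^q p_x(\theta)^{1-q}$; multiplying these together produces exactly the subtracted term in \eqref{eq:gradient-cq-BM-petz-tsallis}.

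The main obstacle is more accounting than analysis: one must verify that the probability-weighted integrals defining $\Upsilon_v^{\theta,q}$ and $\Phi_{vh}^{\theta}$, along with the conjugations inside $\Xi_{v\to vh}^{\theta,q}$, all reduce to trivial actions because every operator involved is simultaneously diagonalizable on the visible register. As an independent sanity check, one can also derive \eqref{eq:gradient-cq-BM-petz-tsallis} directly from the fact that $D_q(\rho\|\sigma_v(\theta))$ reduces here to the classical Tsallis relative quasi-entropy $(q-1)^{-1}(\sum_x r_x^q p_x(\theta)^{1-q}-1)$ and then differentiating using the identity $\partial p_x(\theta)/\partial \theta_j = -p_x(\theta)\langle G_h^{j,x}\rangle_{\sigma_h^x(\theta)} + p_x(\theta)\sum_{x'}p_{x'}(\theta)\langle G_h^{j,x'}\rangle_{\sigma_h^{x'}(\theta)}$, which is a routine computation from the definition of $p_x(\theta)$ in \eqref{eq:prob-dist-visible-cq-BM}; the two routes agree.
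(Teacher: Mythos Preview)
Your proposal is correct and follows essentially the same route as the paper: both specialize \eqref{eq:gradient-fully-QBM-petz-tsallis} to the classical--quantum setting by exploiting the simultaneous block-diagonality of $\rho$, $\sigma_v(\theta)$, $\sigma_{vh}(\theta)$, and $G(\theta)$ in the $\{|x\rangle\!\langle x|_v\}_x$ basis, reducing $\Upsilon$, $\Xi$, and $\Phi$ to trivial actions on the visible register. The only cosmetic difference is that the paper leaves $\Phi_{vh}^{\theta}$ acting on $G_j$ and simplifies via $[G_h^x(\theta),\sigma_h^x(\theta)]=0$ at the end, whereas you apply $\Phi_{vh}^{\theta}$ directly to the block-diagonal state and invoke the same commutation up front; your added sanity check via direct differentiation of the classical Tsallis expression is a nice bonus not present in the paper.
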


\begin{proof}
Consider from \eqref{eq:reduction-2nd-term-cq-1}--\eqref{eq:reduction-2nd-term-cq-last}
that
\begin{align}
\left\langle G_{j}\right\rangle _{\sigma_{vh}(\theta)} & =\sum_{x}p_{x}(\theta)\left\langle G_{h}^{j,x}\right\rangle _{\sigma_{h}^{x}(\theta)}.
\end{align}
Furthermore, observe that
\begin{equation}
\sigma_{v}(\theta)^{-\frac{q}{2}}\Upsilon_{v}^{\theta,1-q}(\rho^{q})\sigma_{v}(\theta)^{-\frac{q}{2}}=\sum_{x}\frac{r_{x}^{q}}{p_{x}(\theta)^{q}}|x\rangle\!\langle x|_{v},
\end{equation}
implying that
\begin{align}
 & \Tr\!\left[\Phi_{vh}^{\theta}(G_{j})\frac{1}{2}\left\{ \sigma_{vh}(\theta),\sigma_{v}(\theta)^{-\frac{q}{2}}\Upsilon_{v}^{\theta,1-q}(\rho^{q})\sigma_{v}(\theta)^{-\frac{q}{2}}\otimes I_{h}\right\} \right]\nonumber \\
 & =\Tr\!\left[\Phi_{vh}^{\theta}(G_{j})\frac{1}{2}\left\{ \sum_{x}p_{x}(\theta)|x\rangle\!\langle x|_{v}\otimes\sigma_{h}^{x}(\theta),\sum_{x}\frac{r_{x}^{q}}{p_{x}(\theta)^{q}}|x\rangle\!\langle x|_{v}\otimes I_{h}\right\} \right]\\
 & =\Tr\!\left[\Phi_{vh}^{\theta}\!\left(\sum_{x}|x\rangle\!\langle x|_{v}\otimes G_{h}^{j,x}\right)\sum_{x}r_{x}^{q}p_{x}(\theta)^{1-q}|x\rangle\!\langle x|_{v}\otimes\sigma_{h}^{x}(\theta)\right]\\
 & =\int_{-\infty}^{\infty}dt\,\gamma(t)\:\Tr\!\left[\begin{array}{c}
\left(\sum_{x}|x\rangle\!\langle x|_{v}\otimes e^{-iG_{h}^{x}(\theta)t}G_{h}^{j,x}e^{iG_{h}^{x}(\theta)t}\right)\times\\
\sum_{x}r_{x}^{q}p_{x}(\theta)^{1-q}|x\rangle\!\langle x|_{v}\otimes\sigma_{h}^{x}(\theta)
\end{array}\right]\\
 & =\int_{-\infty}^{\infty}dt\,\gamma(t)\:\sum_{x}r_{x}^{q}p_{x}(\theta)^{1-q}\Tr\!\left[e^{-iG_{h}^{x}(\theta)t}G_{h}^{j,x}e^{iG_{h}^{x}(\theta)t}\sigma_{h}^{x}(\theta)\right]\\
 & =\int_{-\infty}^{\infty}dt\,\gamma(t)\:\sum_{x}r_{x}^{q}p_{x}(\theta)^{1-q}\Tr\!\left[G_{h}^{j,x}e^{iG_{h}^{x}(\theta)t}\sigma_{h}^{x}(\theta)e^{-iG_{h}^{x}(\theta)t}\right]\\
 & =\sum_{x}r_{x}^{q}p_{x}(\theta)^{1-q}\Tr\!\left[G_{h}^{j,x}\sigma_{h}^{x}(\theta)\right]\\
 & =\sum_{x}r_{x}^{q}p_{x}(\theta)^{1-q}\left\langle G_{h}^{j,x}\right\rangle _{\sigma_{h}^{x}(\theta)},
\end{align}
thus concluding the proof.
\end{proof}

\subsection{Quantum algorithms for estimating the gradients}

One can also design quantum algorithms for estimating the gradients
in \eqref{eq:gradient-fully-QBM-petz-tsallis} and \eqref{eq:gradient-qc-BM-petz-tsallis},
following the approach given in Sections \ref{subsec:Q-algorithm-state-learning-QBM}
and \ref{subsec:Quantum-algorithm-qc-BM}, respectively. In the Petz--Tsallis
case, however, $\rho^{q}$ appears in the expressions for the gradients.
In order to handle the term $\rho^{q}$, one can employ a block-encoding
of it, by means of sample access to the target state $\rho$. As indicated
previously, one can realize a block-encoding of $\rho^{q}$ from sample
access to $\rho$ by means of density matrix exponentiation and QSVT
\cite[Corollary~21]{Gilyen2022a} (see also \cite[Lemma~2.21]{Wang2025}).
Another modification needed is to realize the transformation $(\cdot)\to\sigma_{v}(\theta)^{-\frac{q}{2}}(\cdot)\sigma_{v}(\theta)^{-\frac{q}{2}}$.
This is also possible by QSVT, with performance comparable to the
transformation when $q=1$ (see, e.g., \cite[Lemmas~27 and 36]{Liu2025}).
Thus, the performance of a QSVT-based algorithm for estimating the
gradients in \eqref{eq:gradient-fully-QBM-petz-tsallis} and \eqref{eq:gradient-qc-BM-petz-tsallis}
is comparable to that for Algorithm \ref{alg:estimate-grad-QBM},
but the main drawback is the extra overhead in realizing block-encodings
of $\rho^{q}$, so that it seems preferable to focus on the $q=1$
case, as done in earlier sections of this paper.

\section{Conclusion}

\label{sec:Conclusion}In conclusion, this paper lays a foundation
for quantum state learning with quantum Boltzmann machines that have
both visible and units, by providing an analytical formula for the
gradient (Theorem~\ref{thm:q-state-learning-gradient-vh-q}) and
a quantum algorithm for estimating it (Section~\ref{subsec:Q-algorithm-state-learning-QBM}).
I also considered two special cases of a fully quantum Boltzmann machine,
which include quantum--classical and classical--quantum Boltzmann
machines (those with quantum visible units and classical hidden units,
as well as classical visible units and quantum hidden units, respectively).
The gradient formula in Theorem~\ref{thm:q-state-learning-gradient-vh-q}
generalizes the classical formula in \eqref{eq:classical-gradient},
and it incorporates non-commutativity by means of modular-flow-generated
unitary rotations, similar to my prior work on the Petz recovery map
in quantum relative entropy inequalities \cite{Wilde2015} (see also
\cite{Junge2018} in this context). I also showed how to generalize
all of the findings to the case when the objective function to be
minimized is the Petz--Tsallis relative entropy, which relies on
an independent derivation of a formula for the derivative of the matrix
power function (see \eqref{eq:deriv-matrix-power-fourier}).

Going forward from here, the most pressing open question is to develop
a fully quantum contrastive divergence algorithm for training quantum
Boltzmann machines, as a generalization of the well known method \cite{Hinton2002}
used for training restricted classical Boltzmann machines. The algorithm
put forward in Section~\ref{subsec:Q-algorithm-state-learning-QBM}
can be used for training quantum Boltzmann machines, but it seems
plausible that a quantum contrastive divergence approach, should it
be found, would ultimately be more efficient and thus more useful
in practice. At the least, it would be ideal to eliminate the dependence
of the runtime of a training algorithm on $\kappa$, as is the case
with Algorithm~\ref{alg:estimate-grad-QBM}. There has been recent
progress on developing contrastive-divergence and related training
algorithms for classical--quantum Boltzmann machines \cite{Vishnu2025,Kimura2025,Demidik2025a},
but the approaches used there are not obviously amenable to the fully
quantum case, as any such algorithm would likely need to make use
of the gradient formulas put forward in the present paper (i.e., Theorems
\ref{thm:q-state-learning-gradient-vh-q} and \ref{thm:qc-vh-gradient}).
I have also not addressed the more general model of an evolved quantum
Boltzmann machine \cite{Minervini2025} and instead leave this for
future analysis.\medskip{}

\textit{Note added}---After the first version of this paper appeared
on the arXiv as \cite{Wilde2025b}, the authors of \cite{Beigi2025}
notified me of their paper, which independently derived \eqref{eq:deriv-matrix-power-fourier}
using a different method.\medskip{}

\textit{Acknowledgements}---I acknowledge insightful discussions
with Marcello Benedetti, Greeshma Oruganti, Yixian Qiu, and Zhicheng
Zhang. I also acknowledge helpful discussions with all participants
of the International Workshop on Quantum Boltzmann Machines, which
helped motivate the current paper. I am grateful for support from
the National Science Foundation under grant nos.~2329662 and 2611810 and from
the Cornell School of Electrical and Computer Engineering.

\footnotesize

\bibliographystyle{alphaurl}
\bibliography{Ref}

\normalsize

\appendix

\section{Proof of Equation \eqref{eq:deriv-matrix-power-fourier}}

\label{app:Proof-of-Equation-matrix-power}By applying a well known
identity (see, e.g., \cite[Theorem~42]{Wilde2025}), the following
equality holds:
\begin{equation}
\frac{\partial}{\partial x}A(x)^{r}=\sum_{k,\ell}f_{x^{r}}^{\left[1\right]}(\lambda_{k},\lambda_{\ell})\Pi_{k}\left(\frac{\partial}{\partial x}A(x)\right)\Pi_{\ell},
\end{equation}
where the spectral decomposition of $A(x)$ is given by $A(x)=\sum_{k}\lambda_{k}\Pi_{k}$,
with $\lambda_{k}$ an eigenvalue and $\Pi_{k}$ an eigenprojection
and I have suppressed the dependence of $\lambda_{k}$ and $\Pi_{k}$
on the parameter $x$. Furthermore, $f_{x^{r}}^{\left[1\right]}(x,y)$
is the first divided difference of the function $x\mapsto x^{r}$,
defined for $x,y>0$ as
\begin{equation}
f_{x^{r}}^{\left[1\right]}(x,y)\coloneqq\begin{cases}
rx^{r-1} & :x=y\\
\frac{x^{r}-y^{r}}{x-y} & :x\neq y
\end{cases}.
\end{equation}
Now consider that
\begin{align}
\frac{x^{r}-y^{r}}{x-y} & =\frac{\left(xy\right)^{\frac{r}{2}}}{\left(xy\right)^{\frac{1}{2}}}\frac{\left(\frac{x}{y}\right)^{\frac{r}{2}}-\left(\frac{x}{y}\right)^{-\frac{r}{2}}}{\left(\frac{x}{y}\right)^{\frac{1}{2}}-\left(\frac{x}{y}\right)^{-\frac{1}{2}}}\label{eq:div-diff-r-power}\\
 & =\left(xy\right)^{\frac{r-1}{2}}\frac{\left(\frac{x}{y}\right)^{\frac{r}{2}}-\left(\frac{x}{y}\right)^{-\frac{r}{2}}}{\left(\frac{x}{y}\right)^{\frac{1}{2}}-\left(\frac{x}{y}\right)^{-\frac{1}{2}}}\\
 & =\left(xy\right)^{\frac{r-1}{2}}\frac{e^{\frac{ru}{2}}-e^{-\frac{ru}{2}}}{e^{\frac{u}{2}}-e^{-\frac{u}{2}}}\\
 & =\left(xy\right)^{\frac{r-1}{2}}\frac{\sinh\!\left(\frac{ru}{2}\right)}{\sinh\!\left(\frac{u}{2}\right)},
\end{align}
where $u\coloneqq\ln\!\left(\frac{x}{y}\right)$. Observe that $\lim_{u\to0}\frac{\sinh\left(\frac{ru}{2}\right)}{\sinh\left(\frac{u}{2}\right)}=r$,
consistent with the fact that $f_{x^{r}}^{\left[1\right]}(y,y)=ry^{r-1}$.
By applying Lemma \ref{lem:contour-int-fourier-trans}, the inverse
Fourier transform of $\frac{\sinh\left(\frac{ru}{2}\right)}{\sinh\left(\frac{u}{2}\right)}$
is given by
\begin{align}
\frac{\sinh\!\left(\frac{ru}{2}\right)}{\sinh\!\left(\frac{u}{2}\right)} & =\int_{-\infty}^{\infty}dt\,g_{r}(t)e^{-\frac{iut}{2}}\label{eq:key-integral-power-deriv}\\
 & =\int_{-\infty}^{\infty}dt\,g_{r}(t)\left(\frac{x}{y}\right)^{-\frac{it}{2}},
\end{align}
where
\begin{equation}
g_{r}(t)\coloneqq\frac{\sin(\pi r)}{2\left(\cosh\!\left(\pi t\right)+\cos(\pi r)\right)}.\label{eq:g-r-func}
\end{equation}
This implies that
\begin{align}
 & \sum_{k,\ell}f_{x^{r}}^{\left[1\right]}(\lambda_{k},\lambda_{\ell})\Pi_{k}\left(\frac{\partial}{\partial x}A(x)\right)\Pi_{\ell}\nonumber \\
 & =\sum_{k,\ell}\left(\lambda_{k}\lambda_{\ell}\right)^{\frac{r-1}{2}}\int_{-\infty}^{\infty}dt\,g_{r}(t)\left(\frac{\lambda_{k}}{\lambda_{\ell}}\right)^{-\frac{it}{2}}\Pi_{k}\left(\frac{\partial}{\partial x}A(x)\right)\Pi_{\ell}\\
 & =\int_{-\infty}^{\infty}dt\,g_{r}(t)\left(\sum_{k}\lambda_{k}^{\frac{r-1}{2}}\lambda_{k}^{-\frac{it}{2}}\Pi_{k}\right)\left(\frac{\partial}{\partial x}A(x)\right)\left(\sum_{\ell}\lambda_{\ell}^{\frac{r-1}{2}}\lambda_{\ell}^{\frac{it}{2}}\Pi_{\ell}\right)\\
 & =\int_{-\infty}^{\infty}dt\,g_{r}(t)A(x)^{\frac{r-1}{2}}A(x)^{-\frac{it}{2}}\left(\frac{\partial}{\partial x}A(x)\right)A(x)^{\frac{it}{2}}A(x)^{\frac{r-1}{2}}\\
 & =A(x)^{\frac{r-1}{2}}\left(\int_{-\infty}^{\infty}dt\,g_{r}(t)A(x)^{-\frac{it}{2}}\left(\frac{\partial}{\partial x}A(x)\right)A(x)^{\frac{it}{2}}\right)A(x)^{\frac{r-1}{2}}\\
 & =rA(x)^{\frac{r-1}{2}}\left(\int_{-\infty}^{\infty}dt\,\beta_{r}(t)A(x)^{-\frac{it}{2}}\left(\frac{\partial}{\partial x}A(x)\right)A(x)^{\frac{it}{2}}\right)A(x)^{\frac{r-1}{2}},
\end{align}
where $\beta_{r}(t)$ is the probability density function defined
in \eqref{eq:deriv-matrix-power-fourier}. Observe that $\beta_{r}(t)$
is a probability density function because
\begin{equation}
\int_{-\infty}^{\infty}dt\,\beta_{r}(t)=\frac{1}{r}\int_{-\infty}^{\infty}dt\,g_{r}(t)=\frac{1}{r}\lim_{u\to0}\frac{\sinh\!\left(\frac{ru}{2}\right)}{\sinh\!\left(\frac{u}{2}\right)}=1,
\end{equation}
due to \eqref{eq:key-integral-power-deriv}.
\begin{lem}
\label{lem:contour-int-fourier-trans}For all $r\in\left(-1,0\right)\cup\left(0,1\right)$
and $u\in\mathbb{R}$, the following equality holds:
\begin{equation}
\frac{\sinh\!\left(\frac{ru}{2}\right)}{\sinh\!\left(\frac{u}{2}\right)}=\int_{-\infty}^{\infty}dt\,g_{r}(t)e^{-\frac{iut}{2}},
\end{equation}
where $g_{r}(t)$ is defined in \eqref{eq:g-r-func}.
\end{lem}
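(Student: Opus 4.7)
The plan is to evaluate the right-hand side by contour integration, viewing $t \mapsto g_r(t)\,e^{-iut/2}$ as a meromorphic function on $\mathbb{C}$. The poles of $g_r$ arise from the zeros of $\cosh(\pi t) + \cos(\pi r)$. Writing $t = i\tau$ and using $\cosh(i\pi\tau) = \cos(\pi\tau)$, the pole condition becomes $\cos(\pi\tau) = -\cos(\pi r) = \cos(\pi - \pi r)$, so every pole is simple, purely imaginary, and located at $t = i\tau$ with $\tau = 2k \pm (1-r)$ for $k \in \mathbb{Z}$. At each such pole $t_0 = i\tau$ the residue of $1/(\cosh(\pi t) + \cos(\pi r))$ equals $1/(\pi\sinh(\pi t_0))$, and a short case analysis gives $\sinh(\pi t_0) = i\sin(\pi\tau) = \pm i\sin(\pi r)$, where the sign depends on which of the two families $\tau = 2k + (1-r)$ or $\tau = 2k - (1-r)$ the pole belongs to. Consequently the residue of $g_r(t)\,e^{-iut/2}$ at $t_0 = i\tau$ collapses to $\pm\frac{1}{2\pi i}\,e^{u\tau/2}$, with the $\sin(\pi r)$ factors cancelling completely; this cancellation is the crux of why the final answer is so clean.

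Next, for $u > 0$ I would close the contour by a large semicircle in the lower half-plane, where $\lvert e^{-iut/2}\rvert = e^{u\operatorname{Im} t/2}$ decays. The enclosed poles have $\tau = -(2n+1) \pm r$ for $n = 0, 1, 2, \ldots$, carrying residues $\pm\frac{1}{2\pi i}\,e^{-(2n+1)u/2 \pm ru/2}$. The arc contribution vanishes along a sequence of radii $R_N$ chosen to pass midway between successive pole rows (for example $R_N = 2N$), on which $\lvert \cosh(\pi t) + \cos(\pi r)\rvert$ is bounded below by a positive constant uniformly. Collecting $-2\pi i$ times the sum of residues yields
\begin{equation*}
\int_{-\infty}^{\infty} dt\,g_r(t)\,e^{-iut/2}
= \sum_{n=0}^{\infty} e^{-(2n+1)u/2}\bigl(e^{ru/2} - e^{-ru/2}\bigr)
= \frac{2\sinh(ru/2)}{e^{u/2} - e^{-u/2}}
= \frac{\sinh(ru/2)}{\sinh(u/2)},
\end{equation*}
which is the desired identity. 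The case $u < 0$ follows by the mirror-image argument in the upper half-plane (equivalently, via $t \mapsto -t$ together with the evenness of $g_r$), and $u = 0$ follows by continuity, in agreement with the normalization $\int g_r(t)\,dt = r$ that the paper uses separately to verify that $\beta_r = g_r/r$ is a probability density.

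The main obstacle is the residue bookkeeping: one must correctly enumerate the two families of poles in each half-plane and carefully track the signs of $\sinh(\pi t_0) = \pm i\sin(\pi r)$ so that the algebra genuinely telescopes into a geometric series rather than a messier sum. A shorter alternative worth noting is the substitution $s = e^{\pi t}$, which recasts the integral as $\frac{\sin(\pi r)}{\pi}\int_0^\infty \frac{s^{-iu/(2\pi)}}{s^2 + 2s\cos(\pi r) + 1}\,ds$, after which the classical Mellin identity
\begin{equation*}
\int_0^\infty \frac{x^{\alpha-1}}{x^2 + 2x\cos\theta + 1}\,dx = \frac{\pi\sin((1-\alpha)\theta)}{\sin(\pi\alpha)\sin\theta}
\end{equation*}
(valid for $0 < \operatorname{Re}\alpha < 2$ and $\lvert\theta\rvert < \pi$), applied with $\alpha = 1 - iu/(2\pi)$ and $\theta = \pi r$, combined with the elementary identities $\sin(\pi - iu/2) = i\sinh(u/2)$ and $\sin(iru/2) = i\sinh(ru/2)$, delivers the conclusion in a single line, at the cost of invoking a nontrivial classical integral whose own proof also rests on contour integration.
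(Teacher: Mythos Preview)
Your argument is correct and complete up to the arc-vanishing justification, which is a little under-specified but standard: boundedness of the denominator on the chosen radii handles only the neighborhood of the imaginary axis, and away from it you need the exponential growth of $\lvert\cosh(\pi t)\rvert$ in $\lvert\operatorname{Re}t\rvert$ to kill the contribution. Splitting the arc into the two regimes, or using rectangles instead of semicircles, closes this with no difficulty.

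Your route differs genuinely from the paper's. The paper proves the \emph{inverse} direction: it computes $\frac{1}{2\pi}\int_{-\infty}^{\infty} dv\,\frac{\sinh(rv)}{\sinh(v)}\,e^{ivt}$ by contour integration in $v$, picking up the simple poles of $1/\sinh(z)$ at $z=n\pi i$, and then sums the resulting geometric-type series in $n$ to obtain the closed form $g_r(t)$. You instead integrate $g_r(t)e^{-iut/2}$ in $t$, enumerating the two interlaced families of poles of $1/(\cosh(\pi t)+\cos(\pi r))$ on the imaginary axis and observing the cancellation of $\sin(\pi r)$ between numerator and residue; the remaining sum is a pure geometric series in $e^{-u}$ that collapses immediately to $\sinh(ru/2)/\sinh(u/2)$. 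The paper's poles are simpler to locate (equally spaced, one family), but its residue sum requires more algebra to recognize as $g_r(t)$; your poles require more bookkeeping (two families, alternating signs), but the payoff is a one-line geometric series. Your Mellin-transform alternative is also valid and is arguably the shortest path, at the cost of invoking a classical identity whose own proof is essentially the contour argument you sketched first.
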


\begin{proof}
Defining $v=\frac{u}{2}$, the desired equality is equivalent to
\begin{equation}
\frac{\sinh\!\left(rv\right)}{\sinh\!\left(v\right)}=\int_{-\infty}^{\infty}dt\,g_{r}(t)e^{-ivt}.
\end{equation}
By applying the inverse Fourier transform, this is then equivalent
to
\begin{equation}
\frac{1}{2\pi}\int_{-\infty}^{\infty}dv\,\frac{\sinh\!\left(rv\right)}{\sinh\!\left(v\right)}e^{ivt}=g_{r}(t).
\end{equation}
We can evaluate the integral on the left-hand side by means of contour
integration and the Cauchy residue theorem. Suppose that $t>0$, and
consider the positively oriented contour $\gamma_{R}^{+}$ depicted
in \cite[Figure~1]{Wilde2025}, where $R>0$ denotes the radius of
the semicircle depicted there. Consider that
\begin{multline}
\lim_{R\to\infty}\oint_{\gamma_{R}^{+}}dz\ \frac{\sinh\!\left(rz\right)}{\sinh\!\left(z\right)}e^{izt}=\\
\lim_{R\to\infty}\left[\int_{-R}^{R}dv\ \frac{\sinh\!\left(rv\right)}{\sinh\!\left(v\right)}e^{ivt}+\int_{0}^{\pi}d\theta\ \frac{\sinh\!\left(rRe^{i\theta}\right)}{\sinh\!\left(Re^{i\theta}\right)}e^{iRe^{i\theta}t}\right],
\end{multline}
such that the contour integral is broken up into the line integral
along the real axis and the line integral around the arc of the semicircle.
Let us now prove that the line integral around the arc of the semicircle
evaluates to zero in the limit $R\to\infty$. To this end, consider
that
\begin{align}
 & \left|\int_{0}^{\pi}d\theta\ \frac{\sinh\!\left(rRe^{i\theta}\right)}{\sinh\!\left(Re^{i\theta}\right)}e^{iRe^{i\theta}t}\right|\nonumber \\
 & =\left|\int_{0}^{\pi}d\theta\ \frac{\sinh\!\left(rRe^{i\theta}\right)}{\sinh\!\left(Re^{i\theta}\right)}e^{iR\cos(\theta)t}e^{-R\sin(\theta)t}\right|\\
 & \leq\int_{0}^{\pi}d\theta\ \left|\frac{\sinh\!\left(rRe^{i\theta}\right)}{\sinh\!\left(Re^{i\theta}\right)}e^{iR\cos(\theta)t}e^{-R\sin(\theta)t}\right|\\
 & \leq\int_{0}^{\pi}d\theta\ \left|\frac{\sinh\!\left(rRe^{i\theta}\right)}{\sinh\!\left(Re^{i\theta}\right)}\right|e^{-R\sin(\theta)t}\\
 & =\int_{0}^{\pi}d\theta\ \frac{\sqrt{\sinh^{2}\!\left(rR\cos(\theta)\right)+\sin^{2}(rR\sin(\theta))}}{\sqrt{\sinh^{2}\!\left(R\cos(\theta)\right)+\sin^{2}(R\sin(\theta))}}e^{-R\sin(\theta)t}\\
 & \leq\int_{0}^{\pi}d\theta\ \frac{\sqrt{\sinh^{2}\!\left(rR\cos(\theta)\right)+1}}{\sqrt{\sinh^{2}\!\left(R\cos(\theta)\right)}}e^{-R\sin(\theta)t}\\
 & =\int_{0}^{\pi}d\theta\ \frac{\cosh\!\left(rR\cos(\theta)\right)}{\left|\sinh\!\left(R\cos(\theta)\right)\right|}e^{-R\sin(\theta)t}\\
 & \leq\int_{0}^{\pi}d\theta\ e^{\left|r\right|R\left|\cos(\theta)\right|}\frac{1}{2}e^{-R\left|\cos(\theta)\right|}e^{-R\sin(\theta)t}\\
 & =\frac{1}{2}\int_{0}^{\pi}d\theta\ e^{R\left[\left(\left|r\right|-1\right)\left|\cos(\theta)\right|-\sin(\theta)t\right]}\label{eq:up-bnd-arc-semicirc}
\end{align}
where I employed the identity
\begin{equation}
\left|\sinh(x+iy)\right|=\sqrt{\sinh^{2}(x)+\sin^{2}(y)}
\end{equation}
and used the facts that
\begin{align}
\cosh(x) & \leq e^{\left|x\right|},\\
\left|\sinh(x)\right| & \geq\frac{1}{2}e^{\left|x\right|}.
\end{align}
Then the upper bound in \eqref{eq:up-bnd-arc-semicirc} and the dominated
convergence theorem imply that
\begin{align}
 & \lim_{R\to\infty}\left|\int_{0}^{\pi}d\theta\ \frac{\sinh\!\left(rRe^{i\theta}\right)}{\sinh\!\left(Re^{i\theta}\right)}e^{iRe^{i\theta}t}\right|\nonumber \\
 & \leq\lim_{R\to\infty}\frac{1}{2}\int_{0}^{\pi}d\theta\ e^{R\left[\left(\left|r\right|-1\right)\left|\cos(\theta)\right|-\sin(\theta)t\right]}\\
 & =\frac{1}{2}\int_{0}^{\pi}d\theta\ \lim_{R\to\infty}e^{R\left[\left(\left|r\right|-1\right)\left|\cos(\theta)\right|-\sin(\theta)t\right]}\\
 & =0,
\end{align}
where the last equality follows from the assumption that $\left|r\right|<1$.

It thus remains to evaluate the following contour integral:
\begin{equation}
\lim_{R\to\infty}\oint_{\gamma_{R}^{+}}dz\ \frac{\sinh\!\left(rz\right)}{\sinh\!\left(z\right)}e^{izt}.
\end{equation}
To this end, observe that the poles of the function $z\mapsto\frac{\sinh\left(rz\right)}{\sinh\left(z\right)}e^{izt}$
in the upper half of the complex plane are all simple poles and given
by $z_{n}=n\pi i$, where $n\in\left\{ 1,2,3,\ldots\right\} $. The
residues of this function are given by
\begin{align}
\text{Res}_{z=n\pi i}\!\left[\frac{\sinh\!\left(rz\right)}{\sinh\!\left(z\right)}e^{izt}\right] & =\lim_{z\to z_{n}}\left(z-z_{n}\right)\frac{\sinh\!\left(rz\right)}{\sinh\!\left(z\right)}e^{izt}\\
 & =\frac{\sinh\!\left(rz_{n}\right)}{\cosh\!\left(z_{n}\right)}e^{iz_{n}t}\\
 & =\frac{i\sin(rn\pi)}{\left(-1\right)^{n}}e^{-n\pi t}\\
 & =i\left(-1\right)^{n}\sin(rn\pi)e^{-n\pi t}.
\end{align}
Thus, by applying the Cauchy residue theorem, it follows that
\begin{align}
 & \frac{1}{2\pi}\lim_{R\to\infty}\oint_{\gamma_{R}^{+}}du\ \frac{\sinh\!\left(rz\right)}{\sinh\!\left(z\right)}e^{izt}\nonumber \\
 & =i\sum_{n=1}^{\infty}\text{Res}_{z=n\pi i}\!\left[\frac{\sinh\!\left(rz\right)}{\sinh\!\left(z\right)}e^{izt}\right]\\
 & =-\sum_{n=1}^{\infty}\left(-1\right)^{n}\sin(rn\pi)e^{-n\pi t}\\
 & =-\sum_{n=1}^{\infty}\left(-1\right)^{n}\left(\frac{e^{irn\pi}-e^{-irn\pi}}{2i}\right)e^{-n\pi t}\\
 & =-\frac{1}{2i}\sum_{n=1}^{\infty}\left(-e^{ir\pi-\pi t}\right)^{n}-\left(-e^{-ir\pi-\pi t}\right)^{n}\\
 & =-\frac{1}{2i}\left(\frac{-e^{ir\pi-\pi t}}{1+e^{ir\pi-\pi t}}-\frac{-e^{-ir\pi-\pi t}}{1+e^{-ir\pi-\pi t}}\right)\\
 & =-\frac{1}{2i}\left(\frac{-e^{ir\pi-\pi t}\left(1+e^{-ir\pi-\pi t}\right)+e^{-ir\pi-\pi t}\left(1+e^{ir\pi-\pi t}\right)}{\left(1+e^{ir\pi-\pi t}\right)\left(1+e^{-ir\pi-\pi t}\right)}\right)\\
 & =-\frac{1}{2i}\left(\frac{-e^{ir\pi-\pi t}-e^{-2\pi t}+e^{-ir\pi-\pi t}+e^{-2\pi t}}{1+e^{ir\pi-\pi t}+e^{-ir\pi-\pi t}+e^{-2\pi t}}\right)\\
 & =-\frac{1}{2i}\left(\frac{-e^{ir\pi-\pi t}+e^{-ir\pi-\pi t}}{1+e^{ir\pi-\pi t}+e^{-ir\pi-\pi t}+e^{-2\pi t}}\right)\\
 & =\frac{\sin(\pi r)e^{-\pi t}}{1+2\cos(\pi r)e^{-\pi t}+e^{-2\pi t}}\\
 & =\frac{\sin(\pi r)}{e^{\pi t}+2\cos(\pi r)+e^{-\pi t}}\\
 & =\frac{\sin(\pi r)}{2\left(\cosh(\pi t)+\cos(\pi r)\right)},
\end{align}
thus concluding the proof for $t>0$.

The desired integral holds for $t<0$ by a symmetric argument, instead
considering the negatively oriented contour depicted in Figure~2
of \cite[Figure~2]{Wilde2025}.
\end{proof}
\begin{rem}
\label{rem:div-diff-log-proof}The development in \eqref{eq:div-diff-r-power}--\eqref{eq:key-integral-power-deriv}
establishes that the following equality holds for all $r\in\left(-1,0\right)\cup\left(0,1\right)$
and $x,y>0$ such that $x\neq y$:
\begin{equation}
\frac{1}{r}\left(\frac{x^{r}-y^{r}}{x-y}\right)=\left(xy\right)^{\frac{r-1}{2}}\int_{-\infty}^{\infty}dt\,\beta_{r}(t)\left(\frac{x}{y}\right)^{-\frac{it}{2}}.
\end{equation}
By taking the limit $r\to0$ and applying the dominated convergence
theorem, it follows that
\begin{align}
\frac{\ln x-\ln y}{x-y} & =\lim_{r\to0}\frac{1}{r}\left(\frac{x^{r}-y^{r}}{x-y}\right)\\
 & =\lim_{r\to0}\left(xy\right)^{\frac{r-1}{2}}\int_{-\infty}^{\infty}dt\,\beta_{r}(t)\left(\frac{x}{y}\right)^{-\frac{it}{2}}\\
 & =\left(xy\right)^{-\frac{1}{2}}\int_{-\infty}^{\infty}dt\,\lim_{r\to0}\beta_{r}(t)\left(\frac{x}{y}\right)^{-\frac{it}{2}}\\
 & =\left(xy\right)^{-\frac{1}{2}}\int_{-\infty}^{\infty}dt\,\beta(t)\left(\frac{x}{y}\right)^{-\frac{it}{2}},
\end{align}
thus justifying \cite[Eq.~(97)]{Sutter2017}, which was stated therein
without justification.
\end{rem}

\section{Detailed analysis for QSVT subroutines}

\label{sec:Detailed-analysis-QSVT}

\subsection{Block-encoding for modular flow}

\label{subsec:Block-encoding-for-modular-flow}

In this appendix, I determine the cost of block-encodings for the
modular flow evolution $\rho\to\sigma^{-\frac{is}{2}}\rho\sigma^{\frac{is}{2}}$,
where $\rho$ is an input state, $\sigma$ is a density matrix, and
$s\in\mathbb{R}$. Note that this evolution is unitary. A quantum
algorithm for modular flow was recently proposed in \cite{Lim2025},
but the analysis below gives a slight improvement in its performance,
by using an alternative method of block-encoding $\ln\sigma_{v}$,
as given in \cite[Lemma~3]{Qiu2025}.

Here, I follow the blueprint of the approach detailed in \cite[Section~3]{Lim2025}.
Let us begin by observing that $\sigma^{-\frac{is}{2}}=e{}^{-\frac{is}{2}\ln\sigma}$.
Thus, we can start from a block-encoding of $\sigma$, transform it
to a block-encoding of $\ln\sigma$, and then transform this block-encoding
in turn to a block-encoding of $e^{-\frac{is}{2}\ln\sigma}$ by known
QSVT methods for Hamiltonian simulation.

Let us suppose that we have access to a block-encoding $U_{\sigma}$
of $\sigma$. Furthermore, suppose that $\kappa>0$ is such that $\frac{1}{\kappa}I\leq\sigma\leq I$.
By \cite[Lemma~3]{Qiu2025}, we can realize a $\left(2\left(1+\ln\!\left(2\kappa\right)\right),\varepsilon_{a}\right)$
block-encoding of $\ln\sigma$ by making
\begin{equation}
O\!\left(\kappa\ln\!\left(\frac{\ln\kappa}{\varepsilon_{a}}\right)\right)
\end{equation}
queries to $U_{\sigma}$. Now suppose that we have a Hamiltonian $H$,
a time $t$, an error tolerance $\varepsilon_{b}$, $\alpha_{H}\geq\text{\ensuremath{\left\Vert H\right\Vert }}$,
and an $\left(\alpha_{H},\frac{\varepsilon_{b}}{\left|t\right|}\right)$-approximate
block-encoding $U_{H}$ of $H$. We can implement Hamiltonian simulation
$e^{-iHt}$ by QSVT with the following number of queries to the block-encoding
$U_{H}$ \cite[Corollary~60 and Lemma~61]{Gilyen2019}:
\begin{equation}
\Theta\!\left(\alpha_{H}\left|t\right|+\frac{\ln\!\left(1/\varepsilon_{b}\right)}{\ln\!\left(e+\frac{\ln\left(1/\varepsilon_{b}\right)}{\alpha_{H}\left|t\right|}\right)}\right).
\end{equation}
We can then apply this result to the block-encoding of $\ln\sigma$,
while accounting for the scale factor $2\left(1+\ln\!\left(2\kappa\right)\right)=O\!\left(\ln\kappa\right)$,
so that the following number of queries to the block-encoding of $\ln\sigma$
are required for realizing the Hamiltonian evolution $\sigma^{-\frac{is}{2}}$:
\begin{equation}
O\!\left(\left|s\right|\ln\kappa+\frac{\ln\!\left(1/\varepsilon_{b}\right)}{\ln\!\left(e+\frac{\ln\left(1/\varepsilon_{b}\right)}{\left|s\right|\ln\kappa}\right)}\right).
\end{equation}
Thus, in order to have an overall error of $\varepsilon$ in the block-encoding
of $\sigma^{-\frac{is}{2}}$, by composing the two block-encodings,
we require the following number of queries to the block-encoding of
$\sigma$:
\begin{equation}
O\!\left(\left(\kappa\ln\!\left(\frac{\left|s\right|\ln\kappa}{\varepsilon}\right)\right)\left(\left|s\right|\ln\kappa+\frac{\ln\!\left(1/\varepsilon\right)}{\ln\!\left(e+\frac{\ln\left(1/\varepsilon\right)}{\left|s\right|\ln\kappa}\right)}\right)\right).
\end{equation}
The normalization factor realized by this block-encoding of $\sigma^{-\frac{is}{2}}$
is also equal to one. If we suppress logarithmic factors using the
notation $\tilde{O}$, then the total number of queries to the block-encoding
of $\sigma$ is given by
\begin{equation}
\tilde{O}\!\left(\kappa\left|s\right|\ln\!\left(\frac{\left|s\right|}{\varepsilon}\right)\ln\kappa\right),
\end{equation}
which represents a quadratic improvement in the dependence on $\kappa$,
when compared to \cite[Theorem~1]{Lim2025}. Also, note that, in the
application to quantum Boltzmann machines considered here, the time
$s$ is effectively constant, due to Remark~\ref{rem:constant-time-dists}.

\subsection{Block-encoding for inverse square root}

\label{subsec:Block-encoding-for-inverse-sqrt}Starting from a block
encoding of the density operator $\sigma$, it is known how to construct
a block-encoding of $\sigma^{-\frac{1}{2}}$ (see, e.g., \cite{Gilyen2022}
and \cite[Lemmas~27 and 36]{Liu2025}). The formal statement that
we have here is as follows: Let $U_{\sigma}$ be a $(1,0)$-block-encoding
of a density matrix $\sigma$, and suppose that $\kappa>0$ satisfies
$\kappa^{-1}I\leq\sigma$. Then for all $\varepsilon>0$, there exists
a quantum circuit using QSVT that implements an $(\sqrt{\kappa},\varepsilon)$-block-encoding
of $\sigma^{-\frac{1}{2}}$, using the following number of queries
to the block-encoding of $\sigma$:

\begin{equation}
O\!\left(\kappa\ln\!\left(\frac{1}{\varepsilon}\right)\right).
\end{equation}
This is achieved by constructing a polynomial approximation of the
function $f(x)=x^{-\frac{1}{2}}$ on the interval $[\delta,1]$ and
applying the standard QSVT block-encoding transformation.

\subsection{Products of block-encodings}

In this section, I prove Lemma \ref{lem:product-of-BEs}, which shows
how products of block-encodings compose. This lemma corrects a minor
flaw present in \cite[Lemma~30]{Gilyen2019}, which does not account
for an extra error term present in the composition.
\begin{lem}[Product of block-encoded matrices]
\label{lem:product-of-BEs}Let $U$ be an $\left(\alpha,a,\delta\right)$-block-encoding
of $A$, and let $V$ be a $\left(\beta,b,\varepsilon\right)$-block-encoding
of $B$. Then $\left(I_{b}\otimes U\right)\left(I_{a}\otimes V\right)$
is an $\left(\alpha\beta,a+b,\alpha\varepsilon+\beta\delta+\delta\varepsilon\right)$-block-encoding
of $AB$.
\end{lem}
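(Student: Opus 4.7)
The plan is to reduce the claim to a direct matrix norm computation after verifying that the top-left block of the composed unitary factors exactly as a product. To this end, let me introduce the auxiliary operators
\begin{align}
\tilde{A} &\coloneqq \alpha\left(\langle 0|^{\otimes a}\otimes I\right) U \left(|0\rangle^{\otimes a}\otimes I\right), \\
\tilde{B} &\coloneqq \beta\left(\langle 0|^{\otimes b}\otimes I\right) V \left(|0\rangle^{\otimes b}\otimes I\right),
\end{align}
so that, by the hypotheses, $\|A-\tilde{A}\|\leq \delta$ and $\|B-\tilde{B}\|\leq \varepsilon$. The first step is a bookkeeping computation showing that the top-left block of $(I_b\otimes U)(I_a\otimes V)$ on the $(a+b)$-qubit ancilla equals $\tilde{A}\tilde{B}/(\alpha\beta)$. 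Since $U$ and $V$ act on disjoint ancilla registers (the $a$-qubit ancilla for $U$ and the $b$-qubit ancilla for $V$), inserting $(|0\rangle\!\langle 0|)^{\otimes b}\otimes I_{a+\text{sys}}$ factors the ancilla projections through the operators in sequence, yielding
\begin{equation}
\left(\langle 0|^{\otimes(a+b)}\otimes I\right)(I_b\otimes U)(I_a\otimes V)\left(|0\rangle^{\otimes(a+b)}\otimes I\right) = \frac{\tilde{A}\tilde{B}}{\alpha\beta}.
\end{equation}

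The second step is to bound $\|AB - \tilde{A}\tilde{B}\|$ by a carefully chosen telescoping. The key choice is to write
\begin{equation}
AB - \tilde{A}\tilde{B} = A\left(B-\tilde{B}\right) + \left(A-\tilde{A}\right)\tilde{B},
\end{equation}
so that the triangle inequality and submultiplicativity of the spectral norm give
\begin{equation}
\|AB - \tilde{A}\tilde{B}\| \leq \|A\|\,\|B-\tilde{B}\| + \|A-\tilde{A}\|\,\|\tilde{B}\|.
\end{equation}
Now I use $\|\tilde{B}\|\leq \beta$ (which follows since $\tilde{B}/\beta$ is the top-left block of the unitary $V$, whose norm is at most $1$), together with the chain $\|A\|\leq \|\tilde{A}\|+\|A-\tilde{A}\| \leq \alpha+\delta$. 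Substituting, the bound becomes $(\alpha+\delta)\varepsilon + \delta\beta = \alpha\varepsilon + \beta\delta + \delta\varepsilon$, which is exactly the claimed error parameter. Combining this with the block-encoding identity above completes the proof.

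The only subtle point—and the reason the lemma has the slightly improved $\delta\varepsilon$ correction absent from \cite[Lemma~30]{Gilyen2019}—is the use of $\|A\|\leq \alpha+\delta$ rather than $\|A\|\leq \alpha$. The latter bound on $\|A\|$ is not warranted by the hypothesis alone, since $A$ is only approximately block-encoded, so the exact unitarity norm bound of $\alpha$ must be relaxed by $\delta$. Apart from this, the argument is essentially routine: a block-multiplication verification followed by a triangle inequality on the approximation error.
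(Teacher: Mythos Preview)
Your proof is correct and follows essentially the same approach as the paper's: both define $\tilde{A},\tilde{B}$ as the exact scaled top-left blocks, verify that the composed block encoding has top-left block $\tilde{A}\tilde{B}/(\alpha\beta)$, and then telescope $AB-\tilde{A}\tilde{B}$ using the bounds $\|\tilde{A}\|\leq\alpha$, $\|\tilde{B}\|\leq\beta$, and $\|A\|\leq\alpha+\delta$ (or equivalently $\|B\|\leq\beta+\varepsilon$). The only cosmetic difference is that you split as $A(B-\tilde{B})+(A-\tilde{A})\tilde{B}$ whereas the paper splits as $(A-\tilde{A})B+\tilde{A}(B-\tilde{B})$; both yield the same bound $\alpha\varepsilon+\beta\delta+\delta\varepsilon$.
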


\begin{proof}
By definition, we have that
\begin{align}
\left\Vert A-\tilde{A}\right\Vert  & \leq\delta,\\
\left\Vert B-\tilde{B}\right\Vert  & \leq\varepsilon,
\end{align}
where 
\begin{align}
\tilde{A} & \equiv\alpha\left(\langle0|^{\otimes a}\otimes I\right)U\left(|0\rangle^{\otimes a}\otimes I\right),\\
\tilde{B} & \equiv\beta\left(\langle0|^{\otimes b}\otimes I\right)V\left(|0\rangle^{\otimes b}\otimes I\right).
\end{align}
Given that $U$ and $V$ are unitaries, we conclude that
\begin{align}
\left\Vert \tilde{A}\right\Vert  & \leq\alpha,\\
\left\Vert \tilde{B}\right\Vert  & \leq\beta,\\
\left\Vert A\right\Vert  & =\left\Vert A-\tilde{A}+\tilde{A}\right\Vert \\
 & \leq\left\Vert A-\tilde{A}\right\Vert +\left\Vert \tilde{A}\right\Vert \\
 & \leq\delta+\alpha,\\
\left\Vert B\right\Vert  & =\left\Vert B-\tilde{B}+\tilde{B}\right\Vert \\
 & \leq\left\Vert B-\tilde{B}\right\Vert +\left\Vert \tilde{B}\right\Vert \\
 & \leq\varepsilon+\beta,
\end{align}
from which it follows that
\begin{align}
 & \left\Vert AB-\alpha\beta\left(\langle0|^{\otimes\left(a+b\right)}\otimes I\right)\left(I_{b}\otimes U\right)\left(I_{a}\otimes V\right)\left(|0\rangle^{\otimes\left(a+b\right)}\otimes I\right)\right\Vert \nonumber \\
 & =\left\Vert AB-\tilde{A}\tilde{B}\right\Vert \\
 & =\left\Vert AB-\tilde{A}B+\tilde{A}B-\tilde{A}\tilde{B}\right\Vert \\
 & =\left\Vert \left(A-\tilde{A}\right)B+\tilde{A}\left(B-\tilde{B}\right)\right\Vert \\
 & \leq\left\Vert A-\tilde{A}\right\Vert \left\Vert B\right\Vert +\left\Vert \tilde{A}\right\Vert \left\Vert B-\tilde{B}\right\Vert \\
 & \leq\delta\left(\varepsilon+\beta\right)+\alpha\varepsilon\\
 & =\alpha\varepsilon+\beta\delta+\delta\varepsilon,
\end{align}
thus concluding the proof.
\end{proof}
\begin{rem}
The minor flaw in \cite[Lemma~30]{Gilyen2019} is that therein $\left\Vert B\right\Vert $
is bounded from above by $\beta$, but as shown above, the assumptions
given only guarantee that it is bounded from above by $\varepsilon+\beta$.
\end{rem}

\subsection{Error analysis for Theorem \ref{thm:alg-complexity-claim}}

Let us begin by supposing that
\begin{align}
\left\Vert \alpha_{1}A_{v_{1}}-\sigma_{v}^{-is/2}\right\Vert  & \leq\varepsilon_{1},\\
\left\Vert \alpha_{2}B_{v_{1}}-\sigma_{v}^{-1/2}\right\Vert  & \leq\varepsilon_{2},
\end{align}
where
\begin{align}
A_{v_{1}} & \coloneqq\left(I_{v_{1}}\otimes\langle0|_{a_{1}}\right)U_{\sigma_{v}^{-is/2}}\left(I_{v_{1}}\otimes|0\rangle_{a_{1}}\right),\\
B_{v_{1}} & \coloneqq\left(I_{v_{1}}\otimes\langle0|_{a_{2}}\right)U_{\sigma_{v}^{-1/2}}\left(I_{v_{1}}\otimes|0\rangle_{a_{2}}\right),
\end{align}
so that $U_{\sigma_{v}^{-is/2}}$ and $U_{\sigma_{v}^{-1/2}}$ are
block-encodings of $\sigma_{v}^{-is/2}$ and $\sigma_{v}^{-1/2}$
with normalization factors $\alpha_{1}$ and $\alpha_{2}$, respectively.
Since $\sigma_{v}^{-is/2}$ is unitary, we can choose $\alpha_{1}=1$.
We can also choose $\alpha_{2}=\sqrt{\frac{1}{\lambda_{\min}(\sigma_{v})}}=\sqrt{\kappa}$.
By applying Lemma \ref{lem:product-of-BEs}, it follows that $U_{\sigma_{v}^{-1/2}}U_{\sigma_{v}^{-is/2}}$
is an $\left(\alpha_{1}\alpha_{2},\alpha_{1}\varepsilon_{2}+\alpha_{2}\varepsilon_{1}+\varepsilon_{1}\varepsilon_{2}\right)$-block-encoding
of $\sigma_{v}^{-1/2}\sigma_{v}^{-is/2}$; i.e.,
\begin{align}
\left\Vert \alpha_{1}\alpha_{2}B_{v_{1}}A_{v_{1}}-\sigma_{v}^{-1/2}\sigma_{v}^{-is/2}\right\Vert  & \leq\alpha_{1}\varepsilon_{2}+\alpha_{2}\varepsilon_{1}+\varepsilon_{1}\varepsilon_{2}\\
 & =\varepsilon_{2}+\sqrt{\kappa}\varepsilon_{1}+\varepsilon_{1}\varepsilon_{2}\\
 & =\varepsilon_{2}+\varepsilon_{1}\left(\sqrt{\kappa}+\varepsilon_{2}\right).\label{eq:up-bnd-specific-BE-prod}
\end{align}

Recalling the state $\omega_{v_{1}a_{1}a_{2}v_{2}h}$ defined in \eqref{eq:omega-state-alg}
and the observable $O_{v_{1}a_{1}a_{2}v_{2}h}$ defined in \eqref{eq:observable-all-sys-alg},
then consider that
\begin{align}
 & \Tr\!\left[\left(X_{c}\otimes O_{v_{1}a_{1}a_{2}v_{2}h}\right)\left(\text{c-}F_{v_{1}v_{2}}\right)\left(|+\rangle\!\langle+|_{c}\otimes\omega_{v_{1}a_{1}a_{2}v_{2}h}\right)\left(\text{c-}F_{v_{1}v_{2}}\right)^{\dag}\right]\nonumber \\
 & =\frac{1}{2}\Tr\!\left[\left\{ |0\rangle\!\langle0|_{a_{1}}\otimes|0\rangle\!\langle0|_{a_{2}}\otimes\left(e^{iG(\theta)t}G_{j}e^{-iG(\theta)t}\right)_{v_{2}h},F_{v_{1}v_{2}}\right\} \omega_{v_{1}a_{1}a_{2}v_{2}h}\right]\\
 & =\frac{1}{2}\Tr\!\left[\left\{ \left(e^{iG(\theta)t}G_{j}e^{-iG(\theta)t}\right)_{v_{2}h},F_{v_{1}v_{2}}\right\} \langle0|_{a_{1}}\langle0|_{a_{2}}\omega_{v_{1}a_{1}a_{2}v_{2}h}|0\rangle_{a_{1}}|0\rangle_{a_{2}}\right]\\
 & =\frac{1}{2}\Tr\!\left[\left\{ \left(e^{iG(\theta)t}G_{j}e^{-iG(\theta)t}\right)_{v_{2}h},F_{v_{1}v_{2}}\right\} \left(B_{v_{1}}A_{v_{1}}\rho_{v_{1}}A_{v_{1}}^{\dag}B_{v_{1}}^{\dag}\otimes\sigma_{v_{2}h}\right)\right]\\
 & =\frac{1}{2}\Tr\!\left[F_{v_{1}v_{2}}\left\{ \left(e^{iG(\theta)t}G_{j}e^{-iG(\theta)t}\right)_{v_{2}h},B_{v_{1}}A_{v_{1}}\rho_{v_{1}}A_{v_{1}}^{\dag}B_{v_{1}}^{\dag}\otimes\sigma_{v_{2}h}\right\} \right]\\
 & =\frac{1}{2}\Tr\!\left[F_{v_{1}v_{2}}\left(B_{v_{1}}A_{v_{1}}\rho_{v_{1}}A_{v_{1}}^{\dag}B_{v_{1}}^{\dag}\otimes\left\{ \left(e^{iG(\theta)t}G_{j}e^{-iG(\theta)t}\right)_{v_{2}h},\sigma_{v_{2}h}\right\} \right)\right]\\
 & =\frac{1}{2}\Tr\!\left[B_{v}A_{v}\rho_{v}A_{v}^{\dag}B_{v}^{\dag}\left\{ \left(e^{iG(\theta)t}G_{j}e^{-iG(\theta)t}\right)_{vh},\sigma_{vh}\right\} \right]\\
 & =\frac{1}{2}\Tr\!\left[\left(e^{iG(\theta)t}G_{j}e^{-iG(\theta)t}\right)_{vh}\left\{ B_{v}A_{v}\rho_{v}A_{v}^{\dag}B_{v}^{\dag},\sigma_{vh}\right\} \right].
\end{align}
In the above, I used the swap-trick identity $\Tr[F(X\otimes Y)]=\Tr[XY]$.
The quantity in the last line is what we can estimate on a quantum
computer. We then multiply the estimate by $\left(\alpha_{1}\alpha_{2}\right)^{2}=\kappa$.
By Hoeffding's inequality, to have $\varepsilon$ error in the estimate
of the scaled quantity with $\delta$ failure probability, we require
the following number of repetitions of the quantum circuit in Figure
\ref{fig:q-circuit-gradient-est}:
\begin{equation}
O\!\left(\frac{\left\Vert G_{j}\right\Vert ^{2}\left(\alpha_{1}\alpha_{2}\right)^{4}}{\varepsilon^{2}}\ln\!\left(\frac{1}{\delta}\right)\right)=O\!\left(\left(\frac{\kappa\left\Vert G_{j}\right\Vert }{\varepsilon}\right)^{2}\ln\!\left(\frac{1}{\delta}\right)\right).\label{eq:hoeffding-samples}
\end{equation}
In what follows, I make use of the H\"older inequality:
\begin{equation}
\left\Vert XYZ\right\Vert _{1}\leq\left\Vert X\right\Vert \left\Vert Y\right\Vert _{1}\left\Vert Z\right\Vert .
\end{equation}
Now observe that
\begin{align}
 & \left\Vert \left(\alpha_{1}\alpha_{2}\right)^{2}B_{v}A_{v}\rho_{v}A_{v}^{\dag}B_{v}^{\dag}-\sigma_{v}^{-1/2}\sigma_{v}^{-is/2}\rho_{v}\sigma_{v}^{is/2}\sigma_{v}^{-1/2}\right\Vert _{1}\nonumber \\
 & =\left\Vert \begin{array}{c}
\left(\alpha_{1}\alpha_{2}B_{v}A_{v}\right)\rho_{v}\left(\alpha_{1}\alpha_{2}B_{v}A_{v}\right)^{\dag}-\left(\alpha_{1}\alpha_{2}B_{v}A_{v}\right)\rho_{v}\left(\sigma_{v}^{-1/2}\sigma_{v}^{-is/2}\right)^{\dag}\\
+\left(\alpha_{1}\alpha_{2}B_{v}A_{v}\right)\rho_{v}\left(\sigma_{v}^{-1/2}\sigma_{v}^{-is/2}\right)^{\dag}-\sigma_{v}^{-1/2}\sigma_{v}^{-is/2}\rho_{v}\left(\sigma_{v}^{-1/2}\sigma_{v}^{-is/2}\right)^{\dag}
\end{array}\right\Vert _{1}\\
 & =\left\Vert \begin{array}{c}
\left(\alpha_{1}\alpha_{2}B_{v}A_{v}\right)\rho_{v}\left[\left(\alpha_{1}\alpha_{2}B_{v}A_{v}\right)^{\dag}-\left(\sigma_{v}^{-1/2}\sigma_{v}^{-is/2}\right)^{\dag}\right]\\
+\left[\left(\alpha_{1}\alpha_{2}B_{v}A_{v}\right)-\sigma_{v}^{-1/2}\sigma_{v}^{-is/2}\right]\rho_{v}\left(\sigma_{v}^{-1/2}\sigma_{v}^{-is/2}\right)^{\dag}
\end{array}\right\Vert _{1}\\
 & \leq\left\Vert \left(\alpha_{1}\alpha_{2}B_{v}A_{v}\right)\rho_{v}\left[\left(\alpha_{1}\alpha_{2}B_{v}A_{v}\right)^{\dag}-\left(\sigma_{v}^{-1/2}\sigma_{v}^{-is/2}\right)^{\dag}\right]\right\Vert _{1}\\
 & \qquad+\left\Vert \left[\left(\alpha_{1}\alpha_{2}B_{v}A_{v}\right)-\sigma_{v}^{-1/2}\sigma_{v}^{-is/2}\right]\rho_{v}\left(\sigma_{v}^{-1/2}\sigma_{v}^{-is/2}\right)^{\dag}\right\Vert _{1}\\
 & \leq\left\Vert \alpha_{1}\alpha_{2}B_{v}A_{v}\right\Vert \left\Vert \rho_{v}\right\Vert _{1}\left\Vert \left(\alpha_{1}\alpha_{2}B_{v}A_{v}\right)^{\dag}-\left(\sigma_{v}^{-1/2}\sigma_{v}^{-is/2}\right)^{\dag}\right\Vert \\
 & \qquad+\left\Vert \left(\alpha_{1}\alpha_{2}B_{v}A_{v}\right)-\sigma_{v}^{-1/2}\sigma_{v}^{-is/2}\right\Vert \left\Vert \rho_{v}\right\Vert _{1}\left\Vert \left(\sigma_{v}^{-1/2}\sigma_{v}^{-is/2}\right)^{\dag}\right\Vert \\
 & \leq\alpha_{1}\alpha_{2}\left(\varepsilon_{2}+\varepsilon_{1}\left(\sqrt{\kappa}+\varepsilon_{2}\right)\right)+\left(\varepsilon_{2}+\varepsilon_{1}\left(\sqrt{\kappa}+\varepsilon_{2}\right)\right)\sqrt{\kappa}\\
 & =2\sqrt{\kappa}\left(\varepsilon_{2}+\varepsilon_{1}\left(\sqrt{\kappa}+\varepsilon_{2}\right)\right)\\
 & =2\sqrt{\kappa}\varepsilon_{2}+2\varepsilon_{1}\left(\kappa+\sqrt{\kappa}\varepsilon_{2}\right).\\
 & \equiv\varepsilon',\nonumber 
\end{align}
where I used that $\left\Vert \rho_{v}\right\Vert _{1}=1$, $\left\Vert \sigma_{v}^{-is/2}\right\Vert =1$,
and \eqref{eq:up-bnd-specific-BE-prod}. Then it follows that
\begin{multline}
\frac{\left(\alpha_{1}\alpha_{2}\right)^{2}}{2}\Tr\!\left[\left(e^{iG(\theta)t}G_{j}e^{-iG(\theta)t}\right)_{vh}\left\{ B_{v}A_{v}\rho_{v}A_{v}^{\dag}B_{v}^{\dag},\sigma_{vh}\right\} \right]\\
=\Tr\!\left[\frac{1}{2}\left\{ \left(e^{iG(\theta)t}G_{j}e^{-iG(\theta)t}\right)_{vh},\sigma_{vh}\right\} \left(\alpha_{1}\alpha_{2}\right)^{2}B_{v}A_{v}\rho_{v}A_{v}^{\dag}B_{v}^{\dag}\right]
\end{multline}
and
\begin{align}
 & \left|\begin{array}{c}
\Tr\!\left[\frac{1}{2}\left\{ \left(e^{iG(\theta)t}G_{j}e^{-iG(\theta)t}\right)_{vh},\sigma_{vh}\right\} \left(\alpha_{1}\alpha_{2}\right)^{2}B_{v}A_{v}\rho_{v}A_{v}^{\dag}B_{v}^{\dag}\right]\\
-\Tr\!\left[\frac{1}{2}\left\{ \left(e^{iG(\theta)t}G_{j}e^{-iG(\theta)t}\right)_{vh},\sigma_{vh}\right\} \sigma_{v}^{-1/2}\sigma_{v}^{-is/2}\rho_{v}\sigma_{v}^{is/2}\sigma_{v}^{-1/2}\right]
\end{array}\right|\nonumber \\
 & \leq\left\Vert \frac{1}{2}\left\{ \left(e^{iG(\theta)t}G_{j}e^{-iG(\theta)t}\right)_{vh},\sigma_{vh}\right\} \right\Vert \times\nonumber \\
 & \qquad\left\Vert \left(\alpha_{1}\alpha_{2}\right)^{2}B_{v}A_{v}\rho_{v}A_{v}^{\dag}B_{v}^{\dag}-\sigma_{v}^{-1/2}\sigma_{v}^{-is/2}\rho_{v}\sigma_{v}^{is/2}\sigma_{v}^{-1/2}\right\Vert _{1}\\
 & \leq\left\Vert G_{j}\right\Vert \varepsilon'.
\end{align}
Thus, from these calculations, it follows that the error in estimating
the desired quantity
\begin{equation}
\Tr\!\left[\frac{1}{2}\left\{ \left(e^{iG(\theta)t}G_{j}e^{-iG(\theta)t}\right)_{vh},\sigma_{vh}\right\} \sigma_{v}^{-1/2}\sigma_{v}^{-is/2}\rho_{v}\sigma_{v}^{is/2}\sigma_{v}^{-1/2}\right]
\end{equation}
is given by
\begin{align}
\varepsilon+\left\Vert G_{j}\right\Vert \varepsilon' & =\varepsilon+\left\Vert G_{j}\right\Vert \left[2\sqrt{\kappa}\varepsilon_{2}+2\varepsilon_{1}\left(\kappa+\sqrt{\kappa}\varepsilon_{2}\right)\right]\\
 & =\varepsilon+\left\Vert G_{j}\right\Vert O\left(\varepsilon_{1}\kappa+\varepsilon_{2}\sqrt{\kappa}\right).
\end{align}
So, to have $\varepsilon$ error in the estimate, it is necessary
to choose
\begin{equation}
\varepsilon_{1}=O\!\left(\frac{\varepsilon}{\kappa\left\Vert G_{j}\right\Vert }\right),\qquad\varepsilon_{2}=O\!\left(\frac{\varepsilon}{\sqrt{\kappa}\left\Vert G_{j}\right\Vert }\right).
\end{equation}
Thus, by combining with the statements in Appendices \ref{subsec:Block-encoding-for-modular-flow}
and \ref{subsec:Block-encoding-for-inverse-sqrt}, the number of queries
required to the block-encoding of $\sigma$ to realize $\sigma^{-\frac{is}{2}}$
is
\begin{equation}
\tilde{O}\!\left(\kappa\left|s\right|\ln\!\left(\frac{\kappa\left|s\right|\left\Vert G_{j}\right\Vert }{\varepsilon}\right)\right),
\end{equation}
while the number of queries to the block-encoding of $\sigma$ to
realize $\sigma^{-\frac{1}{2}}$ is
\begin{equation}
O\!\left(\kappa\ln\!\left(\frac{\sqrt{\kappa}\left\Vert G_{j}\right\Vert }{\varepsilon}\right)\right).
\end{equation}
Given that $\left|s\right|$ is effectively a constant, as mentioned
in Remark \ref{rem:constant-time-dists}, it follows that the query
complexity to the block-encoding of $\sigma$ is 
\begin{equation}
\tilde{O}\!\left(\kappa\ln\!\left(\frac{\kappa\left\Vert G_{j}\right\Vert }{\varepsilon}\right)\right),
\end{equation}
and multiplying this by the number of trials needed, as given in \eqref{eq:hoeffding-samples},
leads to the claim in \eqref{eq:total-num-queries-alg}, completing
the proof of Theorem \ref{thm:alg-complexity-claim}.
\end{document}